\newtheorem{thm}{Theorem}[section]
\newtheorem{prop}[thm]{Proposition}
\newtheorem{obs}[thm]{Observation}
\newtheorem{lem}[thm]{Lemma}
\newtheorem{cor}[thm]{Corollary}
\newtheorem{ques}[thm]{Question}
\theoremstyle{definition}
\newtheorem{df}[thm]{\bf Definition}
\newtheorem{construction}[thm]{\bf Construction}
\title{Combinatorial foundations for solvable chaotic local Euclidean quantum circuits in two dimensions}
\author{Fredy Yip\thanks{Trinity College, University of Cambridge, United Kingdom. Email: \textbf{fy276@cam.ac.uk}.}}
\date{}
\begin{document}

\maketitle

\begin{abstract}
    We investigate a graph-theoretic problem motivated by questions in quantum computing concerning the propagation of information in quantum circuits. A graph $G$ is said to be a \emph{bounded extension} of its subgraph $L$ if they share the same vertex set, and the graph distance $d_L(u, v)$ is uniformly bounded for edges $uv\in G$. Given vertices $u, v$ in $G$ and an integer $k$, the \emph{geodesic slice} $S(u, v, k)$ denotes the subset of vertices $w$ lying on a geodesic in $G$ between $u$ and $v$ with $d_G(u, w) = k$. We say that $G$ has \emph{bounded geodesic slices} if $|S(u, v, k)|$ is uniformly bounded over all $u, v, k$. We call a graph $L$ \emph{geodesically directable} if it has a bounded extension $G$ with bounded geodesic slices. 

    Contrary to previous expectations, we prove that $\mathbb{Z}^2$ is geodesically directable. Physically, this provides a setting in which one could devise exactly-solvable chaotic local quantum circuits with non-trivial correlation patterns on 2D Euclidean lattices. In fact, we show that any bounded extension of $\mathbb{Z}^2$ is geodesically directable. This further implies that all two-dimensional regular tilings are geodesically directable. 
\end{abstract}

\section{Introduction}

The study of quantum circuits offers a pathway to understanding general quantum many-body systems~\cite{review}. A quantum circuit is formed from qubits residing on sites of a lattice with adjacent site interactions. Quantum gates between pairs of sites may be added to the lattice, allowing interaction between the pair. These quantum gates are often physically constrained to be \emph{local}, only being able to connect pairs of sites already close in the original lattice. The system evolves in discrete time steps according to these interactions between adjacent sites and sites spanned by quantum gates. Discrete time evolution of this form has served as a structured setting to gain insight on fundamental physics questions involving the spreading of quantum information and thermalisation, whilst also being naturally realised in modern digital quantum platforms such as superconducting qubit arrays~\cite{arutesuperconducting, kimsuperconducting}. 

Mathematically, we may model sites as vertices and interactions as edges in an unweighted graph. We model the initial lattice as a connected graph $L$. We add (potentially infinitely many) edges to $L$ to form the graph $G$ which accounts for interactions both between adjacent sites and added quantum gates. $L$ and $G$ share the same vertex set. The graphs we consider here are unweighted, however, weighted graphs will later be introduced in the proof. 

The notion of closeness of vertices in a graph is captured by the \emph{graph distance}. In a connected unweighted graph, the length of a path is the number of edges it uses. The distance between two vertices is the length of a shortest path between them, such shortest paths are known as \emph{geodesics}. This distance depends on the graph we are working in. For example, whilst $L$ and $G$ share the same set of vertices, the distance between the same pair of vertices may be strictly smaller in $G$ than in $L$. 

The locality requirement on the quantum gates translates to the uniform boundedness of the distance (in the original lattice $L$) between endpoints of an added edge in $G$. That is, there exists a positive integer $M$ such that a path of length at most $M$ exists in $L$ between the two endpoints of any new edge in $G$. 

\begin{df}
    For graphs $X, Y$ on the same vertex set, we say that $X$ is a \emph{bounded extension} of $Y$ if $Y$ is a subgraph of $X$ and the distance in $Y$ between $u$ and $v$ is uniformly bounded over edges $uv\in X$. 
\end{df}

The addition of local quantum gates to the base lattice $L$ allows us to form any bounded extension $G$ of $L$. 

Physically, the graph distance on $G$ constrains the propagation of quantum information within a given number of discrete time steps. An important probe of the dynamics on the system is two-point correlation functions. Great interest has been taken in certain classes of solvable quantum circuits, in which one can entirely understand the behaviour of two-point correlation functions, providing rare instances of exactly-solvable chaotic quantum dynamics~\cite{dualityreview}. For these circuits, the only non-vanishing correlation functions are those where the time separation matches the graph distance between the endpoints $u, v$. Here, the relevant interactions propagate through vertices lying on geodesics between $u, v$. 

\begin{df}
    For vertices $u, v$ in a graph $X$ and an integer $k$, let the \emph{geodesic slice} $S(u, v, k) = S_X(u, v, k)$ denote the set of vertices lying on a geodesic between $u, v$ and having distance $k$ from $u$ in $X$. 
\end{df}

Alternatively, $S_X(u, v, k)$ is the set of vertices with distance $k$ to $u$ and $d - k$ to $v$ in $X$, where $d$ is the distance between $u$ and $v$ in $X$. Note that if $k < 0$ or $k > d$, $S_X(u, v, k)$ is empty. 

As pointed out in Breach, Placke, Claeys, and Parameswaran~\cite{breach}, efficient computation of correlation functions requires the geodesic slices to have bounded sizes. 

\begin{df}
    A graph $X$ has \emph{bounded geodesic slices} if the size of the geodesic slice $S_X(u, v, k)$ is uniformly bounded across all choices of vertices $u, v$ and integers $k$. 
\end{df}

Breach, Placke, Claeys, and Parameswaran~\cite{breach} noted that if $G$ has bounded geodesic slices, then the dynamics of the corresponding system can be understood through quantum channels of linear depth. This condition is weaker than the boundedness of the number of geodesics in $G$ between any pair of vertices $u, v$, as each such geodesic contains a single vertex in the slice $S_G(u, v, k)$ for each choice of $k$. 

Consequentially, the natural question is whether a given base lattice $L$ admits a bounded extension $G$ (via the addition of local quantum gates) which has bounded geodesic slices. 

\begin{df}
    We call a graph $L$ \emph{geodesically directable}, if there exists a bounded extension $G$ of $L$ with bounded geodesic slices. 
\end{df}

In proving the geodesic directability of a graph $L$, two opposing considerations must be balanced as we choose the bounded extension $G$. On one hand, in passing from $L$ to $G$, the metric structure is minimally altered, as $G$ and $L$ are quasi-isometric. On the other hand, if $L$ does not already have bounded geodesic slices, the requirement that $G$ has bounded geodesic slices necessitates a drastic alteration of the geodesic structure. Loosely speaking, the geodesics in $G$ must become much more unique than those of the base lattice $L$.  

\begin{ques} \label{general ques}
    Which graphs $L$ are geodesically directable? 
\end{ques}

An affirmative answer to Question \ref{general ques} could be given trivially when $L$ is a finite graph. In this case, $L$ itself has bounded geodesic slices, as the size of a geodesic slice is bounded above by the size of $L$. 

Breach, Placke, Claeys, and Parameswaran~\cite{breach} gave an affirmative answer to Question \ref{general ques} in the case where $L$ is (Gromov) hyperbolic and of bounded degree. In this case, $L$ itself again has bounded geodesic slices. To see this, let $L$ be $\delta$-hyperbolic in the sense of Rips (see \cite{Bridson-Haefliger}), for some $\delta\geq 0$. Consider any pair of vertices $w_1, w_2\in S_L(u, v, k)$ lying on geodesics $\gamma_1, \gamma_2$ between $u, v$, respectively. Applying the $\delta$-hyperbolicity of $L$ to the degenerate geodesic triangle formed by $\gamma_1, \gamma_2$ implies that $w_1$ has distance at most $\delta$ from some vertex $w'$ on $\gamma_2$. By the triangle inequality, the distance from $w'$ to $u$ is between $k - \delta$ and $k + \delta$. As a result, the distance between $w'$ and $w_2$ is at most $\delta$. Again, by the triangle inequality, the distance between $w_1$ and $w_2$ is at most $2\delta$. Therefore, $S_L(u, v, k)$ has diameter at most $2\delta$. As $L$ has bounded degree, the size of any such subset of vertices is uniformly bounded. 

The argument above remains valid whenever $L$ admits a hyperbolic bounded extension $G$. However, since $L$ and $G$ are quasi-isometric as metric spaces, $G$ is hyperbolic if and only if $L$ is. Thus, the additional generality is vacuous. 

A particularly interesting case of Question~\ref{general ques} is when $L$ is a regular tiling. A regular tiling by $p$-gons with $q$ faces meeting at each vertex is denoted by its Schläfli symbol $\{p, q\}$ (\emph{cf.}~\cite{coxeter_regular_1973}). Physically, such tilings, being translationally and rotationally symmetric, form natural models of base lattices. 

\begin{ques} \label{lattice ques}
    Which regular tilings $\{p, q\}$ are geodesically directable? 
\end{ques}

As spherical tilings have finitely many vertices, and hyperbolic tilings are regular hyperbolic graphs, affirmative answers to Question \ref{lattice ques} hold for both families, leaving only Euclidean tilings. Here, there are only three cases, the square lattice $\{4, 4\}$ (\emph{i.e.}~$\mathbb{Z}^2$), the triangular lattice $\{3, 6\}$ and the hexagonal lattice $\{6, 3\}$. 

Euclidean tilings are the most relevant for physical applications. Spherical tilings (the five Platonic solids) yield small, finite systems that cannot be scaled. Likewise, only small patches of hyperbolic tilings can be embedded in three-dimensional Euclidean space, due to the exponential growth of the number of vertices as a function of radius. On the other hand, Euclidean tilings arise naturally as two-dimensional lattices in scalable physical systems. Implementing such dynamics on quantum devices, and comparing to the analytical solutions, would allow for accurate benchmarking of the quality of the many-body gates in the system. Furthermore, such structures could find interesting application in quantum error correction, since they share features with non-Euclidean lattices that host good quantum error correction codes \cite{pastawskiHaPPY, leverrierqLDPC, hastingsqLDPC}. 

Contrary to previous expectations \cite{breach}, we shall give affirmative answers to all three Euclidean lattices. Combined with the spherical and hyperbolic cases, this yields a fully affirmative answer to Question~\ref{lattice ques}. 

\begin{thm} \label{general thm}
    All (two-dimensional) regular tilings are geodesically directable. 
\end{thm}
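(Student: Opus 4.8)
The plan is to reduce Theorem~\ref{general thm} to a single hard case: the square lattice $\mathbb{Z}^2=\{4,4\}$. The spherical tilings have finitely many vertices, so they are trivially geodesically directable (take $G=L$); the hyperbolic tilings $\{p,q\}$ with $1/p+1/q<1/2$ are of bounded degree and Gromov hyperbolic, so by the argument reproduced in the introduction they already have bounded geodesic slices, again with $G=L$. This leaves the three Euclidean tilings. The key structural observation is that the triangular lattice $\{3,6\}$ and the hexagonal lattice $\{6,3\}$ are each quasi-isometric to $\mathbb{Z}^2$ in a very controlled way: in fact each is a \emph{bounded extension} of $\mathbb{Z}^2$ (or contains $\mathbb{Z}^2$ as a bounded extension), since one can exhibit an explicit bi-Lipschitz identification of their vertex sets with $\mathbb{Z}^2$ in which every edge of one graph corresponds to a bounded-length path in the other. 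Hence it suffices to prove the strong statement promised in the abstract: \emph{every bounded extension of $\mathbb{Z}^2$ is geodesically directable}. Since ``bounded extension'' is a transitive-up-to-constants relation (if $G$ is a bounded extension of $H$ and $H$ of $\mathbb{Z}^2$, then $G$ is a bounded extension of $\mathbb{Z}^2$), proving this for the three Euclidean lattices then follows by first passing to $\mathbb{Z}^2$.

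So the whole theorem rests on: given an arbitrary bounded extension $L'$ of $\mathbb{Z}^2$, construct a further bounded extension $G$ of $L'$ with bounded geodesic slices. The idea I would pursue is to \emph{engineer a directed/weighted skeleton inside $G$ that forces geodesics to follow essentially unique ``staircase'' routes}. Concretely, I would add edges (and, as the introduction hints, temporarily use a weighted graph to design the construction before rounding to an unweighted one) realizing a metric close to an $\ell^1$-type or ``taxicab with a twist'' metric whose geodesics between generic points are unique, or at least confined to a slice of bounded size. A natural candidate is to pick two transverse families of long ``highway'' edges at irrational or carefully chosen slopes so that, between any $u,v$, a shortest path is forced to use a prescribed number of highway segments of each family in an essentially forced interleaving; the remaining freedom lives in a window of bounded width. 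One must check (i) that the added edges are local, i.e.\ their endpoints are at bounded $\mathbb{Z}^2$-distance — this constrains highway ``hops'' to be of bounded length, so the highways are really chains of short edges, and the speed-up they provide must be arranged combinatorially rather than by single long edges; and (ii) that after adding them the metric genuinely has the bounded-slice property, which is the crux.

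The main obstacle, and where I expect to spend the bulk of the work, is exactly reconciling these two demands: locality of the new edges forces $G$ to remain quasi-isometric to $\mathbb{Z}^2$, which has polynomially (quadratically) growing balls and manifestly unbounded geodesic slices, so the construction cannot rely on any coarse/large-scale trick — it must surgically break the degeneracy of $\mathbb{Z}^2$-geodesics at \emph{every} scale simultaneously while changing distances by only a bounded additive/multiplicative amount. I would attack this by a self-similar or hierarchical construction: partition $\mathbb{Z}^2$ into blocks at a sequence of scales, install a ``routing gadget'' in each block that commits a geodesic to a canonical entry/exit pairing, and argue by induction on scale that a geodesic crossing a scale-$j$ block does so through a bounded-size interface, so that the slice at any distance $k$ meets only boundedly many blocks of the relevant scale, each contributing boundedly many vertices. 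The delicate points will be: controlling the accumulated distortion across all scales (it must telescope to a bounded quantity, which likely forces the gadgets' ``cost'' to decay geometrically in scale), ensuring the gadgets remain compatible with an arbitrary pre-existing bounded extension $L'$ rather than the clean $\mathbb{Z}^2$ (one extra layer of ``bounded perturbation'' bookkeeping), and proving the slice bound is genuinely uniform in $u,v,k$ rather than growing slowly with $d_G(u,v)$. I would first nail the construction for $\mathbb{Z}^2$ itself with $L'=\mathbb{Z}^2$, get the uniform slice bound there, and only then fold in the generality of an arbitrary bounded extension as a perturbation argument.
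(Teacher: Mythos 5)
Your reduction of the theorem to the square lattice is essentially the paper's: spherical tilings are finite, hyperbolic tilings already have bounded geodesic slices by the Rips-thinness argument, and the triangular and hexagonal lattices are handled by relating them to $\mathbb{Z}^2$ via bounded extensions (the paper makes this concrete: the triangular lattice \emph{is} a bounded extension of $\mathbb{Z}^2$, so one needs the strengthened statement that \emph{every} bounded extension of $\mathbb{Z}^2$ is geodesically directable, whereas the hexagonal lattice has a bounded extension isomorphic to $\mathbb{Z}^2$, so directability of $\mathbb{Z}^2$ itself suffices there, together with the observation that the resulting graph is still a bounded extension of the hexagonal lattice). Your transitivity remark covers both directions correctly, so this outer layer is sound.

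The genuine gap is that the core of the theorem --- the construction of a bounded extension of $\mathbb{Z}^2$ (indeed of an arbitrary bounded extension of $\mathbb{Z}^2$) with bounded geodesic slices, and the proof that it works --- is not carried out; everything after the reduction is a research plan rather than a proof. Moreover, your first candidate construction, ``two transverse families of highway edges at irrational or carefully chosen slopes,'' runs directly into the obstruction the paper records: any bounded extension has only finitely many edge displacements, and any translation-invariant choice of such displacements (which is what a fixed family of slopes amounts to) provably yields linearly growing geodesic slices at large scales. The construction must therefore break translation invariance in a scale-coherent way; the paper does this by placing ``slow'' (higher-weight) edges at midpoints of segments indexed by the $p$-adic valuation of their coordinates, producing a two-weight graph $H_{a,b}$ that is self-similar under dilation by $p$, and then proving boundedness of slices by decomposing each geodesic into two short end segments plus a middle segment confined to $p^{n}$-strips, with a separate logarithmic bound on how many scales $n$ can occur. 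Your fallback idea of a hierarchical routing gadget with geometrically decaying cost points in the right direction, but without a concrete gadget, a mechanism for converting the weighted design back to an unweighted local extension (the paper's Proposition on reduction to weighted graphs), and an argument that the slice bound is uniform over all scales simultaneously, the theorem is not established.
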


To prove Theorem \ref{general thm}, we first address the case of the square lattice $\{4, 4\} = \mathbb{Z}^2$. 

\begin{thm} \label{z2 thm}
    The square lattice $\mathbb{Z}^2$ is geodesically directable. 
\end{thm}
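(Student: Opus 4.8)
The plan is to produce a single bounded extension $G$ of $\mathbb{Z}^2$ that has bounded geodesic slices. Since every edge of $G$ joins vertices at uniformly bounded $\mathbb{Z}^2$-distance, $G$ automatically has bounded degree and the identity map is a bi-Lipschitz equivalence between $G$ and $\mathbb{Z}^2$, so no separate quasi-isometry bookkeeping is needed; the whole difficulty lies in the geodesic-slice bound. The guiding tension is the following. The easiest route to bounded slices would be to arrange that all geodesics between any pair $u,v$ stay within bounded Hausdorff distance of one another (thin geodesic bigons): each slice would then have bounded diameter and, by bounded degree, bounded size. But a graph with uniformly thin bigons is hyperbolic, by the geodesic bigon criterion, and no graph quasi-isometric to $\mathbb{Z}^2$ is hyperbolic. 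Hence the slices must remain bounded in \emph{size} while the geodesics are allowed to diverge widely: the geodesics between $u$ and $v$ should organise into $O(1)$ narrow ``channels'' that may separate far apart in the middle but each contribute only a bounded cluster to a given slice. In $\mathbb{Z}^2$ this fails along the anti-diagonals, where geodesics fill an entire segment; the king-move graph (metric $\ell^\infty$) merely moves the defect to the axes, and the same happens for any extension whose metric is a fixed polygonal norm---geodesics always fan out along a facet direction of the norm ball. Thus $G$ must be genuinely inhomogeneous, with a structure that varies with scale so that no direction is ``flat'' at every scale, while never becoming round at any fixed scale.

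I would therefore build $G$ from a hierarchy of \emph{highways}. For each scale $j \ge 0$ fix a sparse, suitably staggered family of \emph{level-$j$ hubs} (roughly a sublattice of density $\sim 4^{-j}$), and, using only local edges with carefully tuned weights, install axis-aligned ``express lanes'' between neighbouring level-$j$ hubs, cheaper by a controlled amount than the corresponding travel at finer scales. The weights are to be chosen so that the per-scale savings form a convergent series (forced, since $G$ must remain bi-Lipschitz to $\mathbb{Z}^2$) yet dominate the cost of any ``diagonal detour'' at lower scales. The intended effect is that for every $u,v$ the cheapest routes all share a rigid three-part shape: an \emph{ascent} from $u$ that repeatedly jumps to the nearest coarser hub---essentially canonical at each level, with total cost dominated by a geometric series---followed by a \emph{traversal} of a bounded number of coarse axis-aligned highway segments whose internal geodesics are essentially unique, followed by a canonical \emph{descent} to $v$; the only genuine branching (for instance, which of two L-shaped coarse routes to take) is confined to $O(1)$ top levels. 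Finally, the weighted $G$ is converted to an unweighted bounded extension of $\mathbb{Z}^2$ by a routine discretisation (this is the auxiliary use of weighted graphs flagged in the introduction).

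The crux is then the uniform bound $|S_G(u,v,k)| \le C$, which I would prove by induction on the number of scales spanned by the trip from $u$ to $v$. One shows: (i) every geodesic meets the highway network within bounded distance of $u$ and leaves it within bounded distance of $v$; (ii) at any prescribed ``time'' $k$, a geodesic is pinned to a bounded neighbourhood of a position determined by $k$, by the hub level reached by time $k$, and by the $O(1)$ top-level branch choices; and (iii) only boundedly many $(\text{level},\text{branch})$ profiles are compatible with a given $k$, so the slice is a bounded union of bounded clusters. The main obstacle---and what I expect to occupy the bulk of the argument---is the quantitative heart of (ii): proving that \emph{any} deviation from the canonical ascend--traverse--descend route strictly lengthens the path by more than can be recouped at lower scales. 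This is a delicate optimisation over the multiscale metric: the highway weights must be tuned so that detour penalties beat all lower-order savings simultaneously, and the estimate must be uniform across the qualitatively different regimes---$u,v$ close together, $u,v$ nearly axis-aligned (where coarse highways barely help), and $u,v$ nearly diagonal (the regime responsible for the $\Theta(d)$ slices of $\mathbb{Z}^2$ itself)---and robust to the many boundary cases where the dominant scale changes. Verifying that a single hierarchy, placed consistently over all of $\mathbb{Z}^2$ rather than tailored to one pair $u,v$, achieves this uniformly is, I expect, where the real work lies.
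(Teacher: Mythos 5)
Your architecture matches the paper's surprisingly closely: the paper also works first with a weighted graph on $\mathbb{Z}^2$ and reduces the unweighted statement to it by a discretisation (Proposition~\ref{sec2 reduction prop}); its construction $H_{a,b}$ is a two-weight, scale-dependent modification of $\mathbb{Z}^2$ in which lines at level $n$ (coordinates divisible by $p^n$ but not $p^{n+1}$) carry one ``slow'' edge per $p^n$-interval, so that coarser lines are proportionally cheaper --- exactly your highway hierarchy with geometrically decaying per-scale savings; and the slice bound is obtained by cutting each geodesic into an ascent $s_1$, a traversal $s_2$ at the coarsest lattice $(p^{n(\gamma)}\mathbb{Z})^2$ it reaches, and a descent $s_3$, bounding each partial slice separately. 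So this is not a different route; it is the same route with the two load-bearing steps left as declarations of intent.

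The first gap is the construction itself. ``Carefully tuned weights'' is doing all the work: what makes the paper's choice usable is not merely that the savings converge, but that the construction is \emph{exactly} self-similar --- the restriction of $H_{a,b}$ to $(p\mathbb{Z})^2$ is isometric (geodesic-by-geodesic, Lemma~\ref{fractal inheritance lem}) to $H_{pa,\,pa-a+b}$, another member of the same family. Every inductive step in the paper (confinement of geodesics to $p^n$-blocks, the structure of traversals, the partial-slice bounds) passes through this exact renormalisation; an ad hoc hierarchy of hubs and lanes with merely ``controlled'' weights gives you no such induction, and proving your step (ii) directly, uniformly over all scales and all relative positions of $u,v$, is precisely the delicate multiscale optimisation you defer. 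The second gap is your claim that branching is ``confined to $O(1)$ top levels.'' That is not what happens and not what one can prove: between a fixed pair $u,v$ there can exist geodesics whose traversals live at several widely separated scales $n$, and the actual statement needed is that the set of feasible traversal scales has bounded cardinality. In the paper this is Lemma~\ref{n bound}, proved by sandwiching $d_{H_{a,b}}(u,v)-a\Delta$ between two expressions of the form $\alpha p^{n}+\beta d_{\mathbb{Z}^2}(u,v)/p^{n}$ and exploiting the convexity of $t\mapsto p^{t}/c+c/p^{t}$; it is a global constraint relating all geodesics between $u$ and $v$, not a local detour-penalty estimate, and nothing in your outline produces it. Without these two ingredients the plan is a correct diagnosis of the problem rather than a proof.
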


The geodesic slices $S_{\mathbb{Z}^2}(u, v, k)$ of $\mathbb{Z}^2$ itself may have size as large as linear in the distance $d_{\mathbb{Z}^2}(u, v)$. For example, for any non-negative integer $x$, $|S_{\mathbb{Z}^2}((0, 0), (x, x), x)| = x + 1$. 

One way of understanding the origin of these linear-sized geodesic slices is to consider the possible displacements $u - v$ of an edge $uv$ in $\mathbb{Z}^2$. Specifically, any path in $\mathbb{Z}^2$ from $(0, 0)$ to $(x, x)$ consisting of $x$ steps of displacement $(0, 1)$ and $x$ steps of displacement $(1, 0)$, in any order, forms a geodesic. This phenomenon is in stark contrast with the unique geodesics of $\mathbb{R}^2$ equipped with the Euclidean $\ell_2$ norm, where the unit ball is smooth. 

It is possible to construct a bounded extension $G$ of $\mathbb{Z}^2$ inducing a metric on $\mathbb{Z}^2$ which approximates the Euclidean metric within any prescribed multiplicative error. However, in any fixed bounded extension $G$ of $\mathbb{Z}^2$, the number of possible displacements $u - v$ of edges $uv$ remains finite. For this reason, one may expect linear sized geodesic slices to emerge again in $G$ at sufficient large scales. 

Breach, Placke, Claeys, and Parameswaran~\cite{breach} observed that this is indeed the case for \emph{translation-invariant} bounded extensions of $\mathbb{Z}^2$. Here, we call a bounded extension $G$ of $\mathbb{Z}^2$ \emph{translation-invariant} if the adjacency of vertices $u, v$ in $G$ depends only on the displacement $u - v$ (\emph{i.e.}~$G$ is a finitely-generated Cayley graph). 

Therefore, to construct bounded extensions of $\mathbb{Z}^2$ with bounded geodesic slices, we must go beyond translation-invariant bounded extensions. 

On the other hand, there are simple constructions of bounded extensions $G$ of $\mathbb{Z}^2$ in which, for some fixed vertex $u$, the sizes of the geodesic slices $S(u, v, k)$ is bounded as $v$ and $k$ vary. To see this in the case where $u = (0, 0)$ is the origin, we may form $G$ from $\mathbb{Z}^2$ by adding all edges of the form $(x, 0)(x + 2, 0)$. In fact, it may further be shown that in this construction of $G$, the size of the geodesic slice $S_G(u, v, k)$ is uniformly bounded across all values of $k$ and all pairs of vertices $u, v$ lying on opposite sides of the $x$-axis. 

The two constructions of bounded extensions $G$ we sketched above suffer from distinct problems. On one hand, a bounded extension $G$ approximating the Euclidean metric leads to unbounded geodesic slices at large scales, \emph{i.e.}, when the the separation $d(u, v)$ is sufficiently large. On the other hand, the bounded extension $G$ modifying the $x$-axis has unbounded geodesic slices when $u, v$ are localised on the same side of the $x$-axis, which may loosely be understood to be an obstruction at small scales. Therefore, to construct a bounded extension $G$ of $\mathbb{Z}^2$ with bounded geodesic slices, we need to account for the behaviour of geodesic slices at all scales. Motivated by these considerations, our construction of $G$ shall be fractal in nature, enjoying a form of self-similarity under scaling. 

We shall, in fact, prove a strengthening of Theorem~\ref{z2 thm}, that all bounded extensions of $\mathbb{Z}^2$ are geodesically directable. 

\begin{thm} \label{z2 gen ext thm}
    Any bounded extension of the square lattice $\mathbb{Z}^2$ is geodesically directable. 
\end{thm}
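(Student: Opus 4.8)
\medskip
\noindent\textbf{Proof proposal.} The plan is to produce, for a given bounded extension $H$ of $\mathbb{Z}^2$ with locality constant $M$ (so $d_{\mathbb{Z}^2}(u,v)\le M$ for every edge $uv\in H$), a single bounded extension $G$ of $H$ with bounded geodesic slices. This suffices: if every $G$-edge has bounded $H$-distance and every $H$-edge has bounded $\mathbb{Z}^2$-distance, then every $G$-edge has bounded $\mathbb{Z}^2$-distance, so $G$ is automatically a bounded extension of $\mathbb{Z}^2$, and Theorem~\ref{z2 thm} is the special case $H=\mathbb{Z}^2$. One should not expect to invoke Theorem~\ref{z2 thm} as a black box and bolt $H$ on afterwards: $H$ may already compress $\mathbb{Z}^2$-distances by an arbitrarily large factor at \emph{every} scale (take $H$ to adjoin all translates of the displacements $(M,0)$ and $(0,M)$), so its shortcuts are not a purely small-scale phenomenon, and a fixed directed extension of $\mathbb{Z}^2$ rescaled by a constant cannot dominate them. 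Hence the directing structure must depend on $H$ --- its scale parameters will be allowed to depend on $M$ --- and I would use throughout that $M$ forces $H$ to have bounded degree and its metric to lie within a factor $M$ of the $\mathbb{Z}^2$ metric.

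The graph $G$ will be $H$ together with a \emph{self-similar, scale-indexed family of uniformly $\mathbb{Z}^2$-bounded edges}: for every scale $2^{j}$ one places, near the cell boundaries of the grid $2^{j}\mathbb{Z}^2$, copies of a single ``funnel gadget'' built from bounded-length edges, where the scale-$2^{j+1}$ copies are a fixed dilation and relabelling of the scale-$2^{j}$ ones, so that $G$ is exactly self-similar above a base scale $T=T(M)$ while every added edge remains $\mathbb{Z}^2$-bounded. (As the paper flags, the cleanest route is to describe this first in a weighted model and then pass to an unweighted graph at constant-factor cost.) The design goal of the gadget is \emph{single-scale canonicity}: every $G$-geodesic crossing a scale-$2^{j}$ cell is forced through a bounded-diameter ``throat'', and the funnel is made efficient enough that passing through the throat strictly beats anything that $H$'s bounded-length edges could achieve across that cell; thus $H$-edges cannot be chained into a competitive alternative at scales $\ge T$ and can appear in a $G$-geodesic only as $O(T)$-bounded detours inside individual base cells.

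The verification of bounded geodesic slices then splits on $d:=d_G(u,v)$. If $d=O(T)$, any $G$-geodesic between $u$ and $v$ lies in a $\mathbb{Z}^2$-ball of radius $O(T)$ (each $G$-edge being $\mathbb{Z}^2$-bounded), so $|S_G(u,v,k)|=O(T^{2})$, which is bounded since $T$ depends only on $M$ --- here $H$ may create many parallel geodesics, but they are harmlessly trapped in a bounded region. If $d$ is large, one runs the self-similar argument: at the largest scale $2^{J}$ spanned by $u$ and $v$, single-scale canonicity pins every $G$-geodesic to one of boundedly many throats; one recurses inside the relevant scale-$2^{J-1}$ cells; and at the bottom one absorbs the $O(T)$ slack coming from $H$-detours and from within-base-cell wandering. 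Exact self-similarity is what ensures that the number of alternatives and the transverse fluctuation do not accumulate over the $\Theta(\log d)$ scales but remain bounded in terms of $M$, giving a uniform bound on $|S_G(u,v,k)|$.

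The step I expect to be the main obstacle is showing that the funnel gadget works \emph{robustly}: single-scale canonicity must survive both the recursion (the gadget must mesh with scaled copies of itself at coarser scales --- that is the role of the self-similarity) and, more delicately, the presence of $H$ (no chaining of $\mathbb{Z}^2$-short edges may imitate or bypass a throat at any scale $\ge T$). Because $H$ can shortcut at all scales, this is not a purely small-scale matter, and it is exactly here that the funnel's efficiency parameter is forced to be chosen in terms of $M$; designing the gadget and closing the recursion with a finite final bound is the heart of the proof. Once Theorem~\ref{z2 gen ext thm} is established, Theorem~\ref{general thm} follows quickly: the triangular lattice $\{3,6\}$ is literally a bounded extension of $\mathbb{Z}^2$ (adjoin one family of diagonals), so Theorem~\ref{z2 gen ext thm} applies directly; the hexagonal lattice $\{6,3\}$ is isomorphic to a spanning subgraph of $\mathbb{Z}^2$ (the ``brick-wall'' graph obtained by deleting alternate vertical edges), so $\mathbb{Z}^2$ --- and hence the extension $G$ produced above --- is a bounded extension of it; together with the elementary spherical case and the known hyperbolic case, this settles all regular tilings.
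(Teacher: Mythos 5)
Your outline correctly identifies the right general shape — overlay on the given extension a self-similar, multi-scale structure whose base scale depends on the locality constant $M$, work first in a weighted model, and split the verification into $d_G(u,v)=O(T)$ versus large $d$ — and this is indeed how the paper proceeds (its Section~\ref{intro and weight} emulates a weighted graph on $(N\mathbb{Z})^2$ with $N>10WM^2$, so that the overlaid structure is faster than anything the given extension can chain together, which resolves the "shortcuts at every scale" worry you raise). But the proposal has a genuine gap at exactly the point you flag as the heart of the matter, and moreover the mechanism you propose for closing it cannot work as stated. First, the "funnel gadget" is never specified; the entire content of the theorem is the construction (in the paper: a two-weight graph $H_{a,b}$ with "slow" edges at the midpoints of the sides of $p^n$-blocks for every $n$) together with the proof that it has bounded slices, and neither is supplied.

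Second, and more seriously, "single-scale canonicity" via bounded-diameter throats is the wrong target. If every geodesic between every pair passed through a unique absolutely-bounded-diameter throat at each scale of the recursion, all geodesic bigons in $G$ would be uniformly thin, which forces $G$ to be hyperbolic (Papasoglu); but $G$ is quasi-isometric to $\mathbb{Z}^2$, so this is impossible. Consequently some scales must admit several genuinely distinct, far-apart geodesic routes ("one of boundedly many throats"), and then the real difficulty is that these choices multiply over the $\Theta(\log d)$ scales of the recursion, giving $d^{\Theta(1)}$ alternatives unless almost every scale contributes a unique choice. Your assertion that "exact self-similarity" prevents this accumulation is unsupported — self-similarity says each scale looks the same, not that each scale contributes a factor of $1$. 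The paper needs a separate quantitative argument here: the cost of a geodesic routed at scale $p^n$ between $u,v$ behaves like $a\Delta + ap^{n} + (b-a)d_{\mathbb{Z}^2}(u,v)/(2p^{n})$ up to controlled errors (Lemma~\ref{n bound start}), which is strictly convex in $\log p^n$, so only $O(1)$ scales can simultaneously carry geodesics (Lemma~\ref{n bound}); at each such scale the slice is bounded in \emph{cardinality} (not diameter) because it lies in the sparse sublattice $(p^n\mathbb{Z})^2$ intersected with $O(1)$ strips (Lemma~\ref{s2n}), with the two end-segments of a geodesic handled separately (Lemma~\ref{s1}). Without an argument of this kind, the recursion does not close. (Your reductions for the triangular and hexagonal lattices are fine and match the paper, but they are not part of the statement at hand.)
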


In physical terms, Theorem~\ref{z2 gen ext thm} states that after the addition of any set of local quantum gates to $\mathbb{Z}^2$, we may add a further set of local (but potentially longer) quantum gates to achieve a system with bounded geodesic slices. 

The two remaining cases of Theorem \ref{general thm}, namely the triangular lattice $\{3, 6\}$ and the hexagonal lattice $\{6, 3\}$, may be treated in a similar manner as the square lattice. We shall resolve both cases by reducing them to the square lattice and applying Theorem \ref{z2 gen ext thm}. 

In favour of a clear exposition, we shall not attempt to optimise the constants in our arguments. In fact, our results in fact hold with much sharper constants in many cases. 

\textbf{Organisation of the Paper.} The rest of this paper is dedicated to the proof of Theorem~\ref{z2 gen ext thm} for bounded extensions of the square lattice $\mathbb{Z}^2$. Section~\ref{intro and weight} sets up the notation and reduces the problem to the existence of certain weighted graphs with bounded geodesic slices. Section~\ref{construction sec} gives the construction of such a weighted graph as a member of a family of weighted graphs. Section~\ref{fractal} explores a fractal-like property of this family of weighted graphs under scaling. Section~\ref{proof} concludes the proof of Theorem~\ref{z2 gen ext thm} for bounded extensions of the square lattice $\mathbb{Z}^2$. Section~\ref{other Euc} addresses the hexagonal and triangular lattices, concluding the proof of Theorem~\ref{general thm}. 

For the sake of practical applications in quantum circuits, we include an explicit construction of a bounded extension of $\mathbb{Z}^2$ with bounded geodesic slices in the appendix. 

\section{Reduction to weighted graphs} \label{intro and weight}

\begin{df}
    For an unweighted graph $G$, we say that \emph{$H$ is a weighted graph on $G$} if $H$ is obtained from $G$ by assigning weights to the edge of $G$. 
\end{df}

In this section, we show that Theorem \ref{z2 gen ext thm} follows from the existence of a certain weighted graph on $\mathbb{Z}^2$ with bounded geodesic slices. 

Throughout this paper, we consider only weighted graphs with positive weights. In fact, we will only consider weighted graphs with positive integer weights. All graphs considered will be connected. 

In a weighted graph, the \emph{cost} of a path is the sum of the weights of its edges. The weighted distance between two vertices is the infimum of the cost of paths between them. Here, in a connected graph with positive integer weights, this infimum is attained. Paths achieving this infimum are known as weighted geodesics. Unless otherwise stated, in a weighted graph, distances between vertices refer to weighted distances, and geodesics refer to weighted geodesics. All paths and geodesics considered here will be finite. In contrast to the weighted cost of a path, we use the \emph{length} of a path to denote the unweighted notion of the number of edges in the path. 

In a (weighted or unweighted) graph $X$, we denote the distance between vertices $u, v$ in $X$ by $d_X(u, v)$. We use $B(v, r) = B_X(v, r) = \{u\in X|d_X(u, v)\leq r\}$ to denote the closed ball of radius $r$ centred at the vertex $v$. Throughout this paper, we shall use the crude estimate
\begin{equation*}
    |B_{\mathbb{Z}^2}(v, r)| = 2\lfloor r\rfloor^2 + 2\lfloor r\rfloor + 1\leq 5r^2, 
\end{equation*}
for $r\geq 1$. 

\begin{df}
    For a real number $r$ and a pair of vertices $u, v$ in a weighted graph $X$, the \emph{geodesic slice} $S_X(u, v, r)$ is defined to be the set of vertices lying on a (weighted) geodesic between $u, v$ and having (weighted) distance $r$ to $u$. 
\end{df}

\begin{df}
    A weighted graph $X$ is said to have \emph{bounded geodesic slices} if the size of geodesic slices $S_X(u, v, r)$ is bounded across all choices of endpoints $u, v$ and non-negative real numbers $r$. 
\end{df}

We are now ready to state the main proposition of this section, which reduces Theorem~\ref{z2 gen ext thm} to the existence of a weighted graph $H$ on $\mathbb{Z}^2$ with finitely many positive integer weights and bounded geodesic slices. 

\begin{prop} \label{sec2 reduction prop}
    If there exists a weighted graph $H$ on $\mathbb{Z}^2$ using finitely many positive integer weights with bounded geodesic slices, then any bounded extension $L$ of $\mathbb{Z}^2$ is geodesically directable. 
\end{prop}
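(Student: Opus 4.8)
The plan is to take the hypothesised weighted graph $H$ on $\mathbb{Z}^2$ and convert it into an unweighted bounded extension of $\mathbb{Z}^2$, then deal with the given bounded extension $L$ on top of that. First I would observe that it suffices to produce an \emph{unweighted} bounded extension $G$ of $\mathbb{Z}^2$ with bounded geodesic slices: since $L$ is a bounded extension of $\mathbb{Z}^2$, it is quasi-isometric to $\mathbb{Z}^2$, hence to $G$, and then I would build the desired extension of $L$ by overlaying the edges of $G$ onto $L$. One must check two things here, namely that $G \cup L$ (edges of both, same vertex set) is still a bounded extension of $L$ — this uses that edges of $G$ correspond to bounded displacements in $\mathbb{Z}^2$ hence bounded $L$-distance, since $L \supseteq \mathbb{Z}^2$ as graphs — and that adding the edges of $L$ to $G$ does not destroy bounded geodesic slices. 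The latter is the crux and I address it below.

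Next I would pass from $H$ to an unweighted graph $G$. Since $H$ uses only finitely many positive integer weights, say all at most $W$, replace each edge of $H$ of weight $w$ by a path of $w$ unweighted edges through $w-1$ fresh subdivision vertices. This is not literally a graph on $\mathbb{Z}^2$, so instead I would embed $\mathbb{Z}^2$ inside a finer copy of $\mathbb{Z}^2$ (say scale every coordinate by $W$, or by the lcm of the weights) so that the subdivision vertices can be taken to be genuine lattice points, and add the base-lattice edges of the fine $\mathbb{Z}^2$ as well; since the fine lattice is itself a bounded extension of the coarse $\mathbb{Z}^2$ and vice versa, this is harmless. The weighted geodesics of $H$ become, up to the bounded scaling factor, the unweighted geodesics of this subdivided graph $G$, and a geodesic slice of $G$ at radius $k$ is contained in the union of a geodesic slice of $H$ (at the corresponding weighted radius) together with interior vertices of the finitely many subdivided edges emanating from it, so $G$ has bounded geodesic slices and is a bounded extension of $\mathbb{Z}^2$.

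The main obstacle, as flagged above, is the final step: given a bounded extension $L$ of $\mathbb{Z}^2$, show $G' := G \cup L$ still has bounded geodesic slices. The point is that edges of $L$ are ``short'' — each $L$-edge $uv$ has $d_{\mathbb{Z}^2}(u,v) \le M$ for some constant $M$, hence $d_G(u,v) \le cM$ for a constant $c$ depending only on $G$ (as $G$ is a bounded extension of $\mathbb{Z}^2$). Thus $d_G$ and $d_{G'}$ are comparable: $d_{G'} \le d_G \le cM\, d_{G'}$, and more usefully any $G'$-geodesic can be ``lifted'' to a $G$-path of comparable length by replacing each $L$-edge by a $G$-path of length $\le cM$. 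I would argue that a $G'$-geodesic from $u$ to $v$ of length $d'$ gives a $G$-path of length $\le cM d'$, but since $d_G(u,v) \ge d'$ and the lifted path only inflates by bounded factors per edge, any vertex $w$ on the $G'$-geodesic lies within $G$-distance $O(M)$ of a vertex on some $G$-geodesic between $u,v$ (comparing the $u$-to-$w$ and $w$-to-$v$ portions, each nearly $G$-geodesic up to the factor $cM$ and an additive error). Hence $S_{G'}(u,v,k)$ is covered by finitely many balls $B_G(\cdot, O(M))$ centred at vertices of bounded-size $G$-geodesic slices; since $G$ has bounded degree, these balls have bounded size, and bounded geodesic slices for $G'$ follows. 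I expect the bookkeeping of additive-versus-multiplicative errors in this quasi-geodesic comparison to be the fiddliest part, but it is a routine ``stability of geodesics under bounded perturbation of the metric'' argument once the comparability $d_{G'} \le d_G \le cM d_{G'}$ is in hand; since $L$ contains $\mathbb{Z}^2$, one in fact has the cleaner bound that every $L$-edge has bounded $G$-length, which is all that is needed.
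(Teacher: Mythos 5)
Your construction of the unweighted graph is essentially the paper's: subdivide each weight-$w$ edge of $H$ into $w$ unweighted edges realised on a refined copy of $\mathbb{Z}^2$ at scale $N$ a common multiple of the weights, keeping the base-lattice edges. The fatal problem is the final step, where you build $G$ at a scale depending only on $H$ and then try to show that $G' = G\cup L$ retains bounded geodesic slices via a ``stability of geodesics under bounded perturbation of the metric'' argument. This is not routine --- it is false in general. The comparability $d_{G'}\leq d_G\leq cM\,d_{G'}$ only says that a $G'$-geodesic is a $(cM,0)$-quasi-geodesic of $G$, and in a non-hyperbolic space such as $\mathbb{Z}^2$ there is no Morse lemma: quasi-geodesics need not stay within bounded distance of geodesics. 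Concretely, for $w$ on a $G'$-geodesic from $u$ to $v$ your lifting argument yields only $d_G(u,w)+d_G(w,v)\leq cM\,d_G(u,v)$, which confines $w$ to an ``ellipse'' of size $\Theta(d_G(u,v)^2)$, not to a bounded neighbourhood of a $G$-geodesic. Moreover, even granting closeness to $G$-geodesics, the condition $d_{G'}(u,w)=k$ only pins $d_G(u,w)$ down to an interval of length $\Theta(Mk)$, so you would have to union over unboundedly many slices of $G$. (The introduction's discussion of why hyperbolicity of $L$ cannot be gained by passing to a bounded extension is exactly this point: quasi-isometry preserves neither geodesics nor slice sizes.)

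The paper avoids this by making the construction depend on $L$: it chooses $N>10WM^2$, so that every new edge has $\mathbb{Z}^2$-length at least $N/W>10MW$ and travel through the emulated copy of $H$ on $(N\mathbb{Z})^2$ is at least a factor $10M$ cheaper than travel through $L$ (Lemmas~\ref{rough esti NZ^2} and~\ref{rough esti line}). This yields \emph{additive}, not multiplicative, control: geodesics of $G$ between special vertices use only new edges (Lemma~\ref{special vert geo}), every geodesic of length at least $11N$ passes through $(N\mathbb{Z})^2$ within distance $6N$ of each endpoint (Lemma~\ref{all roads pass through Nz2}), and the middle portion is exactly a dilation of an $H$-geodesic (Lemma~\ref{inheritance lem}). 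A slice of $G$ is then covered by boundedly many dilated slices of $H$ plus two balls of bounded radius around the endpoints. Note that since $\mathbb{Z}^2\subseteq L$, adding the new edges to $L$ is the same as taking your union $G\cup L$; so the gap is not the shape of the construction but the choice of scale together with the replacement of the quasi-isometry argument by a genuine structural analysis of which edges geodesics of $G$ can use. A secondary issue: even for $L=\mathbb{Z}^2$ your scale $N=W$ (or the lcm) can make the subdivided pieces of a maximum-weight edge have unit length, indistinguishable from base-lattice edges, so the margin forcing geodesics to follow the $H$-structure rather than wander through the fine lattice disappears; this too requires taking $N$ strictly larger than $W$ by a definite factor.
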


The rest of this section is dedicated to the proof of Proposition~\ref{sec2 reduction prop}. We shall first introduce the relevant notation and intermediate results, some of which will remain applicable to later sections. 

As $L$ is a bounded extension of $\mathbb{Z}^2$, there exists a constant $M$ such that for every edge $uv$ of $L$, $d_{\mathbb{Z}^2}(u, v)\leq M$. 

\begin{obs} \label{L vs Z^2}
    For any pair of vertices $u, v\in \mathbb{Z}^2$, we have
    \begin{equation*}
        d_L(u, v)\leq d_{\mathbb{Z}^2}(u, v)\leq Md_L(u, v). 
    \end{equation*}
\end{obs}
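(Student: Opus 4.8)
The plan is to prove the two inequalities separately, both by straightforward comparison of paths, using the fact (recalled just before the statement) that $L$ is a bounded extension of $\mathbb{Z}^2$ with bound $M$, i.e.\ $\mathbb{Z}^2$ is a subgraph of $L$ on the same vertex set and $d_{\mathbb{Z}^2}(x, y) \leq M$ for every edge $xy$ of $L$.

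For the left inequality $d_L(u, v) \leq d_{\mathbb{Z}^2}(u, v)$, I would observe that since $\mathbb{Z}^2$ is a subgraph of $L$, every path in $\mathbb{Z}^2$ between $u$ and $v$ is also a path in $L$ of the same length; taking a $\mathbb{Z}^2$-geodesic from $u$ to $v$ therefore exhibits a path in $L$ of length $d_{\mathbb{Z}^2}(u, v)$, so the $L$-distance is no larger. For the right inequality $d_{\mathbb{Z}^2}(u, v) \leq M \, d_L(u, v)$, I would take an $L$-geodesic $u = w_0, w_1, \dots, w_n = v$ with $n = d_L(u, v)$. Each $w_i w_{i+1}$ is an edge of $L$, so $d_{\mathbb{Z}^2}(w_i, w_{i+1}) \leq M$ by the choice of $M$. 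Concatenating $\mathbb{Z}^2$-paths realising these distances and applying the triangle inequality in $\mathbb{Z}^2$ gives $d_{\mathbb{Z}^2}(u, v) \leq \sum_{i=0}^{n-1} d_{\mathbb{Z}^2}(w_i, w_{i+1}) \leq nM = M \, d_L(u, v)$.

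There is no real obstacle here: both directions are one-line path-concatenation arguments, and the only thing to be careful about is citing the right definitions (subgraph containment for the first inequality, the uniform bound $M$ on edge endpoints for the second). I would keep the write-up to a couple of sentences.
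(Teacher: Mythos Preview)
Your proposal is correct and is exactly the standard path-concatenation argument one would give; the paper itself treats this as an observation and omits the proof entirely, so there is nothing to compare beyond noting that your two sentences are precisely what is being left implicit.
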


Let $W$ be the maximum weight of $H$. To form a bounded extension $G$ of $L$ with bounded geodesic slices, we shall add horizontal and vertical edges to $L$ to emulate $H$ on $(N\mathbb{Z})^2$ for some appropriate choice of a positive integer $N$. Formally, we construct $G$ as follows. 
    
\begin{construction} \label{reduction pattern}
    Let $N > 10WM^2$ be any common multiple of the weights of $H$. We form $G$ by adding the following edges to $L$. 
    \begin{itemize}
        \item If a horizontal edge $(x, y)(x + 1, y)$ has weight $w$ in $H$, we add the edges $(Nx + iN/w, Ny)(Nx + (i + 1)N/w, Ny)$ to $L$ for $i = 0, \dots, w - 1$. 
    
        \item Analogously, if a vertical edge $(x, y)(x, y + 1)$ has weight $w$ in $H$, we add the edges $(Nx, Ny + iN/w)(Nx, Ny + (i + 1)N/w)$ for $i = 0, \dots, w - 1$. 
    \end{itemize}
\end{construction}

\begin{df}
    We call the original edges of $L$ \emph{old edges} to distinguish them from the \emph{new edges} we add to form $G$. We call the endpoints of new edges \emph{special vertices}. We call a path \emph{new} if it only passes through new edges. 
\end{df}

\begin{obs} \label{long new edges}
    For any new edge $uv$ in $G$, $d_{\mathbb{Z}^2}(u, v)\geq N/W > 10MW$. 
\end{obs}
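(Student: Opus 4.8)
The statement is a direct unpacking of Construction~\ref{reduction pattern}, so the plan is simply to read off the coordinates of an arbitrary new edge and then estimate $N/w$. By Construction~\ref{reduction pattern}, every new edge of $G$ has the form $(Nx + iN/w,\, Ny)\,(Nx + (i+1)N/w,\, Ny)$ for some horizontal edge $(x,y)(x+1,y)$ of $\mathbb{Z}^2$ of weight $w$ in $H$ and some $0 \le i \le w-1$, or the analogous form $(Nx,\, Ny + iN/w)\,(Nx,\, Ny + (i+1)N/w)$ coming from a vertical edge; these are genuine lattice points because $N$ was chosen to be a common multiple of the weights of $H$, so $w \mid N$. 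In either case the two endpoints $u$ and $v$ agree in one coordinate and differ by exactly $N/w$ in the other, whence $d_{\mathbb{Z}^2}(u,v) = N/w$. Since $w$ is one of the weights of $H$ and $W$ is their maximum, $w \le W$, and therefore $d_{\mathbb{Z}^2}(u,v) = N/w \ge N/W$.

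For the remaining inequality $N/W > 10MW$, recall that $N$ was taken with $N > 10WM^2$ in Construction~\ref{reduction pattern}, so $N/W > 10M^2$; combined with $M \ge W$ — which we may assume by replacing $M$ at the outset with $\max(M,W)$, still a valid uniform bound on $d_{\mathbb{Z}^2}$ over the edges of $L$ — this gives $N/W > 10M^2 \ge 10MW$. (Equivalently, one could simply demand $N > 10W^2M^2$ in Construction~\ref{reduction pattern}, which is harmless everywhere else in the paper and makes the chain $N/W > 10WM^2 > 10MW$ immediate.) There is no genuine obstacle in this argument: it is a one-line computation from the construction together with $w \le W$ and the defining size condition on $N$. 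The observation is stated only to record, in a convenient form, the scale separation between the long new edges and the short old edges of $L$ — a gap of more than a factor of $10MW$ — which the subsequent analysis of geodesics in $G$ relies on.
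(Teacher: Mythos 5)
Your proof is correct and is exactly the computation the paper intends: the paper states this as an observation without proof, the content being that a new edge has endpoints differing by $N/w$ in one coordinate, so $d_{\mathbb{Z}^2}(u, v) = N/w \geq N/W$, combined with the size condition on $N$. You are also right to flag that the final inequality $N/W > 10MW$ does not follow from $N > 10WM^2$ alone unless $M \geq W$; your fix of replacing $M$ by $\max(M, W)$ (or, equivalently, demanding $N > 10W^2M^2$) is harmless elsewhere and repairs this minor imprecision, which in any case does not affect the paper since only the bound $d_{\mathbb{Z}^2}(u, v)\geq N/W$ is used later.
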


For any edge $uv$ in $G$, $d_L(u, v)\leq d_{\mathbb{Z}^2}(u, v)\leq N$. Therefore, $G$ is a bounded extension of $L$. To prove Proposition \ref{sec2 reduction prop}, it suffices to show that $G$ has bounded geodesic slices. 

Throughout this work, we often need to analyse the interaction between geodesics and sublattices of the form $(n\mathbb{Z})^2$. We now introduce the relevant definitions to describe such sublattices. In this section, the sublattice of interest is $(N\mathbb{Z})^2$. To avoid confusion between a point $(x, y)$ in $\mathbb{Z}^2$ defined by its coordinates, and an open interval, we use the notation $]x_1, x_2[$ for the latter. 

\begin{df}
    For a positive integer $n$, let an \emph{open (\emph{resp.}~closed) $n$-interval} be an interval of the form $]nx, nx + n[$ (\emph{resp.}~$[nx, nx + n]$) for some integer $x$. 
\end{df}

\begin{df}
    Let an \emph{open (\emph{resp.}~closed) $n$-block} be a Cartesian product of the form $]nx, nx + n[\times ]ny, ny + n[$ (\emph{resp.}~$[nx, nx + n]\times [ny, ny + n]$) for integers $x, y$. The \emph{boundary} of an $n$-block denotes the difference between its closed and open versions. The boundary consists of four \emph{sides}, two \emph{horizontal sides} to the top and bottom of the $n$-block, and two \emph{vertical sides} to the left and right of the $n$-block. Let an \emph{open (\emph{resp.}~closed) side} exclude (\emph{resp.}~include) the corners in $(n\mathbb{Z})^2$. 
\end{df}

Throughout this paper, an interval always implicitly denotes its intersection with $\mathbb{Z}$. Similarly, a Cartesian product of intervals always implicitly denotes its intersection with the set $\mathbb{Z}^2$ of vertices. 

\begin{df}
    Let an \emph{open (\emph{resp.}~closed) horizontal $n$-strip} be a Cartesian product of the form $\mathbb{Z}\times ]ny, ny + n[$ (\emph{resp.}~$\mathbb{Z}\times [ny, ny + n]$) for some integer $y$. Similarly, let an \emph{open (\emph{resp.}~closed) vertical $n$-strip} be a Cartesian product of the form $]nx, nx + n[\times \mathbb{Z}$ (\emph{resp.}~$[nx, nx + n]\times \mathbb{Z}$) for some integer $x$. 
\end{df}

Figure~\ref{fig:def1} illustrates some examples of $n$-blocks and $n$-strips for $n = 3$. 

\begin{figure}
    \centering
    \begin{tikzpicture}
        \draw[step=1cm,gray!50,very thin] (-1,-1) grid (10, 10);
        
        \foreach \x in {0, 3, 6, 9}{
            \foreach \y in {0, 3, 6, 9}{
                \fill[gray!70] (\x, \y) circle[radius=0.05];
            }
        }
        
        \draw[black,thick] (0,3) -- (0,6); 
        \draw[black,thick] (3,3) -- (3,6);
        \draw[black,thick] (0,6) -- (3,6);
        \draw[black,thick] (0,3) -- (3,3);
        
        \fill[opacity=0.2] (0,3) rectangle (3,6);
        \fill[opacity=0.2] (6,-1) rectangle (9,10);
    \end{tikzpicture}
    \caption{Illustration of an $n$-block (left, shaded grey) with its sides coloured black and a vertical $n$-strip (right, shaded grey) drawn on $\mathbb{Z}^2$ for $n = 3$. Vertices in $(n\mathbb{Z})^2$ are marked in grey.}
    \label{fig:def1}
\end{figure}
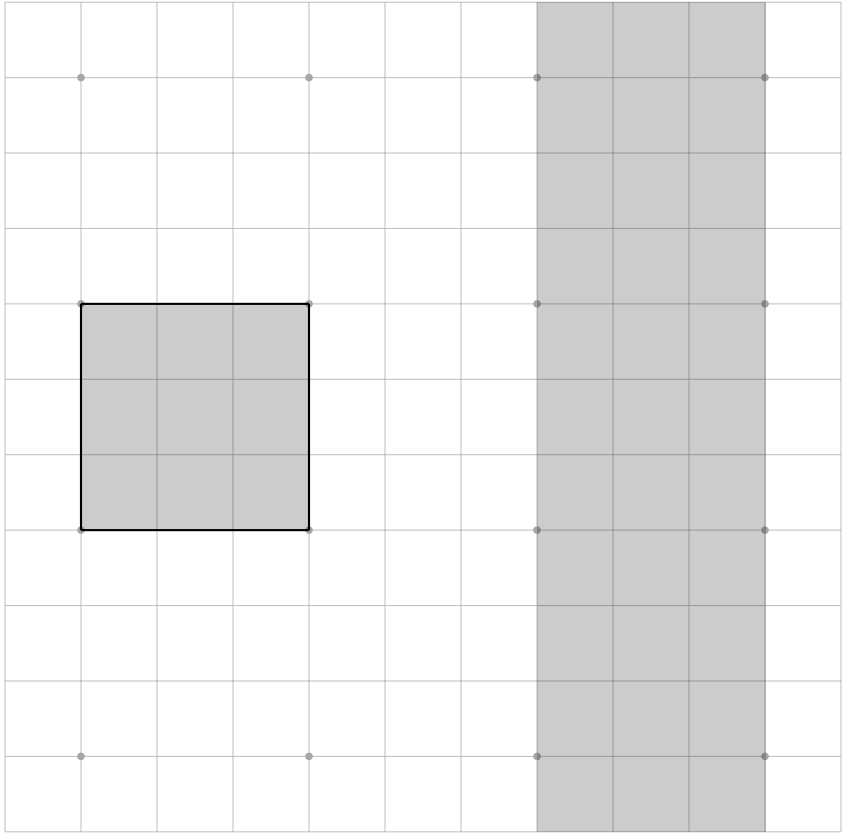

\begin{df}
    For a positive integer $n$ and a vertex $v = (x, y)\in\mathbb{Z}^2$, let $nv = (nx, ny)\in (n\mathbb{Z})^2\subseteq \mathbb{Z}^2$. Conversely, for a vertex $u = (nx, ny)\in (n\mathbb{Z})^2$, let $\frac{1}{n}u = (x, y)\in \mathbb{Z}^2$. 
\end{df}

\begin{df}
    For a positive integer $n$, a vertex $v = (nx, ny)\in (n\mathbb{Z})^2$ is an \emph{$n$-parent} of another vertex $u\in \mathbb{Z}^2$ if $u \in ]nx - n, nx + n[\times ]ny - n, ny + n[$. 
\end{df}

Each vertex in $\mathbb{Z}^2$ has 1, 2 or 4 $n$-parents. We note an immediate characterisation of $n$-parents in terms of closed $n$-blocks. 

\begin{obs} \label{parent characterisation}
    For a positive integer $n$ and vertices $v\in (n\mathbb{Z})^2$, $u\in \mathbb{Z}^2$, $v$ is an $n$-parent of $u$ if and only if every closed $n$-block containing $u$ also contains $v$. 
\end{obs}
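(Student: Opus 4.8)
The plan is to decouple the two coordinates and reduce to a one-dimensional statement about $n$-intervals. First I would note that a closed $n$-block is exactly a Cartesian product $I\times J$ of two closed $n$-intervals, and that every integer lies in at least one closed $n$-interval (indeed $p\in[n\lfloor p/n\rfloor,\, n\lfloor p/n\rfloor+n]$). Writing $u=(p,q)$ and $v=(na,nb)$, it follows that ``$v$ lies in every closed $n$-block containing $u$'' is equivalent to the conjunction of ``$na$ lies in every closed $n$-interval containing $p$'' and ``$nb$ lies in every closed $n$-interval containing $q$'': for the forward implication, hold one coordinate's interval fixed and let the other vary; for the reverse, any $n$-block through $u$ factors as an $n$-interval through $p$ times one through $q$. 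On the other hand, unwinding the definition, ``$v$ is an $n$-parent of $u$'' is precisely the conjunction of $n(a-1)<p<n(a+1)$ and $n(b-1)<q<n(b+1)$. Hence it suffices to prove the one-dimensional equivalence: for integers $p,a$, we have $n(a-1)<p<n(a+1)$ if and only if $na$ belongs to every closed $n$-interval $[nx,nx+n]$ containing $p$.

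For this one-dimensional claim, the forward direction is a short inequality chase: if $nx\le p\le nx+n$ and $n(a-1)<p<n(a+1)$, then $nx\le p<n(a+1)$ forces $x\le a$, while $n(a-1)<p\le nx+n$ forces $a\le x+1$, so $nx\le na\le nx+n$. For the converse I would argue by contrapositive, exhibiting an explicit closed $n$-interval that contains $p$ but misses $na$: when $p\ge n(a+1)$, take $[n\lfloor p/n\rfloor,\, n\lfloor p/n\rfloor+n]$, whose left endpoint is already $\ge n(a+1)>na$; when $p\le n(a-1)$, take $[\,n(\lceil p/n\rceil-1),\, n\lceil p/n\rceil\,]$, whose right endpoint is $\le n(a-1)<na$. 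In either case one checks directly from the defining properties of floor and ceiling that $p$ lies in the chosen interval.

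I do not expect a genuine obstacle here; the only point demanding care is the boundary case in which $p$ is a multiple of $n$, so that $p$ belongs to two distinct closed $n$-intervals and the asymmetric choice of floors versus ceilings above is what selects the ``correct'' one. This asymmetry is exactly what the strict inequalities in the definition of $n$-parent are designed to accommodate. One could instead bypass the reduction and run a direct four-way case split according to which side of a candidate block is violated, but the product decomposition keeps the bookkeeping minimal.
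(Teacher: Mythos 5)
Your proof is correct. The paper states this as an Observation and offers no proof at all, treating it as immediate from the definitions; your coordinate-wise reduction to the one-dimensional statement about closed $n$-intervals, together with the explicit floor/ceiling choices in the contrapositive, is exactly the routine verification the paper leaves implicit, and you have handled the only delicate point (the case where a coordinate of $u$ is a multiple of $n$, so that it lies in two closed $n$-intervals) correctly.
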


\begin{df}
    We call a path in a graph with vertex set $\mathbb{Z}^2$ \emph{straight} if its edges, when traversed in order, are parallel to the same coordinate axis and oriented in the same direction. 
\end{df}

\begin{obs} \label{tight triangle inequality}
    Given a straight path $w_0\dots w_\ell$ in a graph with vertex set $\mathbb{Z}^2$, we have
    \begin{equation*}
        d_{\mathbb{Z}^2}(w_0, w_\ell) = \sum_{i = 1}^{\ell}d_{\mathbb{Z}^2}(w_{i - 1}, w_i). 
    \end{equation*}
    In other words, the triangle inequality is tight here. 
\end{obs}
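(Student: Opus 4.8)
The plan is to reduce to the one-dimensional situation on a single coordinate line and then invoke a telescoping identity. Recall first that $d_{\mathbb{Z}^2}$ is the $\ell_1$ (Manhattan) distance: for $p = (p_1, p_2)$ and $q = (q_1, q_2)$ in $\mathbb{Z}^2$ one has $d_{\mathbb{Z}^2}(p, q) = |p_1 - q_1| + |p_2 - q_2|$, regardless of which graph on $\mathbb{Z}^2$ we happen to be working in. Iterating the triangle inequality already gives $d_{\mathbb{Z}^2}(w_0, w_\ell) \le \sum_{i=1}^\ell d_{\mathbb{Z}^2}(w_{i-1}, w_i)$ for an arbitrary path, so the real content of the observation is the reverse inequality, i.e.\ the claimed equality.

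Next I would unwind the definition of a \emph{straight} path. By that definition there is a fixed coordinate axis — after relabelling, say the first one — such that every step satisfies $w_i - w_{i-1} = (c_i, 0)$ with all $c_i$ nonzero and of a common sign; reversing the path if necessary we may assume $c_i \ge 1$ for every $i$. In particular all the $w_i$ share the same second coordinate, so writing $w_i = (x_i, y)$ we have $x_i = x_{i-1} + c_i$, hence $x_0 < x_1 < \cdots < x_\ell$, and moreover $d_{\mathbb{Z}^2}(w_{i-1}, w_i) = |c_i| = c_i$ for each $i$.

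Summing over $i$ then finishes the argument: $\sum_{i=1}^\ell d_{\mathbb{Z}^2}(w_{i-1}, w_i) = \sum_{i=1}^\ell c_i = x_\ell - x_0 = |x_\ell - x_0| + |y - y| = d_{\mathbb{Z}^2}(w_0, w_\ell)$, where the third equality uses $x_\ell > x_0$. I do not expect any genuine obstacle here; the only point needing (minor) care is extracting from the word ``straight'' the two facts that exactly one coordinate varies and that it varies monotonically, after which the statement is just the telescoping sum of a monotone integer sequence.
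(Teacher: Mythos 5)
Your argument is correct and is exactly the (omitted) justification the paper intends for this observation: a straight path varies in only one coordinate, monotonically, so the $\ell_1$ distances of its steps telescope. You also correctly note the one point worth being careful about, namely that the steps need not be unit edges of $\mathbb{Z}^2$ but only same-sign displacements along one axis, which is all the telescoping requires.
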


We now return to the investigation of the metric and geodesic structures of $G$. 

\begin{df}
    For adjacent vertices $v_1, v_2$ in $\mathbb{Z}^2$, let the \emph{short straight path (SSP)} between $Nv_1$ and $Nv_2$ denote the (uniquely existing) straight and new path between $Nv_1$ and $Nv_2$ in $G$. 
\end{df}

The SSPs between vertices in $(N\mathbb{Z})^2$ inherit the weights from $H$. 

\begin{obs} \label{inheritance}
    For an edge $e = v_1v_2$ in $\mathbb{Z}^2$, the SSP between $Nv_1$ and $Nv_2$ is strictly shorter than any other straight path and has length equal to the weight of the edge $e$ in $H$. 
\end{obs}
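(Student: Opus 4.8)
The plan is to prove the stronger statement that every straight path $P$ between $Nv_1$ and $Nv_2$ in $G$ has length at least $w$, where $w$ denotes the weight of $e$ in $H$, with equality if and only if $P$ is the SSP. By Construction~\ref{reduction pattern} the SSP consists of exactly the $w$ consecutive new edges that subdivide the segment between $Nv_1$ and $Nv_2$, so it is straight, new, and of length $w$; hence this stronger statement yields the observation. Assume without loss of generality that $e$ is horizontal, say $v_1 = (x,y)$ and $v_2 = (x+1,y)$, the vertical case being symmetric. A straight path $P$ from $Nv_1 = (Nx,Ny)$ to $Nv_2 = (Nx+N,Ny)$ consists of horizontal edges all oriented rightward, so along $P$ the $x$-coordinate increases strictly and monotonically from $Nx$ to $Nx+N$ while the $y$-coordinate stays equal to $Ny$; in particular every edge of $P$ lies in the ``column'' $[Nx,Nx+N]\times\{Ny\}$.

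Next I would classify the edges of $G$ available inside this column. An old edge of $P$ is an edge of $L$, so its endpoints are at $\mathbb{Z}^2$-distance at most $M$; being horizontal, its horizontal extent is at most $M$. A new edge of $P$ is horizontal at height $Ny$; by Construction~\ref{reduction pattern} the horizontal new edges at height $Ny$ arise only from horizontal edges $(x',y)(x'+1,y)$ of $H$ and subdivide the segment $[Nx',Nx'+N]\times\{Ny\}$. Since the open intervals $]Nx',Nx'+N[$ are pairwise disjoint as $x'$ ranges over $\mathbb{Z}$, the only horizontal new edges whose horizontal extent is contained in $[Nx,Nx+N]$ are the $w$ sub-edges of the SSP of $e$ (here $e = (x,y)(x+1,y)$ is indeed an edge of $H$), each of horizontal extent $N/w$.

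Finally I would count. Suppose $P$ uses the sub-edges indexed by a set $J\subseteq\{0,\dots,w-1\}$ together with $b$ old edges; since $P$ is monotone rightward, its length is exactly $|J|+b$. The part of $[Nx,Nx+N]$ not covered by the chosen sub-edges has total length $N(w-|J|)/w$ and is covered by the $b$ old edges, each of horizontal extent at most $M$, so $b\geq N(w-|J|)/(wM)$. Writing $t = w-|J|$, the length of $P$ is at least $(w-t)+tN/(wM) = w+t\bigl(N/(wM)-1\bigr)$. Since $w\leq W$ and $N>10WM^2$, we have $N/(wM)>10M\geq 10>1$, so if $t\geq 1$ the length of $P$ strictly exceeds $w$; and if $t=0$ then $P$ uses all $w$ sub-edges, which already cover the whole column, forcing $P$ to be exactly the SSP, of length $w$.

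I expect the classification step to be the main obstacle: one has to argue carefully that straightness confines $P$ to the single column $[Nx,Nx+N]\times\{Ny\}$, and hence that no new edge other than the $w$ sub-edges of the SSP of $e$ can appear on $P$. This is precisely where the choice of $N$ as a common multiple of the weights (ensuring these sub-edges have endpoints in $\mathbb{Z}^2$) and the disjointness of the interiors of the subdivided $H$-edges both come into play; the remaining estimates are routine given the bound $N>10WM^2$.
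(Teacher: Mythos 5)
Your proof is correct, and it supplies in full detail exactly the justification the paper leaves implicit (the statement is given as an Observation with no proof): confinement of a straight path to the segment $[Nx,Nx+N]\times\{Ny\}$, identification of the available edges there, and the covering/counting estimate using $N>10WM^2$. No gaps; the argument matches the intended reasoning behind the construction.
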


\begin{lem} \label{rough esti NZ^2}
    For vertices $u, v\in (N\mathbb{Z})^2$, we have
    \begin{equation*}
        d_G(u, v)\leq \frac{W}{N} d_{\mathbb{Z}^2}(u, v)\leq \frac{WM}{N} d_L(u, v) < \frac{1}{10M} d_L(u, v). 
    \end{equation*}
\end{lem}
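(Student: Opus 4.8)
The second and third inequalities are immediate. The second is exactly the right-hand bound of Observation~\ref{L vs Z^2}. For the third, recall from Construction~\ref{reduction pattern} that $N > 10WM^2$, so $\frac{WM}{N} < \frac{WM}{10WM^2} = \frac{1}{10M}$, and multiplying through by $d_L(u,v)$ gives the claim. So the real content is the first inequality, $d_G(u,v) \leq \frac{W}{N} d_{\mathbb{Z}^2}(u,v)$.

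To prove this, I would write $u = Nu'$ and $v = Nv'$ with $u', v' \in \mathbb{Z}^2$, and choose a geodesic from $u'$ to $v'$ in the (unweighted) square lattice $\mathbb{Z}^2$ that is a unit-step lattice path, say $u' = w_0, w_1, \dots, w_\ell = v'$ with each $w_{i-1}w_i$ an edge of $\mathbb{Z}^2$; here $\ell = d_{\mathbb{Z}^2}(u', v')$. For each $i$, Construction~\ref{reduction pattern} produces the SSP in $G$ between $Nw_{i-1}$ and $Nw_i$, which by Observation~\ref{inheritance} has length equal to the weight of the edge $w_{i-1}w_i$ in $H$, hence at most $W$. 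Concatenating these $\ell$ short straight paths yields a walk in $G$ from $Nu' = u$ to $Nv' = v$ of total length at most $W\ell$. Therefore
\begin{equation*}
    d_G(u,v) \leq W\ell = W\, d_{\mathbb{Z}^2}(u', v').
\end{equation*}
Finally, since $\ell_1$-distance on $\mathbb{Z}^2$ is homogeneous under scaling, $d_{\mathbb{Z}^2}(u', v') = \frac{1}{N} d_{\mathbb{Z}^2}(Nu', Nv') = \frac{1}{N} d_{\mathbb{Z}^2}(u, v)$, which combined with the previous display gives $d_G(u,v) \leq \frac{W}{N} d_{\mathbb{Z}^2}(u,v)$, as desired.

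I do not anticipate a genuine obstacle here: the argument is just "pull back an $\ell_1$-geodesic through the scaling map and replace each unit edge by its SSP." The only points requiring a moment's care are that a geodesic in $\mathbb{Z}^2$ can always be taken to consist of unit steps (so that Observation~\ref{inheritance} applies edge-by-edge) and that the weight of every edge of $H$ is at most $W$ by definition of $W$; both are routine. If one wants to be scrupulous, one can also invoke Observation~\ref{tight triangle inequality} to confirm that the scaled-down lattice distance behaves as claimed, though the homogeneity of the $\ell_1$ metric is enough.
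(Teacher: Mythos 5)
Your proposal is correct and is essentially the paper's own argument: the paper likewise connects $u$ and $v$ by $\frac{1}{N}d_{\mathbb{Z}^2}(u,v)$ SSPs and invokes Observation~\ref{inheritance} for the first inequality, with the second and third following from Observation~\ref{L vs Z^2} and $N > 10WM^2$ exactly as you say. Your version merely spells out the routine details (choice of a unit-step lattice geodesic, the bound $\leq W$ per SSP, and the scaling of the $\ell_1$ metric) that the paper leaves implicit.
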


\begin{proof}
    The first inequality follows from Observation~\ref{inheritance} once we connect $u, v$ using $\frac{1}{N}d_{\mathbb{Z}^2}(u, v)$ SSPs. The second inequality follows from Observation~\ref{L vs Z^2}, and the third inequality follows from $N > 10WM^2$. 
\end{proof}

\begin{lem} \label{rough esti line}
    Let $u, v$ be special vertices sharing the same horizontal or vertical coordinate which is a multiple of $N$. Then
    \begin{equation*}
        d_G(u, v)\leq \frac{W}{N} d_{\mathbb{Z}^2}(u, v)\leq \frac{WM}{N} d_L(u, v) < \frac{1}{10M} d_L(u, v). 
    \end{equation*}
\end{lem}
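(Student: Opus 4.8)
The plan is to follow the proof of Lemma~\ref{rough esti NZ^2} almost verbatim, the one new wrinkle being that $u$ and $v$ may be interior subdivision points of short straight paths rather than vertices of $(N\mathbb{Z})^2$. Concretely, I would exhibit an explicit straight \emph{new} path $P$ joining $u$ to $v$ in order to get the first inequality $d_G(u,v)\le\frac{W}{N}d_{\mathbb{Z}^2}(u,v)$; the second inequality is then Observation~\ref{L vs Z^2}, and the third is the numerical fact $\frac{WM}{N}<\frac{1}{10M}$, valid since $N>10WM^2$ (the final inequality being strict once we note $u\neq v$, so $d_L(u,v)>0$).

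First I would normalise the set-up: by hypothesis $u$ and $v$ lie on a common axis-parallel line $\ell$ whose fixed coordinate is a multiple of $N$, and by transposing we may assume $\ell=\mathbb{Z}\times\{Ny_0\}$. The key preliminary observation is that every special vertex with second coordinate $Ny_0$ lies on $\ell$ as a subdivision point of a horizontal new edge at that height: reading off Construction~\ref{reduction pattern}, an endpoint $(Nx,\,Ny+iN/w)$ of a vertical new edge has second coordinate a multiple of $N$ only for $i\in\{0,w\}$, i.e.\ only at a point of $(N\mathbb{Z})^2$, which is itself an endpoint of a horizontal new edge. Concatenating, over all integers $x$, the $w_x\le W$ equal new edges that Construction~\ref{reduction pattern} places along $[Nx,N(x{+}1)]\times\{Ny_0\}$ — where $w_x$ is the weight of $(x,y_0)(x{+}1,y_0)$ in $H$ — yields a bi-infinite straight new path covering $\ell$, and I would take $P$ to be its finite sub-path from $u$ to $v$.

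I would then bound the number of edges of $P$, which bounds $d_G(u,v)$. Assuming $u$ lies left of $v$, decompose $P$ at the vertices of $(N\mathbb{Z})^2$ on it into a (possibly partial) short straight path from $u$, a block of complete short straight paths, and a (possibly partial) short straight path to $v$. Each edge of $P$ has $\mathbb{Z}^2$-length $N/w_x\ge N/W$, so each piece uses at most $W/N$ edges per unit of $\mathbb{Z}^2$-distance it covers; since $P$ is straight these distances add up (Observation~\ref{tight triangle inequality}, using Observation~\ref{inheritance}) to $d_{\mathbb{Z}^2}(u,v)$, giving $P$ at most $\frac{W}{N}d_{\mathbb{Z}^2}(u,v)$ edges, hence the first inequality.

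The main obstacle — modest as it is — is exactly this bookkeeping: confirming that the partial terminal pieces, not merely the complete short straight paths, still respect the $W/N$-edges-per-$\mathbb{Z}^2$-distance bound, and that no special vertex on $\ell$ is stranded off the horizontal new path. Both follow directly from the explicit form of Construction~\ref{reduction pattern}, using that $iN/w\in N\mathbb{Z}$ forces $i\in\{0,w\}$, so no idea beyond the proof of Lemma~\ref{rough esti NZ^2} is required.
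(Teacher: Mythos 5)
Your proposal is correct and matches the paper's proof: the paper likewise exhibits a straight new path between $u$ and $v$ (guaranteed by Construction~\ref{reduction pattern}) and deduces the first inequality from Observations~\ref{long new edges} and~\ref{tight triangle inequality}, with the remaining inequalities following from Observation~\ref{L vs Z^2} and $N>10WM^2$. Your extra bookkeeping (checking that a special vertex on $\ell$ arising from a vertical new edge must lie in $(N\mathbb{Z})^2$, and that the terminal partial pieces still satisfy the $W/N$ edges-per-unit bound) is exactly the detail the paper leaves implicit.
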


\begin{proof}
    A straight and new path exists between $u, v$ in $G$, by virtue of Construction~\ref{reduction pattern}. The first inequality now follows from Observations~\ref{long new edges} and \ref{tight triangle inequality}. Again, the second inequality follows from Observation~\ref{L vs Z^2}, and the third inequality follows from $N > 10WM^2$. 
\end{proof}

\begin{lem} \label{special vert geo}
    A geodesic in $G$ between special vertices only passes through special vertices. In other words, it only uses new edges. 
\end{lem}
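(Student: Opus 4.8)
The plan is to reduce the statement to a single clean fact, which I will call the \emph{Core Claim}: for any two distinct special vertices $a,b$ one has $d_G(a,b)<d_L(a,b)$. Granting this, let $\gamma$ be a geodesic in $G$ between special vertices, suppose for contradiction that it traverses an old edge, and let $w_i\cdots w_j$ with $j>i$ be a maximal block of consecutive old edges of $\gamma$. When the preceding edge $w_{i-1}w_i$ exists it is new, so $w_i$ is an endpoint of a new edge and is special; if $i=0$ then $w_i$ is an endpoint of $\gamma$ and is special by hypothesis. The same holds for $w_j$, and $w_i\neq w_j$. As a subpath of a geodesic in $G$ the block satisfies $d_G(w_i,w_j)=j-i$, and as a path in $L$ it gives $d_L(w_i,w_j)\le j-i$; since $L\subseteq G$ forces $d_G\le d_L$, we obtain $d_G(w_i,w_j)=d_L(w_i,w_j)$, contradicting the Core Claim. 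Hence $\gamma$ uses only new edges, and since every new edge has special endpoints, every vertex of $\gamma$ is special.

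To prove the Core Claim, write $D:=d_{\mathbb{Z}^2}(a,b)\ge 1$; by Observation~\ref{L vs Z^2}, $d_L(a,b)\ge D/M$, so it is enough to exhibit a path in $G$ from $a$ to $b$ of length less than $D/M$. Every special vertex lies on a \emph{grid line} $\{x\in N\mathbb{Z}\}$ or $\{y\in N\mathbb{Z}\}$, being an endpoint of a new edge. For $a\notin(N\mathbb{Z})^2$ let $a'$ be the nearer of the two $(N\mathbb{Z})^2$-endpoints of the unique SSP whose interior contains $a$, so $d_{\mathbb{Z}^2}(a,a')\le N/2$; as $a$ and $a'$ share a grid line, Lemma~\ref{rough esti line} gives $d_G(a,a')\le \tfrac{W}{N}d_{\mathbb{Z}^2}(a,a')$. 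Set $a'=a$ if $a\in(N\mathbb{Z})^2$, and define $b'$ symmetrically. If $a$ and $b$ share a grid line, Lemma~\ref{rough esti line} already gives $d_G(a,b)\le \tfrac{W}{N}D<\tfrac{D}{M}$ since $N>WM$, and we are done.

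Otherwise $a$ and $b$ lie on distinct grid lines. If these are parallel, or if one of $a,b$ lies in $(N\mathbb{Z})^2$, then $D\ge N$, since distinct parallel grid lines, and distinct rows or columns of $(N\mathbb{Z})^2$, are at $\mathbb{Z}^2$-distance at least $N$. Routing $a\to a'\to b'\to b$ and combining Lemmas~\ref{rough esti line} and~\ref{rough esti NZ^2} with $d_{\mathbb{Z}^2}(a',b')\le N+D$ gives $d_G(a,b)\le 2W+\tfrac{WD}{N}$, and $D\ge N$ together with $N>10WM^2$ makes this less than $D/M$. The remaining case, where $a$ and $b$ lie on perpendicular grid lines with neither in $(N\mathbb{Z})^2$, is the main obstacle: here $D$ can be as small as about $2N/W$, so it is not obvious that the detour through $(N\mathbb{Z})^2$, which costs up to about $W$ extra steps at each end, is affordable. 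The resolution is the geometric observation that when $a$ lies in the interior of an SSP, the perpendicular grid line through $b$ either passes through an endpoint of that SSP or lies a full $N$-block away, whence $d_{\mathbb{Z}^2}(a,b)\ge d_{\mathbb{Z}^2}(a,a')$ and, symmetrically, $d_{\mathbb{Z}^2}(a,b)\ge d_{\mathbb{Z}^2}(b,b')$. Then $d_{\mathbb{Z}^2}(a',b')\le 3D$, so routing $a\to a'\to b'\to b$ yields $d_G(a,b)\le \tfrac{W}{N}\bigl(d_{\mathbb{Z}^2}(a,a')+d_{\mathbb{Z}^2}(a',b')+d_{\mathbb{Z}^2}(b,b')\bigr)\le \tfrac{5WD}{N}<\tfrac{D}{M}$, using $N>5WM$. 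This proves the Core Claim and hence the lemma; apart from this perpendicular case the argument is routine bookkeeping with the distance estimates already established.
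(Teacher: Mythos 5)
Your argument is correct, and it reaches the same contradiction as the paper's proof (a maximal old-edge segment of a geodesic between two distinct special vertices would force $d_G=d_L$ for that pair, which a cheaper route through the new-edge network rules out), but the key estimate is organised differently. The paper does not prove your Core Claim unconditionally: it first \emph{bootstraps} from the hypothesis $d_G(u,v)=d_L(u,v)$, using $N$-parents and Lemmas~\ref{rough esti NZ^2} and~\ref{rough esti line} to deduce $d_{\mathbb{Z}^2}(u,v)<N/2$, and only then splits into two easy configurations (shared grid line, or a shared $N$-parent $w$ with $d_{\mathbb{Z}^2}(u,v)=d_{\mathbb{Z}^2}(u,w)+d_{\mathbb{Z}^2}(w,v)$), each killed by Lemma~\ref{rough esti line}. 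You instead prove the stronger unconditional statement $d_G(a,b)<d_{\mathbb{Z}^2}(a,b)/M\leq d_L(a,b)$ for \emph{all} pairs of distinct special vertices, which forces you to handle arbitrary separations and, in particular, the perpendicular-grid-line case where $d_{\mathbb{Z}^2}(a,b)$ can be as small as $2N/W$; your observation that the perpendicular grid line through $b$ meets the horizontal axis no closer to $a$ than the nearest SSP endpoint $a'$ (so $d_{\mathbb{Z}^2}(a,b)\geq d_{\mathbb{Z}^2}(a,a')$, and symmetrically for $b$) is exactly the extra geometric input needed, and I have checked it is valid, as are your numerical bounds ($2W+WD/N<D/M$ when $D\geq N$, and $5WD/N<D/M$ always, both from $N>10WM^2$). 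The trade-off: the paper's bootstrap keeps the geometry trivial at the cost of a slightly convoluted logical structure, while your version isolates a clean, reusable quantitative fact about special vertices at the cost of one more case and one genuinely geometric observation.
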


\begin{proof}
    For the sake of contradiction, let $\gamma$ be a geodesic between special vertices which passes through a vertex $w'$ which is not special. Let $u, v$ be the nearest special vertices on $\gamma$ to either side of $w'$. We shall consider the sub-geodesic $\gamma'$ of $\gamma$ between $u$ and $v$. Note that the only special vertices that $\gamma'$ passes through are its endpoints $u$ and $v$, and $\gamma'$ only uses old edges. In particular, $d_G(u, v) = d_L(u, v)$. 
    
    Take any $N$-parents $u', v'$ of $u, v$, respectively. Since $u$ and $v$ are special vertices, we have 
    \begin{equation*}
        d_{\mathbb{Z}^2}(u, u'), d_{\mathbb{Z}^2}(v, v') < N. 
    \end{equation*}
    By Lemma~\ref{rough esti NZ^2}, we have
    \begin{equation*}
        d_G(u', v') < \frac{1}{10M}d_L(u', v'), 
    \end{equation*}
    and by Lemma~\ref{rough esti line}, we have
    \begin{align*}
        d_G(u, u')\leq \frac{W}{N}d_{\mathbb{Z}^2}(u, u') < W, \\
        d_G(v, v')\leq \frac{W}{N}d_{\mathbb{Z}^2}(v', v) < W. 
    \end{align*}
    By the triangle inequality, we have
    \begin{align*}
        d_G(u, v)&\leq d_G(u, u') + d_G(u', v') + d_G(v', v)\\
        & < 2W + \frac{1}{10M}d_L(u', v')\\
        &\leq 2W + \frac{1}{10M}(d_L(u', u) + d_L(u, v) + d_L(v, v'))\\
        &\leq 2W + \frac{1}{10M}(d_{\mathbb{Z}^2}(u', u) + d_L(u, v) + d_{\mathbb{Z}^2}(v, v'))\\
        & < 2W + \frac{N}{5M} + \frac{1}{10M}d_L(u, v). 
    \end{align*}
    Given that $d_G(u, v) = d_L(u, v)$, it follows that $d_G(u, v) < \frac{10M}{10M - 1}\left(2W + \frac{N}{5M}\right) < \frac{N}{2M}$. Therefore, by Observation~\ref{L vs Z^2},
    \begin{equation*}
        d_{\mathbb{Z}^2}(u, v) \leq Md_L(u, v) < N/2. 
    \end{equation*}
    As $u, v$ are special vertices, they both have a coordinate which is a multiple of $N$. Therefore, $u, v$ either satisfy the assumptions of Lemma~\ref{rough esti line} or $u, v$ share an $N$-parent $w$ with $d_{\mathbb{Z}^2}(u, v) = d_{\mathbb{Z}^2}(u, w) + d_{\mathbb{Z}^2}(w, v)$. In the first case, by Lemma~\ref{rough esti line}, we have
    \begin{equation*}
        d_G(u, v)\leq \frac{1}{10}d_L(u, v) < d_L(u, v) = d_G(u, v), 
    \end{equation*}
    a contradiction. In the second case, applying Lemma~\ref{rough esti line} to the pairs $u, w$ and $w, v$, and Observation~\ref{L vs Z^2} to the pair $u, v$ yields
    \begin{align*}
        d_G(u, v)\leq d_G(u, w) + d_G(w, v)\leq \frac{W}{N}d_{\mathbb{Z}^2}(u, w) + \frac{W}{N}d_{\mathbb{Z}^2}(w, v) \\
        = \frac{W}{N}d_{\mathbb{Z}^2}(u, v) \leq \frac{WM}{N}d_L(u, v) < d_L(u, v) = d_G(u, v), 
    \end{align*}
    again, a contradiction. 
\end{proof}

Noting that vertices in $(N\mathbb{Z})^2$ are special, Lemma \ref{special vert geo} has the following corollary. 

\begin{cor} \label{Nz2 geos}
    Any geodesic in $G$ between two vertices $u, v\in (N\mathbb{Z})^2$ is a concatenation of SSPs. 
\end{cor}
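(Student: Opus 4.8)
\textbf{Proof proposal for Corollary~\ref{Nz2 geos}.}

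The plan is to derive this directly from Lemma~\ref{special vert geo} together with a structural analysis of new paths between vertices of $(N\mathbb{Z})^2$. First I would observe that since every vertex of $(N\mathbb{Z})^2$ is special, Lemma~\ref{special vert geo} guarantees that any geodesic $\gamma$ in $G$ between $u, v \in (N\mathbb{Z})^2$ uses only new edges, hence passes only through special vertices. So the task reduces to understanding the global shape of an arbitrary new path and showing that, when it is a geodesic, it decomposes into SSPs. Recall from Construction~\ref{reduction pattern} that the new edges come in families: each family emulates a single weighted edge $(x,y)(x+1,y)$ or $(x,y)(x,y+1)$ of $H$, subdividing the segment between $Nv_1$ and $Nv_2$ (for adjacent $v_1,v_2\in\mathbb{Z}^2$) into $w$ equal new edges, where $w$ is the $H$-weight. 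In particular every new edge lies on a unique horizontal or vertical segment joining two $\mathbb{Z}^2$-adjacent points of $(N\mathbb{Z})^2$, and the special vertices on that segment are exactly the $w+1$ subdivision points.

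Next I would argue that a geodesic $\gamma$ between $u$ and $v$ cannot ``turn around'' within a single such segment, nor leave a segment except at one of its $(N\mathbb{Z})^2$-endpoints. The key point is that the special vertices strictly interior to a segment (the subdivision points not in $(N\mathbb{Z})^2$) have degree exactly $2$ in the new-edge subgraph: the only new edges at such a vertex are the two adjacent pieces of its own segment. Indeed, by Observation~\ref{long new edges} any new edge has $\mathbb{Z}^2$-length at least $N/W$, so a new edge incident to an interior subdivision point would have to connect it to another point at $\mathbb{Z}^2$-distance $\ge N/W$ along a coordinate axis, but such points are either outside the segment or are the endpoints; a short check of the grid positions $Nx + iN/w$ rules out any cross-family coincidence for $N$ a common multiple of the weights. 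Consequently, once $\gamma$ enters the interior of a segment it is forced to continue monotonically to one of the two endpoints in $(N\mathbb{Z})^2$; i.e., between consecutive visits to $(N\mathbb{Z})^2$, $\gamma$ traverses a full segment, which is precisely a straight new path between $\mathbb{Z}^2$-adjacent points of $(N\mathbb{Z})^2$. By Observation~\ref{inheritance}, the SSP is the strictly shortest straight new path between such points, so a geodesic must use the SSP rather than any longer traversal (e.g., an SSP that doubles back). Writing $\gamma = \gamma_0\gamma_1\cdots$ where the $\gamma_i$ are these segment-traversals, each $\gamma_i$ is an SSP, giving the desired concatenation.

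The main obstacle I anticipate is the bookkeeping in step two: making rigorous that new edges from different edge-families of Construction~\ref{reduction pattern} never share an interior endpoint, and that $\gamma$ cannot use new edges to jump between segments at a non-$(N\mathbb{Z})^2$ point. This is essentially a coordinate computation — the horizontal new edges live at $y$-coordinates in $N\mathbb{Z}$ and $x$-coordinates of the form $Nx + iN/w$, and two such points coincide only when the segments coincide — but it must be stated carefully. A secondary subtlety is ruling out a geodesic that traverses part of a segment, returns to its $(N\mathbb{Z})^2$-endpoint, and backtracks; this is handled by the strict-minimality clause of Observation~\ref{inheritance} (any straight new path properly containing a ``there-and-back'' is strictly longer than the SSP it could be replaced by), so such a path is never a geodesic. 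Once these local facts are pinned down, the decomposition into SSPs is immediate.
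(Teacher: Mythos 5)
Your proposal is correct and follows essentially the same route as the paper: the paper derives the corollary immediately from Lemma~\ref{special vert geo} (vertices of $(N\mathbb{Z})^2$ are special, so the geodesic lives in the new-edge subgraph), leaving the remaining combinatorial step implicit, and your degree-two argument for interior subdivision vertices simply makes that implicit step explicit. One tiny quibble: the ``there-and-back'' case is already excluded because a geodesic is a path and cannot repeat a vertex, so you do not need to appeal to the strict-minimality clause of Observation~\ref{inheritance} (which compares straight paths only).
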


Combining Corollary~\ref{Nz2 geos} with Observation~\ref{inheritance}, we obtain a one-to-one correspondence between geodesics in $H$ and geodesics in $G$ with endpoints in $(N\mathbb{Z})^2$. 

\begin{lem} \label{inheritance lem}
    For $u, v\in \mathbb{Z}^2$, we have a bijection between geodesics in $H$ between $u, v$ and geodesics in $G$ between $Nu, Nv$, sending a geodesic $u=w_0\dots w_k = v$ in $H$ to the geodesic in $G$ formed by concatenating the SSPs between $Nw_i$ and $Nw_{i + 1}$ in order. In particular, 
    \begin{equation*}
        d_H(u, v) = d_G(Nu, Nv). 
    \end{equation*}
\end{lem}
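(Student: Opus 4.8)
The plan is to establish the bijection by showing each direction maps geodesics to geodesics, then conclude the distance equality from the existence of at least one geodesic on each side. The key structural input is Corollary~\ref{Nz2 geos}: any geodesic in $G$ between vertices of $(N\mathbb{Z})^2$ is a concatenation of SSPs, so it is determined by the sequence of $(N\mathbb{Z})^2$-vertices it visits, which (after dividing by $N$) is precisely a walk in $\mathbb{Z}^2$ through consecutive $H$-adjacent vertices, i.e.\ a path in $H$.

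First I would set up the map $\Phi$ from $H$-paths to $G$-paths: given a path $u = w_0 w_1 \dots w_k = v$ in $\mathbb{Z}^2$ using edges of $H$, replace each edge $w_{i-1}w_i$ by the SSP between $Nw_{i-1}$ and $Nw_i$ and concatenate. By Observation~\ref{inheritance}, each SSP has length (cost) equal to the $H$-weight of the corresponding edge, so the total cost of $\Phi(\gamma)$ in $G$ equals the cost of $\gamma$ in $H$; hence $d_G(Nu,Nv) \le d_H(u,v)$. Conversely, I would define $\Psi$ on geodesics in $G$ between $Nu$ and $Nv$: by Corollary~\ref{Nz2 geos} such a geodesic is a concatenation of SSPs, say between $Nz_0, Nz_1, \dots, Nz_m$ with $z_0 = u$, $z_m = v$ and consecutive $z_{j-1}, z_j$ adjacent in $\mathbb{Z}^2$; since an SSP only exists (by Construction~\ref{reduction pattern}) when the corresponding $\mathbb{Z}^2$-edge is present in $H$, the sequence $z_0 \dots z_m$ is a path in $H$, and by Observation~\ref{inheritance} again its $H$-cost equals the $G$-cost of the geodesic. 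So $d_H(u,v) \le d_G(Nu,Nv)$, giving the distance equality, and then a geodesic on either side is carried to a path of equal cost on the other, which must therefore be a geodesic; $\Phi$ and $\Psi$ are mutually inverse on geodesics because an SSP between two fixed endpoints in $(N\mathbb{Z})^2$ is unique, so the decomposition into SSPs and the reconstruction from the $(N\mathbb{Z})^2$-vertex sequence are inverse operations.

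The main obstacle — though it is largely discharged by the lemmas already proved — is ensuring that a $G$-geodesic between $(N\mathbb{Z})^2$-vertices genuinely decomposes into SSPs with no ``wasted'' detours and no ambiguity: this is exactly the content of Lemma~\ref{special vert geo} (geodesics between special vertices use only new edges) together with Corollary~\ref{Nz2 geos}, and the uniqueness of the SSP between two given endpoints (from the Definition of SSP) is what makes $\Phi$ and $\Psi$ well-defined inverses rather than merely cost-preserving correspondences. One point to check carefully is that $\Phi(\gamma)$, for a geodesic $\gamma$ in $H$, is not just a path of the right cost but actually a geodesic in $G$; this follows once the distance equality $d_H = d_G(N\cdot, N\cdot)$ is in hand, since $\mathrm{cost}_G(\Phi(\gamma)) = \mathrm{cost}_H(\gamma) = d_H(u,v) = d_G(Nu,Nv)$. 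I would present the argument by first proving both inequalities to get the distance equality, then observing that the cost-preserving maps $\Phi$ and $\Psi$ restrict to mutually inverse bijections between the geodesic sets.
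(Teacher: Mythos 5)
Your proposal is correct and follows essentially the same route as the paper, which derives the lemma directly by combining Corollary~\ref{Nz2 geos} (every $G$-geodesic between $(N\mathbb{Z})^2$-vertices decomposes uniquely into SSPs) with Observation~\ref{inheritance} (each SSP has cost equal to the $H$-weight of the corresponding edge); your two-inequality argument and the verification that the cost-preserving maps are mutually inverse are just the explicit unpacking of that combination.
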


Having studied the structure of geodesics on special vertices, we now show the relevance of special vertices to general geodesics. 

\begin{lem} \label{geos pass through special verts}
    A geodesic in $G$ of length at least $5N$ must pass through a special vertex. 
\end{lem}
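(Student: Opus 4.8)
The plan is to argue by contradiction. Suppose $\gamma$ is a geodesic in $G$ with endpoints $p, q$ and (unweighted) length $\ell \geq 5N$, and suppose $\gamma$ passes through no special vertex. Since every new edge has two special endpoints, $\gamma$ then uses only old edges; as $L$ is a subgraph of $G$, this makes $\gamma$ literally a path of $L$ of length $\ell$, so $d_L(p,q) = d_G(p,q) = \ell$. To reach a contradiction I would construct a strictly shorter walk from $p$ to $q$ in $G$ by detouring through the sublattice $(N\mathbb{Z})^2$, on which Lemma~\ref{rough esti NZ^2} makes $G$-distances cheap.

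Concretely, I would fix $N$-parents $p', q' \in (N\mathbb{Z})^2$ of $p$ and $q$. By the definition of $N$-parent, $d_{\mathbb{Z}^2}(p, p') < 2N$ and $d_{\mathbb{Z}^2}(q, q') < 2N$, and since $d_G \leq d_L \leq d_{\mathbb{Z}^2}$ (Observation~\ref{L vs Z^2} together with $L \subseteq G$), the same bounds hold for $d_G$. The triangle inequality in $\mathbb{Z}^2$ gives $d_{\mathbb{Z}^2}(p', q') < d_{\mathbb{Z}^2}(p, q) + 4N$, so Lemma~\ref{rough esti NZ^2} yields $d_G(p', q') \leq \tfrac{W}{N} d_{\mathbb{Z}^2}(p', q') < \tfrac{W}{N} d_{\mathbb{Z}^2}(p, q) + 4W$. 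Routing $p \to p' \to q' \to q$ and using $d_{\mathbb{Z}^2}(p, q) \leq M\, d_L(p, q) = M\ell$ (Observation~\ref{L vs Z^2}), I get
\[
d_G(p, q) \;<\; 4N + 4W + \frac{WM}{N}\,\ell .
\]

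It then remains to check that this is $< \ell$, which is the only place the numerical hypotheses enter: from $N > 10WM^2$ (Construction~\ref{reduction pattern}) one has $\tfrac{WM}{N} < \tfrac{1}{10}$ and $W < \tfrac{N}{10}$, and from $\ell \geq 5N$ one has $N \leq \tfrac{\ell}{5}$; substituting bounds the right-hand side by a fixed fraction of $\ell$ strictly below $1$ (e.g. $\tfrac{49}{50}\ell$), contradicting $d_G(p,q) = \ell$. I do not expect any real obstacle here — it is bookkeeping with the constants; the one point to watch is that the threshold $5N$ be large enough to absorb the additive $4N + 4W$ against the $\bigl(1 - \tfrac{WM}{N}\bigr)$ factor, which is precisely what $N > 10WM^2$ guarantees. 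This lemma is a preliminary localisation estimate, whose purpose later is to let us restrict attention to geodesics of $G$ that do meet the special vertices, where the structural results of this section apply.
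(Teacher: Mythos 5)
Your proposal is correct and follows essentially the same route as the paper: rule out special vertices so the geodesic lives in $L$, detour through $N$-parents $p', q' \in (N\mathbb{Z})^2$ using Lemma~\ref{rough esti NZ^2} for the middle leg and the crude bound $d_G \leq d_{\mathbb{Z}^2} < 2N$ for the end legs, and then check that $N > 10WM^2$ and $\ell \geq 5N$ force the detour to be strictly shorter. The constant bookkeeping you sketch ($< \tfrac{49}{50}\ell$) goes through.
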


\begin{proof}
    Otherwise, let $\gamma$ be a geodesic in $G$ between vertices $u, v$ of length $\ell\geq 5N$ not passing through any special vertices. Therefore, $\gamma$ uses only old edges and $d_L(u, v) = d_G(u, v) = \ell$. Take any $N$-parent $u'$ of $u$ and any $N$-parent $v$ of $v$. We have $d_{\mathbb{Z}^2}(u, u'), d_{\mathbb{Z}^2}(v, v') < 2N$. By the triangle inequality, we have
    \begin{equation*}
        d_L(u', v')\leq d_L(u, u') + d_L(u', v') + d_L(v', v) < \ell + 4N. 
    \end{equation*}
    Therefore, by Lemma~\ref{rough esti NZ^2}, we have
    \begin{equation*}
        d_G(u', v') < \frac{1}{10M}(\ell + 4N). 
    \end{equation*}
    Hence, by the triangle inequality, 
    \begin{equation*}
        \ell = d_G(u, v)\leq d_G(u, u') + d_G(u', v') + d_G(v', v) < 4N + \frac{1}{10M}(\ell + 4N), 
    \end{equation*}
    which contradicts the assumption that $\ell \geq 5N$. 
\end{proof}

We now prove that any sufficiently long geodesic in $G$ is composed of a geodesic between vertices in $(N\mathbb{Z})^2$ with two short paths, one on each end. 

\begin{lem} \label{all roads pass through Nz2}
    Let $u, v\in\mathbb{Z}^2$ be vertices with $d_G(u, v)\geq 11N$. Any geodesic $\gamma$ between $u, v$ must pass through $u', v'\in (N\mathbb{Z})^2$ with $d_G(u, u'), d_G(v, v') < 6N$. 
\end{lem}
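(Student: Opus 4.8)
The plan is to use Lemma~\ref{geos pass through special verts} to locate a special vertex on $\gamma$, then use Lemma~\ref{special vert geo} to show that once a geodesic enters the set of special vertices it stays there, and finally to identify vertices of $(N\mathbb{Z})^2$ near the two ends of the resulting sub-geodesic of special vertices.

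First I would let $\gamma$ be a geodesic between $u, v$ with $d_G(u, v) \geq 11N$. By Lemma~\ref{geos pass through special verts}, since $\gamma$ has length at least $11N \geq 5N$, it passes through some special vertex $w$. Let $u'$ be the first special vertex on $\gamma$ (travelling from $u$) and $v'$ the last special vertex on $\gamma$ (travelling from $v$); these exist since $w$ is special, and possibly $u' = v' = w$. By Lemma~\ref{special vert geo}, the sub-geodesic of $\gamma$ between $u'$ and $v'$ passes only through special vertices. The portion of $\gamma$ from $u$ to $u'$ uses only old edges (as $u'$ is the \emph{first} special vertex), so $d_G(u, u') = d_L(u, u')$, and similarly $d_G(v', v) = d_L(v', v)$.

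Next I would bound $d_G(u, u')$ and $d_G(v, v')$. The key observation is that the sub-geodesic from $u$ to $u'$ is itself a geodesic in $G$ with no special vertex except its endpoint $u'$; in particular it contains no special vertex in its interior, so by Lemma~\ref{geos pass through special verts} its length is less than $5N$ --- hence $d_G(u, u') < 5N$ and likewise $d_G(v, v') < 5N$. So far $u', v'$ are special but not necessarily in $(N\mathbb{Z})^2$. To finish, I would replace $u'$ by a suitable $N$-parent. Being special, $u'$ has a coordinate that is a multiple of $N$, and it lies on an SSP between two adjacent vertices of $(N\mathbb{Z})^2$; one of those two endpoints, call it $u''$, satisfies $d_{\mathbb{Z}^2}(u', u'') < N$ and shares a coordinate (a multiple of $N$) with $u'$, so Lemma~\ref{rough esti line} gives $d_G(u', u'') \leq \frac{W}{N} d_{\mathbb{Z}^2}(u', u'') < W < N$. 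Then $d_G(u, u'') \leq d_G(u, u') + d_G(u', u'') < 5N + N = 6N$, and analogously for a vertex $v'' \in (N\mathbb{Z})^2$ with $d_G(v, v'') < 6N$. Finally, since $u'$ and $u''$ lie on a common SSP and $u''$ is the endpoint of that SSP closest to $u'$, and since by Lemma~\ref{special vert geo} the geodesic $\gamma$ between $u'$ and $v'$ is a concatenation of SSPs (this is essentially Corollary~\ref{Nz2 geos} applied once we know the endpoints are handled), we can extend $\gamma$ slightly at each end so that it passes through $u''$ and $v''$; more carefully, one checks that $u''$ lies on a geodesic from $u$ to $v$ because $d_G(u, u'') + d_G(u'', v) = d_G(u, v)$, using that prepending the SSP segment from $u''$ to $u'$ to $\gamma$ costs exactly $d_G(u'', u')$ and is still a shortest path. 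Taking $u', v'$ in the statement to be these $u'', v''$ completes the proof.

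The main obstacle I anticipate is the last step: arguing that one can genuinely route a geodesic through a vertex of $(N\mathbb{Z})^2$ near each end, rather than merely finding a nearby vertex of $(N\mathbb{Z})^2$. The subtlety is that $u''$ need not lie on the particular geodesic $\gamma$ we started with, so one must either show that the concatenation (SSP from $u''$ to $u'$) followed by ($\gamma$ restricted from $u'$ to $v$) is again a geodesic --- which requires $d_G(u'', v) = d_G(u'', u') + d_G(u', v)$, i.e. that going ``outward'' to $u''$ and back through $u'$ is not wasteful --- or reinterpret the statement as asserting the existence of \emph{some} geodesic through such vertices. Since $u''$ and $u'$ lie on the same SSP with $u''$ an endpoint, and SSPs are shortest straight paths (Observation~\ref{inheritance}), the segment of that SSP between $u''$ and $u'$ together with $\gamma$ should indeed form a geodesic provided $u''$ is ``on the far side'' of $u'$ from $v$; if instead $u''$ is between the two SSP-endpoints in the direction of $v$, one simply picks the other endpoint of the SSP, or observes $d_{\mathbb{Z}^2}(u', u'') < N$ still gives the bound $6N$ directly via the triangle inequality once we only demand that $\gamma$ \emph{can be modified} near its ends. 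I would handle this by stating the conclusion as the existence of a (possibly modified) geodesic, which is the form in which it will be used in Section~\ref{proof}.
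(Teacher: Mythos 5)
Your overall skeleton matches the paper's: use Lemma~\ref{geos pass through special verts} to find special vertices within distance roughly $5N$ of each end of $\gamma$, and Lemma~\ref{special vert geo} to see that the middle segment uses only new edges. (Minor point: you cannot apply Lemma~\ref{geos pass through special verts} to the sub-geodesic ending \emph{at} $u'$, since that sub-geodesic does contain a special vertex, namely $u'$; apply it instead to the segment ending just before $u'$, which gives $d_G(u,u')\leq 5N$ --- still enough.)

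The genuine gap is in the final step, and you have correctly identified where it is but not resolved it. The lemma must produce vertices of $(N\mathbb{Z})^2$ lying \emph{on the given geodesic} $\gamma$: in the proof of Proposition~\ref{sec2 reduction prop} it is applied to an arbitrary geodesic through a prescribed vertex $w$, and one then needs $w$ to lie on the segment of that same $\gamma$ between $u'$ and $v'$. So your fallback of ``stating the conclusion as the existence of a (possibly modified) geodesic'' would not support the application. Your other patch --- asserting $d_G(u,u'')+d_G(u'',v)=d_G(u,v)$ for an SSP-endpoint $u''$ chosen off $\gamma$ --- is not justified and is false in general: detouring to a vertex to the side of $\gamma$ costs strictly more. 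The correct move, which is what the paper does, is to stay on $\gamma$: since $d_G(u,v)\geq 11N$ and the first and last special vertices $u', v'$ satisfy $d_G(u,u'), d_G(v,v')\leq 5N$, the segment of $\gamma$ between them has length at least $N\geq W$ and uses only new edges. A special vertex not in $(N\mathbb{Z})^2$ is an interior subdivision point of a unique SSP, and its only new edges are the adjacent edges of that SSP; hence $\gamma$, continuing from $u'$ along new edges without repeating vertices, must reach an endpoint of that SSP --- a vertex of $(N\mathbb{Z})^2$ --- within at most $W$ steps. That vertex lies on $\gamma$ and is within $5N+W<6N$ of $u$, and symmetrically at the other end. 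Note that this is also the only place the hypothesis $d_G(u,v)\geq 11N$ (rather than just $\geq 5N$) is actually needed.
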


\begin{proof}
    By Lemma~\ref{geos pass through special verts}, any segment of $\gamma$ of length at least $5N$ contains a special vertex. Therefore, $\gamma$ passes through special vertices $u'', v''$ with $d_G(u, u''), d_G(v, v'') \leq 5N$. 
    
    By Lemma~\ref{special vert geo}, the segment of $\gamma$ between $u'', v''$ only uses new edges. As this segment has length at least $N\geq W$, it must pass through $u', v'\in (N\mathbb{Z})^2$ with $d_G(u'', u'), d_G(v'', v') < W$. Therefore, by the triangle inequality, $d_G(u, u'), d_G(v, v') < 6N$. 
\end{proof}

We are now ready to give a proof of Proposition \ref{sec2 reduction prop}. 

\begin{proof}[Proof of Proposition~\ref{sec2 reduction prop}]
    As remarked after Construction~\ref{reduction pattern}, it suffices to show that $G$ has bounded geodesic slices. Let geodesics slices in $H$ have sizes bounded above by $B$. We shall show that geodesic slices of $G$ have sizes bounded above by $10^6N^{11}B$. Note that $N, B$ are both constants dependent only on the graph $L$ and the weighted graph $H$. 

    Fix any pair $u, v\in \mathbb{Z}^2$ of vertices. If $d_G(u, v) < 11N$, then any vertex $w$ on a geodesic from $u$ to $v$ must have $d_G(u, w)\leq d_G(u, v) < 11N$. Therefore, $d_{\mathbb{Z}^2}(u, w)\leq Nd_G(u, w) < 11 N^2$. Therefore, there are at most $5(11N^2)^2\leq 1000N^4$ such vertices $w$. In particular, any geodesic slice $S(u, v, k)$ has size at most $1000N^4$. 

    Therefore, we now assume that $d_G(u, v)\geq 11N$. Fix any non-negative integer $k\leq d_G(u, v)$. Let $U = (N\mathbb{Z})^2\cap B_G(u, 6N)\subseteq B_{\mathbb{Z}^2}(u, 6N^2)$ and $V = (N\mathbb{Z})^2\cap B_G(v, 6N)\subseteq B_{\mathbb{Z}^2}(v, 6N^2)$. We have $|U|, |V|\leq 5(6N^2)^2\leq 200N^4$. For $u'\in U, v'\in V$ and integer $k'\in K_{u'} := [k - d_G(u, u'), k - d_G(u, u') + W] $, let $T_{u', v', k'} = \left\{Nw\middle|w\in S_H\left(\frac{1}{N}u', \frac{1}{N}v', k'\right)\right\}$. Let 
    \begin{equation*}
        T = \bigcup_{\substack{u'\in U, v'\in V\\k'\in K_{u'}}}T_{u', v', k'}. 
    \end{equation*}
    Note that we have $|T|\leq B|U||V|(W + 1)\leq 10^5N^9B$. Let $R = \left(\bigcup_{w\in T} B_{\mathbb{Z}^2}(w, N) \right)\cup B_{\mathbb{Z}^2}(u, 6N^2)\cup B_{\mathbb{Z}^2}(v, 6N^2)$. We have
    \begin{equation*}
        |R|\leq 5N^2|T| + 1000N^4\leq 10^6N^{11}B. 
    \end{equation*}
    We claim that $S(u, v, k)\subseteq R$. If not, let $w\in S(u, v, k)\backslash R$ lie on a geodesic $\gamma$ from $u$ to $v$. By Lemma \ref{all roads pass through Nz2}, as $d_G(u, v)\geq 11N$, $\gamma$ passes through $u', v'\in (N\mathbb{Z})^2$ with $d_G(u, u'), d_G(v, v') < 6N$. Since $w\notin R$, $d_{\mathbb{Z}^2}(u, w), d_{\mathbb{Z}^2}(v, w) > 6N^2$, and hence $d_G(u, w), d_G(v, w) > 6N$. Therefore, $w$ lies on the segment $\gamma'$ of $\gamma$ between $u', v'$. By Lemma \ref{inheritance lem}, there exists a geodesic $\frac{1}{N}u' = \Tilde{w}_0\dots \Tilde{w}_l = \frac{1}{N}v'$ in $H$, for which $\gamma'$ is the concatenation of the SSPs between $N\Tilde{w}_i$ and $N\Tilde{w}_{i + 1}$, in order. Let $w$ lie on the SSP between $N\Tilde{w}_i$ and $N\Tilde{w}_{i + 1}$. As 
    \begin{equation*}
        0\leq d_G(w, N\Tilde{w}_{i + 1})\leq d_G(N\Tilde{w}_i, N\Tilde{w}_{i + 1}) = d_H(\Tilde{w}_i, \Tilde{w}_{i + 1})\leq W, 
    \end{equation*}
    we have
    \begin{equation*}
        d_H(\Tilde{w}_0, \Tilde{w}_{i + 1}) = d_G(u', N\Tilde{w}_{i + 1}) = d_G(u', w) + d_G(w, N\Tilde{w}_{i + 1}) = k - d_G(u, u') + d_G(w, N\Tilde{w}_{i + 1})\in K_{u'}. 
    \end{equation*}
    Hence, $N\Tilde{w}_{i + 1}\in T_{u', v', d_H(\Tilde{w}_0, \Tilde{w}_{i + 1})}\subseteq T$. As $d_{\mathbb{Z}^2}(w, N\Tilde{w}_{i + 1})\leq N$, we have $w\in R$ as needed, a contradiction. Therefore, $S(u, v, k)\subseteq R$, and $|S(u, v, k)|\leq |R|\leq 10^6N^{11}B$. 
\end{proof}

Proposition \ref{sec2 reduction prop} reduces Theorem \ref{z2 gen ext thm} to the following proposition. 

\begin{prop} \label{main reduced prop}
    There exists a weighted graph $H$ on $\mathbb{Z}^2$ using finitely many positive integer weights with bounded geodesic slices. 
\end{prop}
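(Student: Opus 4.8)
The plan is to construct $H$ explicitly as a \emph{self-similar, multi-scale} weighting of $\mathbb{Z}^2$, using a large base $b$ and only the finitely many positive integer weights appearing in a single fixed $b \times b$ ``tile''. Informally, at every scale $b^k$ one overlays onto $\mathbb{Z}^2$ a copy of the tile pattern dilated by $b^k$; the weight of an edge is a fixed function of the (finitely many) scales it participates in, so $H$ uses only finitely many integer weights. The tile is engineered so that, at each scale $b^k$, the weighting enforces two antagonistic features which address the two failure modes noted in the introduction: \emph{funnelling} --- any sufficiently long geodesic is pushed to within a bounded distance of the dilated lattice $(b^k\mathbb{Z})^2$, entering and leaving each scale-$b^k$ cell only near its corners --- and \emph{local rigidity} --- inside any one scale-$b^k$ cell, the cheapest route between prescribed entry and exit points is \emph{unique}. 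The actual object of Sections~\ref{construction sec}--\ref{fractal} is a finite \emph{family} of such weighted graphs (related by symmetries and offsets), closed up to controlled distortion under the operation of contracting by the factor $b$ onto $(b\mathbb{Z})^2$; the point is that $H$, rescaled by $b$, is again a member of the family.

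Granting such a construction, I would then prove three facts. (i) A crude comparison $d_{\mathbb{Z}^2} \leq d_H \leq C\, d_{\mathbb{Z}^2}$: the lower bound is immediate since every weight is $\geq 1$, and the upper bound holds because the extra cost the scale-$b^k$ gadgets impose over a distance $D$ is summable in $k$. (ii) A \emph{funnelling lemma}, in the spirit of Lemmas~\ref{geos pass through special verts}--\ref{all roads pass through Nz2}: a geodesic of $H$ between points at distance $\gtrsim b$ meets $(b\mathbb{Z})^2$ within bounded distance of each endpoint. (iii) A \emph{self-similarity lemma}, the analogue of Lemma~\ref{inheritance lem} and the content of Section~\ref{fractal}: the part of an $H$-geodesic lying between two vertices $bu', bv' \in (b\mathbb{Z})^2$ is, after contracting by $b$, a geodesic between $u'$ and $v'$ of a member of the family, and this correspondence sends geodesic slices to geodesic slices \emph{without increasing their size}, since local rigidity leaves no freedom inside a single cell.

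These combine into a descent on the scale. Let $D = d_H(u,v)$. If $D$ is bounded by an absolute constant, then $S_H(u,v,r)$ consists of $\mathbb{Z}^2$-vertices within bounded distance of $u$, so $|S_H(u,v,r)| \leq C_0$ for some absolute $C_0$; the same holds, by (ii), when the slice falls in the bounded-length segment near $u$ or near $v$. Otherwise the slice lies in the ``coarse'' middle part, and by (ii) and (iii) it is identified --- with no gain in size --- with a geodesic slice of a family member at scale at most roughly $D/b$; iterating drives the scale down to the bounded range, where the bound $C_0$ applies. (Equivalently, one runs the bookkeeping through the finite truncations of the construction, proving by induction on the number of scales used that geodesic slices have size at most $C_0$ as long as $d_H(u,v)$ does not exceed the top scale, and then lets the number of scales tend to infinity; for a fixed pair $u, v$ the gadgets at scales $\gg d_H(u,v)$ do not change which $u$--$v$ paths are shortest, so $H$ inherits the bound.) Either way, every geodesic slice of $H$ has size at most $C_0$.

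\textbf{Main obstacle.} All the real work is in constructing the tile so that funnelling and local rigidity hold \emph{simultaneously and uniformly across all scales} --- the dilated gadgets share the same edge set and must not interfere destructively --- and, crucially, so that the contraction-by-$b$ map is \emph{exactly size-preserving} on geodesic slices rather than merely of bounded multiplicity: a constant-factor blow-up per scale would compound to a polynomial, hence unbounded, bound over the $\Theta(\log D)$ scales relevant at distance $D$. A secondary point, handled by choosing $b$ large and the ratio of largest to smallest weight small, is that $H$-geodesics need not be $x$- or $y$-monotone, so one must control ``corner'' behaviour near points where a geodesic switches dominant direction.
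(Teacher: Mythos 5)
Your overall architecture matches the paper's: a self-similar two-scale-type weighting of $\mathbb{Z}^2$ (the paper's $H_{a,b}$ with ``slow'' edges at the midpoints of the sides of $p^n$-blocks), a funnelling statement (Lemmas~\ref{opposite sides}--\ref{unaligned distinguished endpoints geo}, Corollary~\ref{loc behaviour cor}), and a renormalisation lemma identifying geodesics between points of $(p\mathbb{Z})^2$ with geodesics of a rescaled member of the family (Lemma~\ref{fractal inheritance lem}). But your endgame has a genuine gap, and you have in fact put your finger on exactly where it is without resolving it. Your descent requires the contraction-by-$b$ map to be ``exactly size-preserving'' on slices, which you propose to obtain from \emph{local rigidity}, i.e.\ uniqueness of the cheapest route through each cell. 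No such rigidity is established or plausible here: the paper's construction does not have unique geodesics (bounded geodesic multiplicity for $\mathbb{Z}^2$ is explicitly left open as Question~\ref{bounded geodesic multiplicity}), and even granting rigidity inside cells, the funnelling step produces boundedly many candidate entry/exit pairs $(u',v')$ on the coarse lattice per endpoint, so the union over these pairs already incurs a constant factor per scale --- precisely the compounding loss over $\Theta(\log D)$ scales that you correctly identify as fatal.

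More fundamentally, the descent ignores the geodesics that ``top out'' at an intermediate scale: a geodesic from $u$ to $v$ may be confined to an open $p^{n+1}$-strip and never meet $(p^{n+1}\mathbb{Z})^2$ at all (Lemma~\ref{classification level n}), so it never descends to the bottom of your recursion, yet it still contributes vertices to the slice. The paper's resolution is to stratify the slice by the top scale $n(\gamma)$ reached by each geodesic, bound the contribution of each stratum by a constant (Lemma~\ref{s2n}, using confinement to at most two $p^{n+1}$-strips), and then --- this is the crucial idea missing from your proposal --- prove that only $O\bigl(\log_p(b/a)\bigr)$ values of $n$ can be realised by geodesics between a \emph{fixed} pair $u,v$ (Lemma~\ref{n bound}). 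That lemma is a cost-comparison argument: if geodesics confined to horizontal strips exist at two scales $n_1<n_2$, the quantity $d_{H_{a,b}}(u,v)-a\Delta$ is squeezed between two functions of $p^{n_1}$, forcing all competitive scales to cluster near $\sqrt{(b-a)\,d_{\mathbb{Z}^2}(u,v)/(2a)}$ within a bounded multiplicative window. Without this (or an equivalent device), summing your per-scale bounds over all scales up to $\log_p D$ gives a bound growing with $D$, not a uniform one.
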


Sections \ref{construction sec}, \ref{fractal} and \ref{proof} are dedicated to the proof of Proposition \ref{main reduced prop}. 

\section{The construction} \label{construction sec}

Our construction of the weighted graph $H$ requires us to fix an odd integer parameter $p\geq 3$. Any constant choice of $p$ shall suffice in yielding a weighted graph $H$ with two positive integer weights and bounded geodesic slices. Heuristically, smaller values of $p$ result in tighter upper bounds on the sizes of geodesic slices. 

\begin{df}
    An edge in $\mathbb{Z}^2$ is called a \emph{slow edge} if it is of the form 
    \begin{equation*}
        \left(p^n x + \frac{p^n - 1}{2}, p^n y\right)\left(p^n x + \frac{p^n + 1}{2}, p^n y\right) \quad \text{or}\quad\left(p^n x, p^n y + \frac{p^n - 1}{2}\right)\left(p^n x, p^n y + \frac{p^n + 1}{2}\right)
    \end{equation*}
    for non-negative integers $n$ and integers $x, y$ not divisible by $p$. All other edges of $\mathbb{Z}^2$ are called \emph{fast edges}. 
\end{df}

\begin{df}
    For positive integers $a < b$, let $H_{a, b}$ be the weighted graph on $\mathbb{Z}^2$, where we give the weight $a$ to fast edges and $b$ to slow edges. 
\end{df}

We note that the weights of $H_{a, b}$ are symmetric under reflection across $x = y$. 

\begin{obs}
    $(x, y)(x + 1, y)$ is a slow edge if and only if $(y, x)(y, x + 1)$ is a slow edge. 
\end{obs}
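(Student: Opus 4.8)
The plan is simply to unwind the definition of ``slow edge'' on both sides of the claimed equivalence and observe that the two resulting conditions on the pair $(x,y)$ are literally identical. Writing the parameters appearing in the definition as $a,b$ to avoid a clash with the $x,y$ of the statement, the horizontal edge $(x,y)(x+1,y)$ matches the first template of a slow edge exactly when there exist a non-negative integer $n$ and integers $a,b$ with $p\nmid a$ and $p\nmid b$ such that $x = p^n a + \tfrac{p^n-1}{2}$ and $y = p^n b$ (the condition $x+1 = p^n a + \tfrac{p^n+1}{2}$ on the right endpoint is then automatic).

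Next I would match the vertical edge $(y,x)(y,x+1)$ against the second template of a slow edge. This forces its first coordinate $y$ to have the form $p^m c$ with $p\nmid c$, and its incremented second coordinate to satisfy $x = p^m d + \tfrac{p^m-1}{2}$ with $p\nmid d$, for some non-negative integer $m$. Renaming $m \mapsto n$, $c \mapsto b$, $d \mapsto a$, these are exactly the conditions obtained for the horizontal edge, so the horizontal edge is slow if and only if the vertical edge is.

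There is no genuine obstacle here; the only point worth noting is that in both slow-edge templates the coordinate that gets incremented along the edge is precisely the one carrying the additive shift $\tfrac{p^n\pm 1}{2}$, while the transverse coordinate is a bare multiple of $p^n$. The reflection $(x,y)\mapsto(y,x)$ swaps ``incremented coordinate'' with ``transverse coordinate'' and simultaneously swaps the horizontal template with the vertical one, so no asymmetry can arise. In short, this observation is a bookkeeping check that the construction respects reflection across $x=y$, which is exactly what it is meant to record.
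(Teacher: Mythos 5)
Your proof is correct and is exactly the definitional unwinding the paper intends; the paper states this as an Observation with no proof, precisely because the two slow-edge templates are carried into each other by the reflection $(x,y)\mapsto(y,x)$ together with the renaming of parameters you describe.
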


We further note that, for any pair of vertices $u, v$, we have
\begin{equation*}
    ad_{\mathbb{Z}^2}(u, v)\leq d_{H_{a, b}}(u, v)\leq bd_{\mathbb{Z}^2}(u, v), 
\end{equation*}
as the minimum and maximum weights of $H_{a, b}$ are $a$ and $b$, respectively. 

\section{A fractal property} \label{fractal}

We start by examining the role of the sublattice $(p\mathbb{Z})^2$ in the geodesic structure of $H_{a, b}$. Informally, we show that $(p\mathbb{Z})^2$ in $H_{a, b}$ emulates $H_{pa, pa - a + b}$. The analysis here closely mirrors the arguments in Section \ref{intro and weight}, where we related the geodesic structure of the unweighted graph $G$ on $(N\mathbb{Z})^2$ to the weighted graph $H$. 

We start by observing two properties of the sublattice $(p\mathbb{Z})^2$ in the weighted graphs $H_{a, b}$. 

\begin{obs} \label{pz2 fast}
    All edges incident to a vertex in $(p\mathbb{Z})^2$ are fast. 
\end{obs}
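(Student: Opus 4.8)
The plan is to argue by contradiction using a direct divisibility computation, the only structural input being that $p$ is odd. Suppose some slow edge $e$ is incident to a vertex $v \in (p\mathbb{Z})^2$. By the reflection symmetry of the slow-edge condition across $x = y$, I may assume $e$ is horizontal, so
\begin{equation*}
    e = \left(p^n x + \tfrac{p^n - 1}{2},\, p^n y\right)\left(p^n x + \tfrac{p^n + 1}{2},\, p^n y\right)
\end{equation*}
for some integer $n \geq 0$ and integers $x, y$ with $p \nmid x$ and $p \nmid y$ (note that $p$ odd makes $\tfrac{p^n \pm 1}{2}$ integers). Both endpoints of $e$ have second coordinate $p^n y$, while their first coordinates are the consecutive integers $p^n x + \tfrac{p^n - 1}{2}$ and $p^n x + \tfrac{p^n + 1}{2}$.

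First I would dispose of the case $n = 0$: here the second coordinate of every vertex of $e$ equals $y$, which is not divisible by $p$, so no endpoint lies in $(p\mathbb{Z})^2$ — a contradiction.

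For $n \geq 1$, the second coordinate $p^n y$ is divisible by $p$, so any obstruction must come from the first coordinate; accordingly I would show $p \nmid p^n x + \tfrac{p^n \pm 1}{2}$. Since $p \mid p^n x$, it suffices to show $p \nmid \tfrac{p^n - 1}{2}$ and $p \nmid \tfrac{p^n + 1}{2}$; multiplying by $2$ (which is coprime to $p$), this reduces to $p \nmid p^n - 1$ and $p \nmid p^n + 1$, both immediate from $p \mid p^n$ together with $p \nmid 1$. Hence neither endpoint of $e$ lies in $(p\mathbb{Z})^2$, the final contradiction. The vertical case is identical after exchanging the two coordinates, or follows at once from the noted reflection symmetry.

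I do not expect a genuine obstacle here: the only point needing care is that $p$ being odd both makes $\tfrac{p^n \pm 1}{2}$ a well-defined integer and makes $2$ invertible modulo $p$, which is precisely what the two divisibility reductions above rely on.
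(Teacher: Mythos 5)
Your proof is correct, and it is exactly the verification the paper leaves implicit: the observation is stated without proof, and the intended justification is precisely your case analysis showing that for $n = 0$ the common coordinate $y$ is not a multiple of $p$, while for $n \geq 1$ the varying coordinate is congruent to $\tfrac{p^n \pm 1}{2} \not\equiv 0 \pmod p$ since $2$ is invertible mod the odd prime-free modulus $p$ and $p^n \pm 1 \equiv \pm 1$. No gaps.
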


\begin{obs} \label{fractal inheritance}
    For an edge $e = v_1v_2$ in $\mathbb{Z}^2$, the straight path between $pv_1$ and $pv_2$ has length $pa$ if $e$ is fast and $pa - a + b$ if $e$ is slow. 
\end{obs}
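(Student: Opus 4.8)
The statement to prove is Observation~\ref{fractal inheritance}: that in $\mathbb{Z}^2$, the straight path joining $pv_1$ to $pv_2$ (for an edge $v_1v_2$) has cost $pa$ when $e = v_1v_2$ is fast, and $pa - a + b$ when $e$ is slow, where costs are measured in $H_{a,b}$. The plan is as follows. By symmetry under reflection across $x = y$ (the remark after the definition of $H_{a,b}$, together with the reflection symmetry of ``slow'') it suffices to treat a horizontal edge $e = (x,y)(x+1,y)$; then $pv_1 = (px, py)$ and $pv_2 = (px+p, py)$, and the straight path between them is the unique path through the $p$ consecutive horizontal edges $(px+i, py)(px+i+1, py)$ for $i = 0, \dots, p-1$. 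Its cost is $\sum_{i=0}^{p-1} w(px+i, py)$ where $w(\cdot)$ denotes the weight of that horizontal edge, i.e.\ $a$ unless the edge is slow in which case it is $b$. So the entire claim reduces to a counting statement: \emph{exactly one} of these $p$ horizontal edges is slow iff $e$ is slow, and \emph{none} is slow iff $e$ is fast.

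The key step is therefore to classify which of the edges $(px+i, py)(px+i+1, py)$, $i = 0, \dots, p-1$, are slow. First I would dispose of the case where $p \mid x$ or $p \mid y$: if $p \mid y$ then $py \in (p\mathbb{Z})$ and in fact $py$ is divisible by a higher power of $p$ only in special subcases, but more simply, by Observation~\ref{pz2 fast} all edges incident to $(p\mathbb{Z})^2$ are fast --- however one must be careful, as the edges in question need not be incident to $(p\mathbb{Z})^2$. The cleaner route: an edge $(a', b')(a'+1, b')$ is slow precisely when there is $m \geq 0$ with $p^m \mid b'$, $a' \equiv \frac{p^m - 1}{2} \pmod{p^m}$ (equivalently $2a' + 1 \equiv 0 \pmod{p^m}$, using $p$ odd so $2$ is invertible), \emph{and} writing $b' = p^m y'$, $a' = p^m x' + \frac{p^m-1}{2}$, neither $x'$ nor $y'$ is divisible by $p$. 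Now fix our horizontal edge $e = (x,y)(x+1,y)$.

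\emph{Case $e$ fast.} Suppose for contradiction some edge $(px+i, py)(px+i+1, py)$ with $0 \le i \le p-1$ is slow, witnessed by level $m$. Since the second coordinate is $py$, we need $p^m \mid py$, so $m \ge 1$; write $py = p^m y'$, i.e.\ $y = p^{m-1} y'$. The first-coordinate condition is $2(px+i) + 1 \equiv 0 \pmod{p^m}$, i.e.\ $2i + 1 \equiv -2px \equiv 0 \pmod p$ after reducing mod $p$ (since $m \ge 1$), which pins down $i = \frac{p-1}{2}$ as the unique residue mod $p$ in $\{0, \dots, p-1\}$ with $2i+1 \equiv 0$. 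With $i = \frac{p-1}{2}$, $px + i = p^m x' + \frac{p^m-1}{2}$ forces (after subtracting $\frac{p-1}{2}$ and dividing by $p$) that $x + \frac{(p-1)/2 - (p^m-1)/2 \cdot (1/p)\cdots}{}$ --- here I would do the arithmetic carefully --- ultimately yielding $x \equiv$ (something) and, crucially, $x = p^{m-1} x'$ with $p \nmid x'$ forcing, when we unwind, that the original edge $e$ itself satisfies the slow-edge definition at level $m-1$, contradicting $e$ fast. The symmetric argument shows: if some $i$-edge is slow at level $m$, then $e$ is slow at level $m-1$; conversely if $e$ is slow at level $n$, then exactly the $i = \frac{p-1}{2}$ edge is slow, at level $n+1$, and all other $i$-edges are fast.

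I expect the arithmetic bookkeeping in the previous paragraph --- tracking the level shift $m \leftrightarrow m-1$ and verifying the ``$p \nmid$'' side conditions transfer correctly --- to be the main obstacle, though it is entirely elementary; $p$ odd is used exactly to invert $2$ modulo $p^m$. Once the classification is established, the conclusion is immediate: summing weights, the fast case gives $p \cdot a = pa$, and the slow case gives $(p-1)a + b = pa - a + b$, completing the proof. A remark I would include: the same computation shows straight paths of length $p$ between arbitrary (not necessarily $(p\mathbb{Z})^2$-anchored) endpoints need not behave this way --- the anchoring at $(p\mathbb{Z})^2$ is what guarantees the endpoints of the straight path land on $(p\mathbb{Z})$-coordinates and hence that the ``at most one slow edge among $p$ consecutive'' phenomenon applies cleanly, mirroring the role of $(N\mathbb{Z})^2$ in Section~\ref{intro and weight}.
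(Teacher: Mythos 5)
Your proposal is correct and is exactly the direct verification that the paper leaves implicit (the statement is given as an Observation without proof): reduce to a horizontal edge by the reflection symmetry, observe that among the $p$ consecutive edges $(px+i,py)(px+i+1,py)$ only $i=\tfrac{p-1}{2}$ can satisfy the slow-edge congruence at any level $m\geq 1$, and check that the conditions at level $m$ for that middle edge translate precisely into the slow-edge conditions at level $m-1$ for $e=(x,y)(x+1,y)$, giving cost $pa$ in the fast case and $(p-1)a+b=pa-a+b$ in the slow case. One small correction to your bookkeeping: the reason $m\geq 1$ is not that ``$p^m\mid py$ forces $m\geq 1$'' (the divisibility is vacuous at $m=0$), but that at level $0$ the definition requires $p\nmid y'$ with $y'=py$, which fails; with that clause invoked instead, the argument goes through as you outline.
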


\begin{df}
    In analogy with the special vertices of Section \ref{intro and weight}, we call a vertex of $\mathbb{Z}^2$ \emph{horizontally (\emph{resp.}~vertically) distinguished} if its horizontal (\emph{resp.}~vertical) coordinate is a multiple of $p$. We call a vertex \emph{distinguished} if it is either horizontally or vertically distinguished. 
\end{df}

Note that a vertex is distinguished if and only if it does not lie in an open $p$-block. 

We now seek to prove an analogue (Lemma~\ref{unaligned distinguished endpoints geo}) of Lemma \ref{special vert geo} for geodesics between distinguished vertices. However, extra assumptions on the endpoints are needed here. 

\begin{df}
    A pair of distinguished vertices are \emph{horizontally (\emph{resp.}~vertically) aligned} if their vertical (\emph{resp.}~horizontal) coordinates lie in the same open $p$-interval. A pair of distinguished vertices are \emph{aligned} if they are horizontally or vertically aligned. A pair of distinguished vertices are said to be \emph{unaligned} if they are not aligned. 
\end{df}

Figure~\ref{fig:def2} illustrates a pair of horizontally aligned distinguished vertices for $p = 3$. 

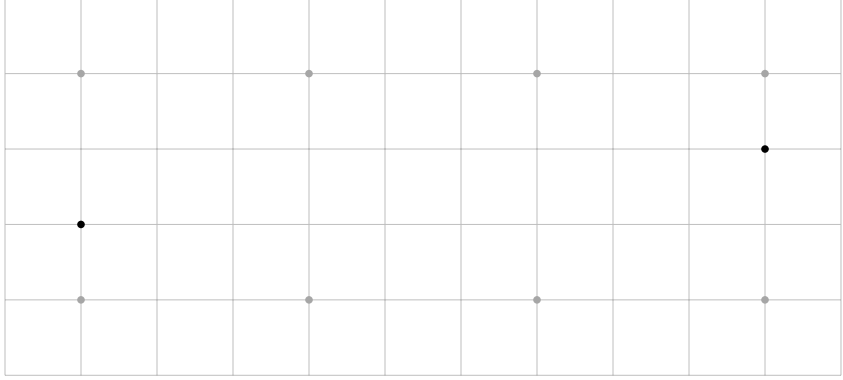
\begin{figure}
    \centering
    \begin{tikzpicture}
        \draw[step=1cm,gray!50,very thin] (-1,-1) grid (10, 4);
        
        \foreach \x in {0, 3, 6, 9}{
            \foreach \y in {0, 3}{
                \fill[gray!70] (\x, \y) circle[radius=0.05];
            }
        }

        \fill (0, 1) circle[radius=0.05];   
        \fill (9, 2) circle[radius=0.05];   
    \end{tikzpicture}
    \caption{Illustration of a pair of horizontally aligned vertices drawn in black for $p = 3$. Vertices in $(p\mathbb{Z})^2$ are coloured grey.}
    \label{fig:def2}
\end{figure}

Note that vertices in a horizontally aligned pair have vertical coordinates which are not multiples of $p$. Therefore, only a pair of horizontally (and not vertically) distinguished vertices can be horizontally aligned. Similarly, only a pair of vertically (and not horizontally) distinguished vertices can be vertically aligned. 

\begin{df}
    For a path $\gamma$ and a subset $A$ of vertices, let \emph{an excursion of $\gamma$ in $A$} be a maximal segment of $\gamma$ contained in $A$. We call an excursion $\gamma'$ of $\gamma$ in $A$ \emph{internal} if it does not contain either endpoint of $\gamma$. Let the \emph{entry and exit vertices} of an internal excursion $\gamma'$ denote the vertices on $\gamma$ immediately preceding and following $\gamma'$. 
\end{df}

\begin{lem} \label{opposite sides}
    Let $\gamma$ be a geodesic in $H_{a, b}$. Let $\gamma'$ be an internal excursion of $\gamma$ in an open $p$-block, with entry and exit vertices $u, v$. Then $u, v$ must lie on opposite open sides of the $p$-block. In particular, $d_{H_{a, b}}(u, v)\geq pb$. 
\end{lem}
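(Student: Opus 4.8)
The plan is to argue by contradiction: assuming the entry and exit vertices $u,v$ of $\gamma'$ do not lie on opposite sides of the open $p$-block $Q$, I will replace the subpath of $\gamma$ from $u$ to $v$ by a strictly cheaper path running along the boundary of $Q$, contradicting that $\gamma$ is a geodesic.

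First the set-up. Neither $u$ nor $v$ can be a corner of the closed $p$-block $\overline{Q}$, since every neighbour of such a corner has a coordinate divisible by $p$ and hence lies outside $Q$, whereas $u$ is adjacent to the first vertex of $\gamma'$, which lies in $Q$; so $u,v$ lie on \emph{open} sides of $Q$. Let $\gamma_{uv}$ denote the subpath of $\gamma$ from $u$ through $\gamma'$ to $v$; being a subpath of a geodesic, $\gamma_{uv}$ is itself a geodesic, so $d_{H_{a,b}}(u,v)=\mathrm{cost}(\gamma_{uv})$. The structural fact I will lean on is that \emph{every edge both of whose endpoints lie in $Q$ is slow}: both endpoints then have both coordinates strictly between consecutive multiples of $p$, so the edge matches the $n=0$ clause of the definition of a slow edge. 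Consequently $\gamma'$ consists only of slow edges, and if $u',v'$ are the first and last vertices of $\gamma'$ (both interior to $Q$) then $\gamma'$ has at least $d_{\mathbb{Z}^2}(u',v')$ edges, giving
\[
\mathrm{cost}(\gamma_{uv})\ \ge\ 2a + d_{\mathbb{Z}^2}(u',v')\, b .
\]
A quick check shows the step from $u$ into $Q$ is forced to go ``into the interior'' (the unique neighbour of $u$ lying in $Q$ moves away from the side), and likewise at $v$, so $d_{\mathbb{Z}^2}(u',v')\ge d_{\mathbb{Z}^2}(u,v)-2$, with equality to $d_{\mathbb{Z}^2}(u,v)$ when $u,v$ lie on the same side.

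Now the rerouting. If $u$ and $v$ lie on the same open side of $Q$, let $\sigma$ be the straight segment of that side joining them. If $u$ and $v$ lie on two adjacent open sides, they share a corner $c$ of $\overline{Q}$; since $c\in(p\mathbb{Z})^2$, Observation~\ref{pz2 fast} says every edge incident to $c$ is fast, and I take $\sigma$ to run from $u$ along its side to $c$ and then from $c$ along $v$'s side to $v$. In either case $\sigma$ has exactly $d_{\mathbb{Z}^2}(u,v)$ edges, all lying on coordinate lines that are multiples of $p$; such an edge can be slow only via an $n\ge 1$ clause, and since $\sigma$ is covered by one or two straight segments of length less than $p$, while the level-$n$ slow edges on a given grid line are spaced $p^n$ apart, at most one slow edge occurs on each such segment. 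Hence $\mathrm{cost}(\sigma)\le b + (d_{\mathbb{Z}^2}(u,v)-1)\,a$ in the same-side case, and $\mathrm{cost}(\sigma)\le 2b + (d_{\mathbb{Z}^2}(u,v)-2)\,a$ in the adjacent-sides case (the two edges at $c$ being fast). Comparing with the lower bound for $\mathrm{cost}(\gamma_{uv})$ above and using $a<b$ then yields $\mathrm{cost}(\sigma)<\mathrm{cost}(\gamma_{uv})$, the desired contradiction.

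I expect the delicate part to be exactly this last comparison in the handful of cases where $d_{\mathbb{Z}^2}(u,v)$ is very small (for adjacent sides the crude bound only gives a strict gain once $d_{\mathbb{Z}^2}(u,v)\ge 5$): there the estimates ``at most one slow edge per segment'' and ``$\ge 2a$ for the two end edges'' have to be sharpened by examining precisely which boundary edges of $Q$ are slow — the structure of the thin ``moat'' of slow edges around each $p$-block — and the shortest corner-hugging detours must be handled by hand. Granting the contradiction, $u$ and $v$ lie on opposite sides of $Q$; in particular $d_{\mathbb{Z}^2}(u,v)\ge p$, and since any $u$–$v$ geodesic must cross the all-slow interior of $Q$, picking up on the order of $p$ slow edges, one obtains $d_{H_{a,b}}(u,v)\ge pb$.
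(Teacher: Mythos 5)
Your overall strategy --- assume $u,v$ are not on opposite sides, reroute the $u$--$v$ segment along the boundary of the block, and compare costs in the two cases (same side, adjacent sides) --- is exactly the paper's. Your same-side case closes correctly with your accounting. But the gap you flag in the adjacent-sides case is genuine and cannot be repaired by sharpening the boundary estimates: with the entry and exit edges counted as fast, the through-the-block path $(px+1,py)\to(px+1,py+1)\to(px,py+1)$ costs $2a$, exactly the same as the corner route, so for $i=j=1$ your two bounds coincide and no contradiction is available. The paper closes the case differently: it asserts that \emph{every} edge of the $u$--$v$ segment, including the first and last edges (which join a boundary vertex to an interior vertex of the open block), is slow. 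This upgrades the lower bound from your $2a+(i+j-2)b$ to $(i+j)b$, and then $(i+j)b>(i+j-2)b+2a$ is strict for all $i,j\geq 1$, with no small cases left over. The same point is what you are missing in the final ``in particular'' step: treating the entry and exit edges as fast only yields $d_{H_{a,b}}(u,v)\geq 2a+(p-2)b$, which is strictly less than the $pb$ that Corollary~\ref{one excursion} later needs.

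Whether those two edges are slow is exactly where your reading of the definition diverges from what the paper uses. Under the $n=0$ clause as literally stated (both parameters $x$ and $y$ not divisible by $p$), the vertical edge $(px+i,py)(px+i,py+1)$ is fast, as you assumed, since its lower endpoint has vertical coordinate divisible by $p$; under that reading the adjacent-sides configuration with $i=j=1$ actually occurs on genuine geodesics and the lemma fails. Under the version implemented in the appendix (a vertical edge at horizontal position $X\neq 0$ with $\nu_p(X)=n$ is slow whenever its lower vertical coordinate is $\equiv\frac{p^n-1}{2}\pmod{p^n}$, with no further condition), every edge incident to a vertex of an open $p$-block is slow, which is precisely the structural fact the paper's proof leans on. So the missing ingredient in your argument is the slowness of the entry and exit edges; once that is granted, both the adjacent-sides comparison and the $d_{H_{a,b}}(u,v)\geq pb$ conclusion follow immediately.
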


\begin{proof}
    It suffices to show that $u, v$ cannot lie on the same closed side, or two adjacent closed sides of the $p$-block $[px, px + p]\times [py, py + p]$. 
    
    If $u, v$ lie on the same closed side, we may assume, without loss of generality, that $u = (px + i, py), v = (px + j, py)$, where $0 < i < j < p$. As the segment of $\gamma$ between $u, v$ enters the open $p$-block, it uses only slow edges and has length at least $j - i + 2$, and therefore costs at least $(j - i + 2)b$. On the other hand, the straight path between $u, v$ has length $j - i$, and hence costs at most $(j - i)b < (j - i + 2)b$, a contradiction. 

    If $u, v$ lie on the adjacent closed sides, we may assume, without loss of generality, that $u = (px + i, py), v = (px, py + j)$, where $0 < i, j < p$. As the segment of $\gamma$ between $u, v$ enters the open $p$-block, it uses only slow edges and has length at least $i + j$, and therefore costs at least $(i + j)b$. On the other hand, the concatenation of the straight paths between $u, (px, py)$ and between $(px, py), v$ has length $i + j$, and uses at least two fast edges by Observation \ref{pz2 fast}. Therefore, this concatenation of straight paths costs at most $(i + j - 2)b + 2a < (i + j)b$, a contradiction. 
\end{proof}

\begin{cor} \label{one excursion}
    A geodesic $\gamma$ in $H_{a, b}$ cannot have more than one internal excursion in an open $p$-block. 
\end{cor}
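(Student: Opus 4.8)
The plan is to derive a contradiction from the existence of a geodesic $\gamma$ in $H_{a, b}$ with two distinct internal excursions in a single open $p$-block $B = {}]px, px + p[\times{}]py, py + p[$. Suppose $\gamma_1$ and $\gamma_2$ are two such excursions, occurring in that order along $\gamma$, with entry/exit vertices $u_1, v_1$ and $u_2, v_2$ respectively. By Lemma \ref{opposite sides}, the pair $u_1, v_1$ lies on opposite open sides of $B$, and likewise $u_2, v_2$ lies on opposite open sides of $B$. Since the boundary of $B$ has only two pairs of opposite sides (the horizontal pair and the vertical pair), there are essentially two configurations to consider: either $\{u_1, v_1\}$ and $\{u_2, v_2\}$ separate the same pair of sides (say both cross horizontally), or they cross transversally (one horizontal pair, one vertical pair).

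The key observation is that between the exit vertex $v_1$ of the first excursion and the entry vertex $u_2$ of the second, the geodesic $\gamma$ travels entirely outside the open block $B$ (by maximality of excursions), yet both $v_1$ and $u_2$ lie on the boundary of $B$. I would compare the cost of this ``outside'' segment of $\gamma$ against a cheaper route that stays on the boundary of $B$. The boundary of a $p$-block, walking along its closed sides, has total perimeter $4p$ edges, and crucially all $4$ corner vertices lie in $(p\mathbb{Z})^2$, so the edges incident to those corners are fast by Observation \ref{pz2 fast}; in fact each side of the block contains exactly one slow edge (the ``middle'' edge), so a full traversal of the boundary costs at most $(4p - 4)a + 4b$. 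More to the point, any two boundary vertices $v_1, u_2$ can be joined along the boundary by a path of length at most $2p$ (going the short way around), and since $\gamma$ must re-enter $B$ at $u_2$ after exiting at $v_1$, the portion of $\gamma$ strictly between them has $d_{H_{a,b}}$-cost at least... here I need a lower bound. The cleanest approach: consider the segment $\gamma''$ of $\gamma$ from $v_1$ to $u_2$; it lies in $\mathbb{Z}^2 \setminus B$, so in particular it does not pass through the center region of $B$. I would argue that replacing $\gamma''$ by a boundary path, and using that the entry/exit vertices satisfy the ``opposite sides'' constraint from Lemma \ref{opposite sides}, forces $v_1$ and $u_2$ to be far apart on the boundary, while a shortcut through a corner of $B$ (using cheap fast edges on the sides) beats $\gamma''$.

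Concretely, I expect the cleanest argument to run through the corners: let $c$ be a corner of $B$ adjacent (along the boundary) to the side containing $v_1$. The straight path from $v_1$ to $c$ along that side costs at most $(p-1)a + $ (possibly one $b$), well under $(p-1)b + $ slack; similarly from $c$ to $u_2$. Meanwhile the geodesic segment $\gamma''$ from $v_1$ to $u_2$, being forced outside $B$ and forced (by the opposite-sides structure) to ``go around'' $B$, must have unweighted length at least $p$ or so, hence cost at least $pa$ — and one shows the corner route is strictly cheaper by exploiting that the block's boundary edges at corners are fast (cost $a$) whereas $\gamma''$ is penalized either by length or by an unavoidable slow edge. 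The main obstacle I anticipate is making the ``$\gamma''$ must go around $B$'' step rigorous and case-free: one must handle both the parallel configuration (where $v_1, u_2$ are on the same side or opposite sides of the already-crossed pair) and the transversal configuration, and in each case pin down a lower bound on the cost of $\gamma''$ that beats the explicit boundary shortcut. I would organize this by first reducing, via reflection symmetry of $H_{a,b}$ across $x = y$ and translation, to $B = {}]0, p[^2$, then enumerating the (few) cases for which open sides contain $v_1$ and $u_2$, and in each bounding $d_{H_{a,b}}(v_1, u_2)$ from below by a boundary-avoiding argument (any path from $v_1$ to $u_2$ avoiding the open block has length at least the $\ell_1$-distance along the boundary, which in the bad cases is at least $p$) while exhibiting a through-corner detour of strictly smaller cost.
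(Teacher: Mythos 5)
Your setup is right, but the core of your plan has a gap that I do not think can be patched as stated. You propose to lower-bound the cost of the segment $\gamma''$ of $\gamma$ between the exit vertex $v_1$ of the first excursion and the entry vertex $u_2$ of the second, and to beat it with a boundary/corner detour. There are two problems. First, Lemma~\ref{opposite sides} constrains the pairs $(u_1, v_1)$ and $(u_2, v_2)$ separately, but says nothing about the relative position of $v_1$ and $u_2$: these can be, say, adjacent vertices on the same side of the block (exit at $(px+1,py)$, take one step along the boundary, re-enter at $(px+2,py)$), so $\gamma''$ need not ``go around'' $B$ at all and its cost can be as small as $a$. Your claimed lower bound of order $pa$ on $\gamma''$ is therefore false in general. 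Second, and more fundamentally, $\gamma''$ is a subsegment of a geodesic and hence is itself a geodesic between $v_1$ and $u_2$; no replacement path can be strictly cheaper, so a contradiction can never be extracted by re-routing $\gamma''$ alone.

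The segment that must be re-routed is the one from $u_1$ to $v_2$, which contains both excursions, and the input you are missing is the \emph{quantitative} conclusion of Lemma~\ref{opposite sides}: each internal excursion, together with its entry and exit vertices, costs at least $pb$. This is exactly how the paper argues: after checking $v_1 \neq u_2$ (a boundary vertex has at most one neighbour in the open block, so $v_1 = u_2$ would force $\gamma$ to backtrack), one gets
\begin{equation*}
    d_{H_{a,b}}(u_1, v_2) = d_{H_{a,b}}(u_1, v_1) + d_{H_{a,b}}(v_1, u_2) + d_{H_{a,b}}(u_2, v_2) \geq 2pb + d_{H_{a,b}}(v_1, u_2) > 2pb,
\end{equation*}
while on the other hand $d_{H_{a,b}}(u_1, v_2) \leq b\, d_{\mathbb{Z}^2}(u_1, v_2) \leq 2pb$, since both vertices lie on the boundary of a single $p$-block. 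No case analysis on the configuration of sides, corners, or fast edges is needed.
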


\begin{proof}
    For the sake of contradiction, let $\gamma$ have at least two internal excursions $\gamma_1, \gamma_2$ to an open $p$-block $]px, px + p[\times ]py, py + p[$. Let $\gamma_1$ precede $\gamma_2$ and let $u_i, v_i$ be the entry and exit vertices of $\gamma_i$. Each vertex on the boundary of a $p$-block has at most one neighbour in the open $p$-block. Therefore, we must have $v_2\neq u_1$, otherwise, $v_2 = u_1$ is preceded and followed in $\gamma$ by the same vertex. Hence, by Lemma~\ref{opposite sides}, we have
    \begin{equation*}
        d_{H_{a, b}}(u_1, v_2) = d_{H_{a, b}}(u_1, v_1) + d_{H_{a, b}}(v_1, u_2) + d_{H_{a, b}}(u_2, v_2)\geq 2pb + d_{H_{a, b}}(v_1, u_2) > 2pb. 
    \end{equation*}
    On the other hand, 
    \begin{equation*}
        d_{H_{a, b}}(u_1, v_2) \leq d_{\mathbb{Z}^2}(u_1, v_2)b\leq 2pb, 
    \end{equation*}
    a contradiction. 
\end{proof}

\begin{lem} \label{unaligned distinguished endpoints geo}
    In $H_{a, b}$, geodesics between unaligned distinguished vertices in $(p\mathbb{Z})^2$ only passes through distinguished vertices. 
\end{lem}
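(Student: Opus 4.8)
The plan is to argue by contradiction, in close parallel with the proof of Lemma~\ref{special vert geo}. Suppose $\gamma$ is a geodesic in $H_{a,b}$ between $P, Q\in(p\mathbb{Z})^2$ passing through some non-distinguished vertex, i.e.\ some vertex lying in an open $p$-block. Let $u$ and $v$ be the distinguished vertices of $\gamma$ nearest to that vertex on either side of it along $\gamma$. Two adjacent non-distinguished vertices of $\mathbb{Z}^2$ always lie in a common open $p$-block: since no coordinate of either endpoint is a multiple of $p$, an integer multiple of $p$ cannot lie strictly between their (differing by one) coordinates, so both are forced into the same open $p$-interval in each direction. Hence the entire stretch of $\gamma$ strictly between $u$ and $v$ lies in a single open $p$-block $B$; and since $P,Q$ are distinguished, this stretch is an \emph{internal} excursion of $\gamma$ in $B$ with entry vertex $u$ and exit vertex $v$. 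Lemma~\ref{opposite sides} then forces $u$ and $v$ onto opposite open sides of $B$ and gives $d_{H_{a,b}}(u,v)\ge pb$; writing $\sigma$ for the sub-geodesic of $\gamma$ from $u$ to $v$, we have that the cost of $\sigma$ equals $d_{H_{a,b}}(u,v)\ge pb$.

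Next I would exhibit a $u$--$v$ path strictly cheaper than $\sigma$, contradicting that $\sigma$ is geodesic. Using the reflection symmetry of $H_{a,b}$ across $x=y$ I would normalise so that $u$ lies on the left side and $v$ on the right side of $B=\,]px_0,px_0+p[\,\times\,]py_0,py_0+p[$, say $u=(px_0,py_0+i)$ and $v=(px_0+p,py_0+j)$ with $0<i,j<p$. Let $c=(px_0,py_0)$, $c'=(px_0+p,py_0)$ be the bottom corners of $B$ and $c_{+}=(px_0,py_0+p)$, $c_{+}'=(px_0+p,py_0+p)$ the top corners, all in $(p\mathbb{Z})^2$. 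Consider the two L-shaped detours $u\to c\to c'\to v$ (along the lines $x=px_0$, $y=py_0$, $x=px_0+p$, of lengths $i$, $p$, $j$) and $u\to c_{+}\to c_{+}'\to v$ (along $x=px_0$, $y=py_0+p$, $x=px_0+p$, of lengths $p-i$, $p$, $p-j$), and take the shorter, which has at most $2p$ edges and passes only through distinguished vertices. The crucial estimate is that slow edges are \emph{sparse along any coordinate line}: on a line $x=m$ (or $y=m$), the only level $n$ at which a slow edge can occur is $n$ with $p^n m'=m$ for $m'$ coprime to $p$, and at that level the slow edges are spaced exactly $p^n\ge p$ apart — a direct computation from the definition of slow edge, using that $p$ is odd so that $(p^n\pm1)/2\not\equiv 0\pmod p$. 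Consequently each of the three straight pieces of the chosen detour, each of length at most $p$, contains at most one slow edge, so the detour costs at most $(2p-3)a+3b$. Pitting this against the lower bound $pb$ on the cost of $\sigma$ yields the contradiction.

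I expect the last comparison to be the main obstacle: it is precisely where one must use the quantitative relationship between the weights $a<b$ and the parameter $p$ built into the construction (analogous to how the large constant $N$ created the needed slack in Section~\ref{intro and weight}), and making it clean requires the $p$-adic bookkeeping above to locate the slow edges on coordinate lines exactly. Two further points I would keep in mind. First, the hypothesis $P,Q\in(p\mathbb{Z})^2$ (hence automatically unaligned, and in particular not lying on any open side of $B$) is exactly what guarantees that the stretch of $\gamma$ between $u$ and $v$ is a \emph{bona fide internal} excursion, so that Lemma~\ref{opposite sides} applies; I would verify that the same argument degrades gracefully when the endpoints are merely unaligned distinguished vertices, which is presumably what is needed downstream. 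Second, since the detour passes only through distinguished vertices, the contradiction can alternatively be packaged as: if the detour is no more expensive than $\sigma$ then swapping it in produces a geodesic from $P$ to $Q$ with strictly fewer non-distinguished vertices, so an extremal (fewest-non-distinguished-vertices) geodesic from $P$ to $Q$ has none — a formulation worth having in reserve in case the strict inequality is only marginal.
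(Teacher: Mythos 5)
There is a genuine gap, and it sits exactly where you flagged it: the final cost comparison does not close, and no ``quantitative relationship between the weights $a<b$ and the parameter $p$'' can rescue it, because the construction is required to work for \emph{every} odd $p\ge 3$ and \emph{every} pair $a<b$ (see the proof of Proposition~\ref{main reduced prop}). Your detour has up to $2p$ edges of which at most $3$ are slow, so it costs up to $(2p-3)a+3b$, and the inequality $(2p-3)a+3b<pb$ reads $3a+3b<3b$ at $p=3$; in general it needs $b/a>(2p-3)/(p-3)$, which is not available. Worse, the local statement you are trying to prove is false. The entry and exit vertices $u,v$ of the excursion lie on opposite open sides of the block and are therefore \emph{aligned} distinguished vertices, and a geodesic between aligned distinguished vertices can genuinely cross an open $p$-block: for $p=3$, $a=1$, $b=2$, take $u=(3,4)$, $v=(6,4)$. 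The direct crossing $(3,4)(4,4)(5,4)(6,4)$ costs $a+2b=5$ (the first edge is fast, the other two slow), and one checks that every competing path costs at least $6$, so the crossing is the unique geodesic between $u$ and $v$. Hence no contradiction can be extracted from a single excursion between its own entry and exit vertices; this is precisely the role of the ``unaligned'' hypothesis, and the contradiction has to be assembled globally.

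The paper's proof is global in exactly this sense. It takes a counterexample $\gamma$ of minimal length, uses minimality to rule out intermediate distinguished vertices unaligned with both endpoints (hence all intermediate vertices of $(p\mathbb{Z})^2$), and then uses Lemma~\ref{opposite sides} together with Corollary~\ref{one excursion} to show that $\gamma$ threads a single vertical (or horizontal) column of $c\ge 1$ open $p$-blocks, entering and exiting only through horizontal sides. Unalignment of the endpoints is then spent to place at least one endpoint at a corner of this column, i.e.\ in $(p\mathbb{Z})^2$. The comparison path runs along the boundary of the column from that corner: it has \emph{no length overhead} (its length $cp+i$ matches the lower bound on the length of $\gamma$) and uses at most $c+1$ slow edges against the roughly $cp$ slow vertical edges of $\gamma$, so the deficit $(c+1)(b-a)<cp(b-a)$ holds for all $a<b$ and $p\ge 3$. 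Your detour, anchored mid-side, pays a length overhead of up to $2p$ extra edges to save the slow edges of only one block, which is why it cannot win. Two secondary points: reducing to a single excursion discards the fact that $\gamma$ may cross many blocks, whose savings must be accumulated; and the lemma is later applied to general unaligned distinguished endpoints (in Lemma~\ref{loc behaviour} and Lemma~\ref{classification level 1}), not only endpoints in $(p\mathbb{Z})^2$, so the minimal-counterexample induction that handles this generality is not optional.
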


\begin{proof}
    Recall that a vertex is distinguished if and only if it does not lie in an open $p$-block. 
    
    For the sake of contradiction, let $\gamma$ be a geodesic between unaligned distinguished endpoints $u, v$ passing through an open $p$-block. We further take $\gamma$ to have minimal length amongst such geodesics. 
    
    If $\gamma$ passes through any intermediate distinguished vertex $w$ which is unaligned with both $u$ and $v$, we may consider the segments $\gamma_1$ and $\gamma_2$ of $\gamma$ between $u, w$ and $w, v$, respectively. As they are strictly shorter than $\gamma$ and have unaligned distinguished endpoints, they must only pass through distinguished vertices, a contradiction. 
    
    Therefore, $\gamma$ cannot pass through any intermediate distinguished vertex $w$ which is unaligned with both $u$ and $v$. In particular, $\gamma$ cannot pass through any intermediate vertices in $(p\mathbb{Z})^2$. 

    By Lemma \ref{opposite sides} and Corollary \ref{one excursion}, without visiting intermediate vertices in $(p\mathbb{Z})^2$, $\gamma$ enters and exits open $p$-blocks either exclusively on horizontal sides or exclusively on vertical sides. Without loss of generality, we assume the former. Therefore, we have a vertical sequence of $p$-blocks $]px, px + p[\times ]py, py + p[, ]px, px + p[\times ]py + p, py + 2p[, \dots, ]px, px + p[\times ]py + (c - 1)p, py + cp[$ which $\gamma$ traverses in order. As $\gamma$ is assumed to pass through at least one open $p$-block, we have $c\geq 1$. The endpoints $u, v$ must lie in $[px, px + p]\times\{py\}$ and $[px, px + p]\times\{py + cp\}$, respectively. For $u$ and $v$ to not be horizontally aligned, at least one of them must lie in $(p\mathbb{Z})^2$. 

    Without loss of generality, let $u = (px, py)\in (p\mathbb{Z})^2$, and let $v = (px + i, py + cp)$. $\gamma$ has length at least $cp + i$ and all vertical edges in $\gamma$ are slow. Therefore, $\gamma$ costs at least $cpb + ia$. On the other hand, the concatenation of the straight paths between $u, (px, py + cp)$ and between $(px, py + cp), v$ has length $cp + i$ and uses at most $c + 1$ slow edges. Therefore, this concatenation of straight paths costs at most $(c + 1)b + (cp + i - c - 1)a < cpb + ia$, a contradiction. 
\end{proof}

\begin{df}
    Let an $n$-straight path ($n$-SP) denote the straight path between $nu$ and $nv$, where $u, v$ are adjacent vertices in $\mathbb{Z}^2$. 
\end{df}

\begin{df}
    For a positive integer $n$, and a path $\gamma = w_0\dots w_\ell$ in $\mathbb{Z}^2$, let the \emph{$n$-dilation} of $\gamma$ denote the path formed by concatenating the $n$-SPs between $nw_i, nw_{i + 1}$, in order. 
\end{df}

A pair of distinct vertices in $(p\mathbb{Z})^2$ form an unaligned pair of distinguished vertices. Therefore, Lemma~\ref{unaligned distinguished endpoints geo} has the following corollary, in parallel to Corollary~\ref{Nz2 geos}. 

\begin{cor} \label{pz2 geos}
    A geodesic in $H_{a, b}$ between two vertices $u, v\in (p\mathbb{Z})^2$ only passes through distinguished vertices, and therefore is a concatenation of $p$-SPs. 
\end{cor}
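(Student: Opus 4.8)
The plan is to separate the corollary into its two clauses. The first clause is essentially immediate: the case $u=v$ is vacuous, and two \emph{distinct} vertices of $(p\mathbb{Z})^2$ have all coordinates divisible by $p$, while no open $p$-interval contains a multiple of $p$, so neither their horizontal nor their vertical coordinates can share an open $p$-interval. Hence such a pair is an unaligned pair of distinguished vertices, and Lemma~\ref{unaligned distinguished endpoints geo} tells us that any geodesic $\gamma$ from $u$ to $v$ visits only distinguished vertices. Since a vertex is distinguished precisely when it lies on one of the grid lines $\{x\in p\mathbb{Z}\}$ or $\{y\in p\mathbb{Z}\}$, this confines all of $\gamma$ to this union of lines; everything afterwards is a purely geometric consequence of that confinement.

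Next I would record how staying on the $p$-grid constrains the edges of $\gamma$. If a horizontal edge of $\gamma$ sat on a line $y=y_0$ with $y_0$ not divisible by $p$, then both of its endpoints being distinguished would force both of their (consecutive) horizontal coordinates to be multiples of $p$, which is impossible since $p\ge 3$; hence every horizontal edge of $\gamma$ lies on a line $y=pk$, and symmetrically every vertical edge lies on a line $x=pj$. In the same vein, at any vertex $(a,pk)$ of $\gamma$ with $a\notin p\mathbb{Z}$ the only distinguished neighbours are $(a\pm 1,pk)$, so $\gamma$ must continue horizontally from there, and symmetrically at $(pj,b)$ with $b\notin p\mathbb{Z}$ it must continue vertically.

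Finally I would assemble the decomposition. List the vertices of $\gamma$ lying in $(p\mathbb{Z})^2$ in order of occurrence along $\gamma$ (a geodesic is simple, since deleting a repeated loop would strictly shorten it), say $u=z_0,z_1,\dots,z_m=v$; as $\gamma$ starts and ends in $(p\mathbb{Z})^2$, these cut $\gamma$ into segments $\gamma_i$ from $z_i$ to $z_{i+1}$ with no intermediate vertex in $(p\mathbb{Z})^2$. Take the first edge of $\gamma_i$ out of $z_i=(pj,pk)$; say it is horizontal, going to $(pj+1,pk)$ (the other three starting directions are symmetric). By the previous paragraph, each successive vertex $(pj+t,pk)$ with $0<t$ and $t\notin p\mathbb{Z}$ forces the next step to stay horizontal, and a geodesic never immediately retraces an edge, so $\gamma_i$ runs straight along $y=pk$ through $(pj+1,pk),(pj+2,pk),\dots$; the first vertex it reaches in $(p\mathbb{Z})^2$ is $(p(j+1),pk)$, which must therefore be $z_{i+1}$. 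Thus $z_i,z_{i+1}$ lie at $\mathbb{Z}^2$-distance $p$, with $\tfrac{1}{p}z_i,\tfrac{1}{p}z_{i+1}$ adjacent in $\mathbb{Z}^2$, and $\gamma_i$ is exactly the straight path between them, i.e.\ a $p$-SP; hence $\gamma=\gamma_0\gamma_1\cdots\gamma_{m-1}$ is a concatenation of $p$-SPs.

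The main (and only mildly delicate) step is the last one: the crux is the observation that $\gamma$ can change direction only at nodes of the $p$-grid, i.e.\ at vertices of $(p\mathbb{Z})^2$, which is what makes each inter-node segment a full-length straight run and thus a $p$-SP. The trivial case $u=v$ and the no-backtracking property of geodesics are the only other points needing explicit mention, and both are immediate. This mirrors the passage from Lemma~\ref{special vert geo} to Corollary~\ref{Nz2 geos} in Section~\ref{intro and weight}.
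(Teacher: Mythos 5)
Your proposal is correct and takes essentially the same route as the paper: the paper deduces the corollary directly from Lemma~\ref{unaligned distinguished endpoints geo} by observing that distinct vertices of $(p\mathbb{Z})^2$ form an unaligned pair of distinguished vertices, and leaves the passage from ``only distinguished vertices'' to ``concatenation of $p$-SPs'' implicit. Your final paragraphs merely spell out that implicit geometric step (direction changes can occur only at vertices of $(p\mathbb{Z})^2$), and do so correctly.
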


Combining Observation~\ref{fractal inheritance} with Corollary~\ref{pz2 geos}, we obtain (in parallel to Lemma~\ref{inheritance lem}) a one-to-one correspondence between geodesics in $H_{pa, pa - a + b}$ and geodesics in $H_{a, b}$ with endpoints in $(p\mathbb{Z})^2$. 

\begin{lem} \label{fractal inheritance lem}
    For $u, v\in \mathbb{Z}^2$, we have a bijection from geodesics in $H_{pa, pa - a + b}$ between $u, v$ to geodesics in $H_{a, b}$ between $pu, pv$, sending a geodesic $\gamma$ in $H_{pa, pa - a + b}$ to its $p$-dilation $\gamma'$. In particular, as the cost of $\gamma$ in $H_{pa, pa - a + b}$ is equal to the cost of $\gamma'$ in $H_{a, b}$, we have
    \begin{equation*}
        d_{H_{pa, pa - a + b}}(u, v) = d_{H_{a, b}}(pu, pv). 
    \end{equation*}
\end{lem}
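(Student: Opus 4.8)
\textbf{Proof proposal for Lemma~\ref{fractal inheritance lem}.}

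The plan is to mimic the argument that established Lemma~\ref{inheritance lem} in Section~\ref{intro and weight}, using Corollary~\ref{pz2 geos} as the replacement for Corollary~\ref{Nz2 geos} and Observation~\ref{fractal inheritance} as the replacement for Observation~\ref{inheritance}. The map in one direction is the $p$-dilation: given a geodesic $\gamma = u = w_0\dots w_k = v$ in $H_{pa, pa-a+b}$, each edge $w_iw_{i+1}$ is either fast or slow, and by Observation~\ref{fractal inheritance} the $p$-SP between $pw_i$ and $pw_{i+1}$ costs exactly the weight of that edge in $H_{pa, pa-a+b}$ (namely $pa$ if fast, $pa-a+b$ if slow). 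Hence the $p$-dilation $\gamma'$ is a path in $H_{a,b}$ from $pu$ to $pv$ whose cost equals the cost of $\gamma$. To see $\gamma'$ is a \emph{geodesic}, suppose not; then there is a strictly cheaper path from $pu$ to $pv$ in $H_{a,b}$, and after replacing it by an actual geodesic (which, by Corollary~\ref{pz2 geos}, is a concatenation of $p$-SPs, since $pu, pv \in (p\mathbb{Z})^2$), contracting each $p$-SP back by a factor of $p$ yields a path in $H_{pa, pa-a+b}$ from $u$ to $v$ of the same (strictly smaller) cost, contradicting that $\gamma$ was a geodesic. This simultaneously gives $d_{H_{pa, pa-a+b}}(u,v) = d_{H_{a,b}}(pu, pv)$.

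For the reverse direction, I would take an arbitrary geodesic $\delta$ in $H_{a,b}$ between $pu$ and $pv$; by Corollary~\ref{pz2 geos} it is a concatenation of $p$-SPs, say between $pw_0', \dots, pw_k'$ with $w_0' = u$, $w_k' = v$ and each $w_i' w_{i+1}'$ an edge of $\mathbb{Z}^2$. Define its \emph{contraction} to be the path $w_0' \dots w_k'$ in $\mathbb{Z}^2$; by Observation~\ref{fractal inheritance} its cost in $H_{pa, pa-a+b}$ equals the cost of $\delta$ in $H_{a,b}$, which by the distance identity just established equals $d_{H_{pa, pa-a+b}}(u,v)$, so the contraction is a geodesic in $H_{pa, pa-a+b}$. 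Contraction and $p$-dilation are mutually inverse on these sets: dilating then contracting is clearly the identity, and contracting then dilating is the identity because a concatenation of $p$-SPs is determined by the sequence of its breakpoints in $(p\mathbb{Z})^2$ together with the fact that the $p$-SP between two given adjacent points of $(p\mathbb{Z})^2$ is unique (it is the unique straight path joining them in $\mathbb{Z}^2$). This establishes the bijection.

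The one point requiring a little care is the well-definedness of contraction: I must check that when $\delta$ is written as a concatenation of $p$-SPs, the consecutive breakpoints $pw_i'$ genuinely correspond to a walk $w_0' \dots w_k'$ in $\mathbb{Z}^2$ with all $w_i' w_{i+1}'$ edges, and in particular that the decomposition of $\delta$ into $p$-SPs is unambiguous. This follows because a $p$-SP has both endpoints in $(p\mathbb{Z})^2$ and no interior vertex in $(p\mathbb{Z})^2$, so the breakpoints are exactly the vertices of $\delta$ lying in $(p\mathbb{Z})^2$, which is intrinsic to $\delta$; and consecutive breakpoints differ by a unit vector scaled by $p$ since each $p$-SP joins the $p$-images of adjacent vertices of $\mathbb{Z}^2$. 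I expect this bookkeeping — rather than any genuine difficulty — to be the only obstacle, the substantive content of the lemma having already been carried by Lemma~\ref{unaligned distinguished endpoints geo} and its corollary.
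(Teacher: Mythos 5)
Your proposal is correct and follows exactly the route the paper intends: the paper states Lemma~\ref{fractal inheritance lem} without an explicit proof, presenting it as an immediate consequence of combining Observation~\ref{fractal inheritance} (dilation preserves cost edge-by-edge) with Corollary~\ref{pz2 geos} (geodesics between points of $(p\mathbb{Z})^2$ decompose into $p$-SPs, so contraction is well defined), and your write-up simply fills in that bookkeeping, including the mutual inverseness of dilation and contraction.
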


Iterating Lemma~\ref{fractal inheritance lem}, we may generalise it to any power $p^n$ of $p$. 

\begin{lem} \label{fractal inheritance lem 2}
    For $u, v\in \mathbb{Z}^2$, we have a bijection from geodesics in $H_{p^na, p^na - a + b}$ between $u, v$ to geodesics in $H_{a, b}$ between $p^nu, p^nv$, sending a geodesic $\gamma$ in $H_{p^na, p^na - a + b}$ to its $p^n$-dilation $\gamma'$. In particular, as the cost of $\gamma$ in $H_{p^na, p^na - a + b}$ is equal to the cost of $\gamma'$ in $H_{a, b}$, we have
    \begin{equation*}
        d_{H_{p^na, p^na - a + b}}(u, v) = d_{H_{a, b}}(p^nu, p^nv). 
    \end{equation*}
\end{lem}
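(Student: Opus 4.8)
The plan is to prove Lemma~\ref{fractal inheritance lem 2} by induction on $n$, using Lemma~\ref{fractal inheritance lem} as the engine. The base case $n = 0$ is trivial, as $H_{p^0 a, p^0 a - a + b} = H_{a, b}$ and $p^0 u = u$. For the inductive step, the key observation is a compatibility relation among the weighted graphs in the family: I first note that $H_{p^{n+1}a, p^{n+1}a - a + b}$ is exactly the graph obtained from $H_{a', b'}$ with $a' = p^n a$, $b' = p^n a - a + b$ by applying the single-step transformation $(a', b') \mapsto (pa', pa' - a' + b')$. Indeed, $pa' = p^{n+1}a$ and $pa' - a' + b' = p^{n+1}a - p^n a + p^n a - a + b = p^{n+1}a - a + b$, as required.

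Given this, I would apply Lemma~\ref{fractal inheritance lem} with the pair $(a', b') = (p^n a, p^n a - a + b)$ in place of $(a, b)$: this yields a bijection from geodesics in $H_{p^{n+1}a, p^{n+1}a - a + b}$ between $u, v$ to geodesics in $H_{p^n a, p^n a - a + b}$ between $pu, pv$, sending a geodesic to its $p$-dilation, and in particular $d_{H_{p^{n+1}a, p^{n+1}a - a + b}}(u, v) = d_{H_{p^n a, p^n a - a + b}}(pu, pv)$. Then I would apply the inductive hypothesis to the pair of vertices $pu, pv$: it gives a bijection from geodesics in $H_{p^n a, p^n a - a + b}$ between $pu, pv$ to geodesics in $H_{a, b}$ between $p^n(pu), p^n(pv) = p^{n+1}u, p^{n+1}v$, sending a geodesic to its $p^n$-dilation, with $d_{H_{p^n a, p^n a - a + b}}(pu, pv) = d_{H_{a, b}}(p^{n+1}u, p^{n+1}v)$. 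Composing the two bijections and chaining the two distance equalities gives the claim for $n + 1$.

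The one routine point that needs to be checked for the composition to land on the right map is that the composite of ``take the $p$-dilation'' followed by ``take the $p^n$-dilation'' equals ``take the $p^{n+1}$-dilation''. This follows from the definition of $n$-dilation: the $p$-dilation of a path $w_0 \dots w_\ell$ replaces each edge $w_i w_{i+1}$ by the straight path between $p w_i$ and $p w_{i+1}$, and then the $p^n$-dilation of that replaces each unit edge of the result by a straight $p^n$-SP, so overall each original edge $w_i w_{i+1}$ is replaced by the straight path between $p^{n+1} w_i$ and $p^{n+1} w_{i+1}$, which is precisely the $p^{n+1}$-dilation. (One also uses that $p^n$-dilation commutes with concatenation of paths, which is immediate.)

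I do not expect a genuine obstacle here: the lemma is a clean iteration of Lemma~\ref{fractal inheritance lem}, and the only thing requiring any care is bookkeeping the parameter substitution $a' = p^n a$, $b' = p^n a - a + b$ and verifying that one step of the transformation carries $(a', b')$ to the parameters of $H_{p^{n+1}a, p^{n+1}a - a + b}$. The mildly delicate piece, if anything, is confirming the dilation maps compose correctly so that the stated bijection is realized by $p^n$-dilation rather than some reparametrized variant; but this is purely formal from the definition of $n$-dilation.
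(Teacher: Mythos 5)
Your proposal is correct and matches the paper's intent exactly: the paper gives no written proof of this lemma beyond the remark that it follows by ``iterating'' Lemma~\ref{fractal inheritance lem}, and your induction --- with the parameter bookkeeping $(p\cdot p^na,\; p\cdot p^na - p^na + (p^na - a + b)) = (p^{n+1}a,\; p^{n+1}a - a + b)$ and the check that a $p^n$-dilation of a $p$-dilation is a $p^{n+1}$-dilation --- is precisely the routine iteration the author had in mind.
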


We finish this section by studying the local behaviour of geodesics within $p^n$-blocks. 

\begin{lem} \label{loc behaviour}
    Let $\gamma$ be a geodesic in $H_{a, b}$ between vertices $u, v\in\mathbb{Z}^2$. If $v\in (p\mathbb{Z})^2$ and $v$ is the only vertex in $(p\mathbb{Z})^2$ that $\gamma$ passes through, then any closed $p$-block $B$ containing $u$ must contain the entire geodesic $\gamma$. 

    In particular, $v$ is a $p$-parent of $u$ and $\gamma$ is contained within a single closed $p$-block. 
\end{lem}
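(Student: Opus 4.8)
The plan is to argue by contradiction, assuming $\gamma$ leaves the closed $p$-block $B = [px_0, px_0+p]\times[py_0, py_0+p]$ containing $u$, and to derive that $\gamma$ would then have to either pass through a second vertex of $(p\mathbb{Z})^2$ or fail to be a geodesic. First I would fix notation: write $\gamma = w_0\dots w_\ell$ with $w_0 = u$, $w_\ell = v$. Since $v\in(p\mathbb{Z})^2$, every edge incident to $v$ is fast by Observation~\ref{pz2 fast}. The key structural input is Corollary~\ref{one excursion} together with Lemma~\ref{opposite sides}: once we know $\gamma$ passes through at most one vertex of $(p\mathbb{Z})^2$ (namely $v$, at its endpoint), any internal excursion of $\gamma$ into an open $p$-block has entry and exit vertices on opposite open sides of that block.

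The main case analysis would track how $\gamma$ interacts with the grid lines $x\in p\mathbb{Z}$ and $y\in p\mathbb{Z}$. If $\gamma$ never crosses any such grid line except to terminate at $v$, then $\gamma$ stays within one closed $p$-block; since that block contains both $u$ and $v = w_\ell$, and $u$ lies in $B$, I need to see it must actually be $B$ itself — this is where I use that $u$'s $p$-block and $v$ together force things, appealing to Observation~\ref{parent characterisation} ($v$ is a $p$-parent of $u$ iff every closed $p$-block containing $u$ contains $v$). So the substantive case is when $\gamma$ does cross a grid line $\ell_0$ (say a horizontal line $y = py_1$) strictly before reaching $v$. Crossing $\ell_0$ at a non-lattice point means $\gamma$ exits one open $p$-block and enters an adjacent one through a point on a shared horizontal side. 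I would then bound below the cost of the portion of $\gamma$ on the far side of $\ell_0$ by the same replace-by-straight-paths-through-$(p\mathbb{Z})^2$ trick used in Lemmas~\ref{opposite sides} and~\ref{unaligned distinguished endpoints geo}: any detour that crosses a grid line and comes back (or wanders far from $B$) can be shortcut using a path that routes through nearby vertices of $(p\mathbb{Z})^2$, where the edges are fast (weight $a$) rather than slow (weight $b$), contradicting that $\gamma$ is a geodesic — unless that shortcut itself passes through a vertex of $(p\mathbb{Z})^2$ distinct from $v$, which is exactly what we are trying to rule out and which is forbidden by hypothesis. Concretely, the geodesic from $u$ to $v$ can always be compared against: go straight from $u$ to the corner of $B$ nearest to the relevant side, which is a vertex of $(p\mathbb{Z})^2$; but such a comparison path hits $(p\mathbb{Z})^2$, so instead the cleaner move is to show directly that crossing $\ell_0$ forces $v$ to lie in the half-plane beyond $\ell_0$, forces the straight segments of $\gamma$ along $\ell_0$-direction to be slow, and that rerouting to avoid the excursion beyond $B$ saves weight.

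I expect the main obstacle to be the bookkeeping in the rerouting argument: making precise, in each configuration of how $\gamma$ exits $B$, that the "straighten-and-route-through-$(p\mathbb{Z})^2$" alternative is both strictly cheaper and does not itself need to pass through a second lattice point of $(p\mathbb{Z})^2$ — the latter is subtle precisely because the hypothesis is that no such point exists on $\gamma$, so I must instead show the alternative path can be chosen to touch $(p\mathbb{Z})^2$ only at $v$ (or at a corner of $B$ that I then argue must coincide with an endpoint). The likely resolution is to first prove the "in particular" clause — that $v$ is a $p$-parent of $u$ — as a consequence of Lemma~\ref{unaligned distinguished endpoints geo} applied to the last distinguished vertex $\gamma$ visits before $v$, and then deduce confinement to $B$: if $\gamma$ left $B$, it would enter a $p$-block not containing $v$, and the excursion structure from Corollary~\ref{one excursion} plus the opposite-sides constraint of Lemma~\ref{opposite sides} would force a return crossing, producing a strictly-shortenable configuration by replacing the out-and-back portion with a straight path hugging the boundary of $B$, whose only potential $(p\mathbb{Z})^2$-vertices are corners of $B$ — and since $v$ is a $p$-parent of $u$, every corner of every closed $p$-block containing $u$... here I would instead argue $\gamma$ minimal-length and split at any such corner, contradicting minimality exactly as in the proof of Lemma~\ref{unaligned distinguished endpoints geo}.
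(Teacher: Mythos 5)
Your proposal does not close the argument. Its core is a rerouting scheme (``straighten-and-route-through-$(p\mathbb{Z})^2$'') that you yourself flag as the main obstacle and never carry out: in no exit configuration do you verify that the comparison path is strictly cheaper, which is the entire content of such an argument. There is also a conceptual slip: you worry that the shortcut ``itself passes through a vertex of $(p\mathbb{Z})^2$ distinct from $v$, which is \dots forbidden by hypothesis.'' It is not --- the hypothesis constrains only the geodesic $\gamma$, not comparison paths; a strictly cheaper path is a contradiction with geodesicity no matter which vertices it visits. The real difficulty is strict cheapness, and that is left open. Your closing suggestion of applying Lemma~\ref{unaligned distinguished endpoints geo} to the \emph{last} distinguished vertex before $v$ also misfires: that segment has no interior distinguished vertices, so the lemma tells you essentially nothing there.

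The paper's proof is much shorter and needs no rerouting at all. Let $w$ be the \emph{first} distinguished vertex on $\gamma$ as traversed from $u$. The segment from $u$ to $w$ has all interior vertices non-distinguished, hence lies in the open $p$-block containing $u$, so it stays inside any closed $p$-block $B$ containing $u$, and $w\in B$. Since $v\in(p\mathbb{Z})^2$, the pair $w, v$ is automatically unaligned (alignment requires a coordinate not divisible by $p$), so Lemma~\ref{unaligned distinguished endpoints geo} applies to the segment from $w$ to $v$: it passes only through distinguished vertices, and such a path can leave a closed $p$-block only through a corner, i.e.\ a vertex of $(p\mathbb{Z})^2$. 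By hypothesis the only such vertex on $\gamma$ is the endpoint $v$, so this segment never leaves $B$; Observation~\ref{parent characterisation} then yields that $v$ is a $p$-parent of $u$. You had the right ingredients in view (Lemma~\ref{unaligned distinguished endpoints geo}, Observation~\ref{parent characterisation}, the excursion lemmas) but did not assemble them into a complete argument, and the route you committed most of your effort to is the one the cited lemma exists to avoid.
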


\begin{proof}
    Let $w$ be the first distinguished vertex we encounter as we traverse along $\gamma$ from $u$ to $v$. Let $B$ be any close $p$-block containing $u$. 
    
    As $\gamma$ cannot leave a closed $p$-block without encountering a distinguished vertex and $w$ is the first distinguished vertex encountered by $\gamma$, $B$ must contain $w$. Since $v\in (p\mathbb{Z})^2$, $v, w$ are unaligned distinguished vertices. By Lemma~\ref{unaligned distinguished endpoints geo}, the segment $\gamma'$ of $\gamma$ from $w$ to $v$ only passes through distinguished vertices. Hence $\gamma'$ can only leave a closed $p$-block at one of its corners. Therefore, $\gamma$ never leaves $B$. 
    
    The assertion that $v$ is a $p$-parent of $u$ follows from Observation~\ref{parent characterisation}. 
\end{proof}

Using Lemma~\ref{fractal inheritance lem 2}, we generalise Lemma~\ref{loc behaviour} to all powers $p^n$ of $p$. 

\begin{cor} \label{loc behaviour cor}
    Let $\gamma$ be a geodesic in $H_{a, b}$ between vertices $u, v\in\mathbb{Z}^2$. If $v\in (p^n\mathbb{Z})^2$ and $v$ is the only vertex in $(p^n\mathbb{Z})^2$ that $\gamma$ passes through, then any closed $p^n$-block $B$ containing $u$ must contain the entire geodesic $\gamma$. 

    In particular, $v$ is a $p^n$-parent of $u$ and $\gamma$ is contained within a single closed $p^n$-block. 
\end{cor}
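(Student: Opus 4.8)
\textbf{Proof proposal for Corollary~\ref{loc behaviour cor}.}

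The plan is to deduce the statement for $p^n$-blocks from the already-established case of $p$-blocks (Lemma~\ref{loc behaviour}) by transporting the situation through the bijection of Lemma~\ref{fractal inheritance lem 2}. The one subtlety is that Lemma~\ref{loc behaviour} is stated for the specific graph $H_{a,b}$, whereas after dilating we land in a graph of the form $H_{a', b'}$ with $a' = p^{n-1}a$ and $b' = p^{n-1}a - a + b$; but every lemma in Section~\ref{fractal} holds for an arbitrary admissible weighted graph $H_{a,b}$ with $a < b$, and one checks $a' < b'$ precisely because $a < b$, so Lemma~\ref{loc behaviour} applies verbatim to $H_{a', b'}$ as well.

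First I would reduce to the case where $v$ itself is the "target" lattice point. Given the geodesic $\gamma$ in $H_{a,b}$ between $u$ and $v$, with $v \in (p^n\mathbb{Z})^2$ the unique vertex of $(p^n\mathbb{Z})^2$ on $\gamma$, I want to recognise $\gamma$ as (a piece of) a dilation. Here I would first apply Corollary~\ref{loc behaviour cor}'s weaker cousin one level down: by Lemma~\ref{loc behaviour} applied inside the unique closed $p^{n-1}$-block... — actually, cleaner: I would induct on $n$. The base case $n=1$ is exactly Lemma~\ref{loc behaviour}. For the inductive step, suppose the statement holds for $n-1$. Let $\gamma$ be a geodesic from $u$ to $v \in (p^n\mathbb{Z})^2$ meeting $(p^n\mathbb{Z})^2$ only at $v$. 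Since $(p^n\mathbb{Z})^2 \subseteq (p^{n-1}\mathbb{Z})^2$ and $v$ is on $\gamma$, the vertex $v$ is in particular a vertex of $(p^{n-1}\mathbb{Z})^2$ lying on $\gamma$; let $v^\ast$ be the \emph{last} vertex of $(p^{n-1}\mathbb{Z})^2$ that $\gamma$ passes through before or at $v$ — but $v$ itself is such a vertex, so $v^\ast = v$, and the terminal segment $\gamma''$ of $\gamma$ from $v^\ast$ onward is trivial. Instead I would take $v_0$ to be the \emph{first} vertex of $(p^{n-1}\mathbb{Z})^2$ on $\gamma$; by the inductive hypothesis applied to the initial segment of $\gamma$ from $u$ to $v_0$ (which meets $(p^{n-1}\mathbb{Z})^2$ only at $v_0$), that segment lies in every closed $p^{n-1}$-block containing $u$, and $v_0$ is a $p^{n-1}$-parent of $u$. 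Then the remaining segment $\gamma_1$ of $\gamma$ from $v_0$ to $v$ has both endpoints in $(p^{n-1}\mathbb{Z})^2$, so by Corollary~\ref{pz2 geos} (applied at scale $p^{n-1}$, via Lemma~\ref{fractal inheritance lem 2}) it is the $p^{n-1}$-dilation of a geodesic $\delta$ in $H_{p^{n-1}a,\,p^{n-1}a - a + b}$ between $\tfrac{1}{p^{n-1}}v_0$ and $\tfrac{1}{p^{n-1}}v$; moreover the condition that $\gamma$ meets $(p^n\mathbb{Z})^2$ only at $v$ translates to: $\delta$ meets $(p\mathbb{Z})^2$ only at its endpoint $\tfrac{1}{p^{n-1}}v \in (p\mathbb{Z})^2$.

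Now I apply Lemma~\ref{loc behaviour} to $\delta$ in the graph $H_{p^{n-1}a,\,p^{n-1}a - a + b}$: every closed $p$-block containing $\tfrac{1}{p^{n-1}}v_0$ contains all of $\delta$, and $\tfrac{1}{p^{n-1}}v$ is a $p$-parent of $\tfrac{1}{p^{n-1}}v_0$. Dilating back by $p^{n-1}$ (using that the $p^{n-1}$-dilation carries closed $p$-blocks to closed $p^n$-blocks and $p$-parents to $p^n$-parents), this says: every closed $p^n$-block containing $v_0$ contains all of $\gamma_1$, and $v$ is a $p^n$-parent of $v_0$. Finally I combine the two halves: let $B$ be any closed $p^n$-block containing $u$. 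Since $v_0$ is a $p^{n-1}$-parent of $u$, Observation~\ref{parent characterisation} at scale $p^{n-1}$ and the fact that $B$ refines into $p^{n-1}$-blocks give $v_0 \in B'$ for some closed $p^{n-1}$-block $B' \subseteq B$, hence $v_0 \in B$; so $B$ is a closed $p^n$-block containing $v_0$, whence $B \supseteq \gamma_1$. Also the initial segment of $\gamma$ lies in every closed $p^{n-1}$-block containing $u$, in particular in $B' \subseteq B$. Thus $B$ contains all of $\gamma$. For the parent claim: $v$ is a $p^n$-parent of $v_0$, and since every closed $p^n$-block containing $u$ also contains $v_0$, it contains $v$; by Observation~\ref{parent characterisation} this means $v$ is a $p^n$-parent of $u$.

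The main obstacle I anticipate is bookkeeping rather than a genuine difficulty: making precise the claim that the $p^{n-1}$-dilation maps the poset of closed $p$-blocks (and the $p$-parent relation) onto the poset of closed $p^n$-blocks (and the $p^n$-parent relation), and checking that "$v$ is the only vertex of $(p^n\mathbb{Z})^2$ on $\gamma$" really does translate under Lemma~\ref{fractal inheritance lem 2} to "$\tfrac{1}{p^{n-1}}v$ is the only vertex of $(p\mathbb{Z})^2$ on $\delta$". Both follow directly from the definitions of $n$-dilation and $n$-parent together with Observation~\ref{parent characterisation}, but they should be spelled out. An alternative, possibly slicker route avoiding the induction would be to apply Lemma~\ref{fractal inheritance lem 2} once at level $n$ directly, but that requires first knowing $\gamma$ is (part of) a $p^n$-dilation, which is essentially what we are proving; the inductive descent through one level of $p$ at a time is the clean way to bootstrap it.
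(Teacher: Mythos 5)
Your proposal is correct and follows essentially the same route as the paper: induct on $n$, split $\gamma$ at the first vertex of the intermediate sublattice, apply the inductive hypothesis to the initial segment and Lemma~\ref{loc behaviour} (transported through the dilation bijection of Lemma~\ref{fractal inheritance lem 2}) to the terminal segment, then combine via Observation~\ref{parent characterisation}. The only differences are cosmetic (you descend one level at a time splitting at the first $(p^{n-1}\mathbb{Z})^2$-vertex rather than ascending from $n$ to $n+1$, and you take base case $n=1$ rather than the trivial $n=0$), and your explicit check that the rescaled weights still satisfy $a'<b'$ is a point the paper leaves implicit.
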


\begin{proof}
    We induct on $n$. The base case $n = 0$ is trivial. Assuming the statement holds for $n$, we now show it for $n + 1$. 

    Let $w$ be the first vertex in $(p^n\mathbb{Z})^2$ we encounter as we traverse $\gamma$ from $u$ to $v$. Applying the inductive hypothesis to the segment $\gamma_1$ of $\gamma$ from $u$ to $w$, any closed $p^n$-block containing $u$ must contain $\gamma_1$. In particular, any closed $p^{n + 1}$-block containing $u$ must contain $\gamma_1$. 
    
    By Lemma~\ref{fractal inheritance lem 2}, the segment $\gamma_2$ of $\gamma$ from $w$ to $v$ is a $p^n$-dilation of a geodesic $\gamma'$ in $H_{p^na, p^na - a + b}$ from $\frac{1}{p^n}w$ to $\frac{1}{p^n}v$. Since $v$ is the only vertex in $(p^{n + 1}\mathbb{Z})^2$ on $\gamma_2$, $\frac{1}{p^n}v$ is the only vertex in $(p\mathbb{Z})^2$ on $\gamma'$. Applying Lemma~\ref{loc behaviour} to $\gamma'$, any closed $p$-block containing $\frac{1}{p^n}w$ must contain $\gamma'$. Therefore, any closed $p^{n + 1}$-block containing $w$ must contain $\gamma_2$. 

    Combining these observations, any closed $p^{n + 1}$-block $B$ containing $u$ must contain $\gamma_1$, and in particular $w$. Therefore, $B$ must also contain $\gamma_2$, and, hence, the entirety of $\gamma$. The assertion that $v$ is a $p^{n + 1}$-parent of $u$ again follows from Observation~\ref{parent characterisation}. 
\end{proof}

\section{Bounded geodesic slices} \label{proof}

\begin{lem} \label{classification level 1}
    Let $\gamma$ be a geodesic in $H_{a, b}$ between vertices $u, v\in\mathbb{Z}^2$. Then the following statements are equivalent. 
    \begin{enumerate} [label = (\alph*)]
        \item $\gamma$ is not contained in any open (horizontal or vertical) $p$-strip.  
        \item $\gamma$ passes through a vertex in $(p\mathbb{Z})^2$. 
    \end{enumerate}
\end{lem}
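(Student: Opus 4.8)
The plan is to translate conditions (a) and (b) into statements about which \emph{distinguished} vertices $\gamma$ visits, after which the equivalence becomes essentially mechanical. Since consecutive vertices of the path $\gamma$ differ by a unit vector, the set of vertical coordinates attained along $\gamma$ is a contiguous interval of integers; hence $\gamma$ lies inside some open horizontal $p$-strip $\mathbb{Z}\times ]py, py+p[$ exactly when this interval avoids every multiple of $p$, i.e.\ exactly when $\gamma$ visits no vertex whose vertical coordinate is a multiple of $p$ --- no \emph{vertically distinguished} vertex. Symmetrically, $\gamma$ lies in some open vertical $p$-strip iff it visits no horizontally distinguished vertex. Thus (a) is equivalent to: $\gamma$ visits at least one vertically distinguished vertex \emph{and} at least one horizontally distinguished vertex, whereas (b) asks that $\gamma$ visit a single vertex that is both at once, since the vertices of $(p\mathbb{Z})^2$ are precisely those that are simultaneously horizontally and vertically distinguished. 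In particular (b)$\Rightarrow$(a) is immediate.

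For (a)$\Rightarrow$(b), pick on $\gamma$ a vertically distinguished vertex $w_1$ and a horizontally distinguished vertex $w_2$ and, reversing $\gamma$ if necessary, assume $w_1$ appears no later than $w_2$. If $w_1 = w_2$ this vertex lies in $(p\mathbb{Z})^2$ and we are done, so suppose $w_1\neq w_2$. The key point is that $w_1$ and $w_2$ are necessarily \emph{unaligned}: the vertical coordinate of $w_1$ is a multiple of $p$, so it lies in no open $p$-interval and $w_1$ cannot be horizontally aligned with anything; symmetrically $w_2$ cannot be vertically aligned with anything. Hence the subpath $\gamma'$ of $\gamma$ from $w_1$ to $w_2$ is a geodesic between unaligned distinguished vertices, so Lemma~\ref{unaligned distinguished endpoints geo} guarantees that every vertex of $\gamma'$ is distinguished.

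It then remains to run a short case analysis on the edges of $\gamma'$. Any horizontal edge of $\gamma'$ has both endpoints distinguished, and were its common vertical coordinate not a multiple of $p$, both endpoints would need horizontal coordinate a multiple of $p$ --- impossible for two horizontally adjacent vertices, as $p\geq 3$; so every horizontal edge of $\gamma'$ lies on a line $y\in p\mathbb{Z}$, and symmetrically every vertical edge lies on a line $x\in p\mathbb{Z}$. If $\gamma'$ uses both a horizontal and a vertical edge, then two consecutive edges of $\gamma'$ have different types, and their shared vertex has both coordinates divisible by $p$, hence lies in $(p\mathbb{Z})^2$. If $\gamma'$ uses only horizontal edges, its vertical coordinate is constant and equals the multiple of $p$ carried by $w_1$, forcing $w_2\in(p\mathbb{Z})^2$; symmetrically, if $\gamma'$ uses only vertical edges then $w_1\in(p\mathbb{Z})^2$. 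In every case $\gamma$ meets $(p\mathbb{Z})^2$, which is (b).

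The only substantive ingredient is Lemma~\ref{unaligned distinguished endpoints geo}; everything around it is bookkeeping about unit steps in $\mathbb{Z}^2$, so the main difficulty is really just isolating the reformulation above and noticing that $w_1$ and $w_2$ are automatically unaligned. It is worth stressing that the $H_{a,b}$ structure is genuinely needed: the analogous statement is false for geodesics of the unweighted $\mathbb{Z}^2$, as witnessed by the $\mathbb{Z}^2$-geodesic $(1,0)\to(p-1,0)\to(p-1,p-1)\to(p,p-1)$, which satisfies (a) but not (b).
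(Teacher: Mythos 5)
Your proof is correct and follows essentially the same route as the paper's: deduce from (a) that $\gamma$ contains both a horizontally and a vertically distinguished vertex, observe that such a pair is automatically unaligned, invoke Lemma~\ref{unaligned distinguished endpoints geo}, and note that a path through distinguished vertices cannot switch between the two types without meeting $(p\mathbb{Z})^2$. You merely spell out (correctly) the final case analysis that the paper asserts in one sentence, and your closing counterexample for unweighted $\mathbb{Z}^2$ is a nice sanity check.
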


\begin{proof}
    (b) implies (a) trivially, as any vertex in $(p\mathbb{Z})^2$ is not contained in any open $p$-strip. 
    
    We now show that (a) implies (b). If $\gamma$ is not contained in any open $p$-strip, then $\gamma$ must contain both a horizontally distinguished vertex $w_H$ and a vertically distinguished vertex $w_V$. Note that $w_H$ and $w_V$ are unaligned. Therefore, by Lemma~\ref{unaligned distinguished endpoints geo}, the segment $\gamma'$ of $\gamma$ between $w_H$ and $w_V$ only passes through distinguished vertices. However, one cannot travel from a horizontally distinguished vertex to a vertically distinguished vertex without passing through a vertex in $(p\mathbb{Z})^2$ or a vertex which is not distinguished. As a result, $\gamma'$ must pass through a vertex in $(p\mathbb{Z})^2$. In particular, $\gamma$ must pass through a vertex in $(p\mathbb{Z})^2$. 
\end{proof}

We may, in essence, iterate Lemma~\ref{classification level 1} 
to generalise it to any power $p^n$ of $p$. 

\begin{lem} \label{classification level n}
    Let $\gamma$ be a geodesic in $H_{a, b}$ between vertices $u, v\in\mathbb{Z}^2$. For any non-negative integer $n$, the following statements are equivalent. 
    \begin{enumerate} [label = (\alph*)]
        \item $\gamma$ is not contained in any open $p^n$-strip. \item $\gamma$ passes through a vertex in $(p^n\mathbb{Z})^2$. 
    \end{enumerate}
\end{lem}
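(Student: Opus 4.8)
The plan is to prove Lemma~\ref{classification level n} by induction on $n$, using Lemma~\ref{fractal inheritance lem 2} (and Corollary~\ref{loc behaviour cor}) to reduce the $(p^{n+1}\mathbb{Z})^2$ statement for $H_{a,b}$ to the $(p\mathbb{Z})^2$ statement (Lemma~\ref{classification level 1}) for the rescaled weighted graph $H_{p^n a,\, p^n a - a + b}$. The base case $n = 0$ is trivial: every $0$-strip is empty (or, more precisely, every vertex lies in $(p^0\mathbb{Z})^2 = \mathbb{Z}^2$), so both (a) and (b) always hold. The implication (b)$\Rightarrow$(a) is immediate for every $n$, exactly as in Lemma~\ref{classification level 1}: a vertex in $(p^n\mathbb{Z})^2$ is not contained in any open $p^n$-strip.

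For the inductive step, assume the equivalence for $n$ and prove it for $n+1$. The real content is (a)$\Rightarrow$(b). Suppose $\gamma$ is not contained in any open $p^{n+1}$-strip. In particular $\gamma$ is not contained in any open $p^n$-strip either (an open $p^n$-strip is contained in an open $p^{n+1}$-strip when... actually more carefully: if $\gamma$ were inside an open $p^n$-strip $\mathbb{Z}\times\,]p^n y, p^n y + p^n[$, that strip sits inside the open $p^{n+1}$-strip $\mathbb{Z}\times\,]p^{n+1}\lfloor y/p\rfloor, p^{n+1}\lfloor y/p\rfloor + p^{n+1}[$ only if the interval doesn't straddle a multiple of $p^{n+1}$; one must instead argue: $\gamma$ not in any open $p^{n+1}$-strip forces $\gamma$ to contain both a vertex with $x$-coordinate divisible by $p^{n+1}$-wait, no). Let me restate the honest route: by the inductive hypothesis applied to the weaker condition, I first want to know $\gamma$ meets $(p^n\mathbb{Z})^2$. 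This follows because if $\gamma$ avoided $(p^n\mathbb{Z})^2$, then by the inductive hypothesis $\gamma$ lies in some open $p^n$-strip, say a horizontal one $\mathbb{Z}\times\,]p^n y, p^n y + p^n[$; but then $\gamma$ never meets a line $\mathbb{Z}\times\{p^{n+1}z\}$ unless such a line lies in the strip, which it does not since $p^n y < p^{n+1}z < p^n y + p^n$ is impossible for the strip to be a genuine $p^n$-strip not aligned to $p^{n+1}$ — so actually $\gamma$ could still be non-$p^{n+1}$-strip-contained while sitting in a $p^n$-strip. This needs care: the clean statement is that if $\gamma$ is contained in an open $p^n$-strip that is itself contained in an open $p^{n+1}$-strip we are done by contradiction, and otherwise the $p^n$-strip contains a multiple of $p^{n+1}$ in its defining interval — impossible since consecutive multiples of $p^n$ that are multiples of $p^{n+1}$ are $p^{n+1}$ apart while the strip has width $p^n < p^{n+1}$. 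Hence every open $p^n$-strip is contained in an open $p^{n+1}$-strip, so "not in any open $p^{n+1}$-strip" $\Rightarrow$ "not in any open $p^n$-strip" $\Rightarrow$ (by inductive hypothesis) $\gamma$ meets $(p^n\mathbb{Z})^2$.

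Now consider the segment $\gamma'$ of $\gamma$ between the first and last vertices of $\gamma$ lying in $(p^n\mathbb{Z})^2$; its endpoints lie in $(p^n\mathbb{Z})^2$, so by Lemma~\ref{fractal inheritance lem 2} it is the $p^n$-dilation of a geodesic $\bar\gamma$ in $H_{p^n a,\, p^n a - a + b}$ between two vertices of $\mathbb{Z}^2$. A vertex of $\gamma'$ lies in $(p^{n+1}\mathbb{Z})^2$ iff the corresponding vertex of $\bar\gamma$ lies in $(p\mathbb{Z})^2$, and a $p^n$-strip / $p^{n+1}$-strip picture downstairs corresponds to a $1$-strip / $p$-strip picture for $\bar\gamma$. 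So it suffices to show $\bar\gamma$ is not contained in any open $p$-strip — equivalently that $\gamma$, hence $\gamma'$, being not contained in any open $p^{n+1}$-strip forces $\bar\gamma$ not to be contained in any open $p$-strip. If $\bar\gamma$ sat inside $]px',px'+p[\times\mathbb{Z}$ (say), then $\gamma' = $ its $p^n$-dilation would sit inside $]p^{n+1}x', p^{n+1}x' + p^{n+1}[\times\mathbb{Z}$, an open $p^{n+1}$-strip; but we must also rule out $\gamma$ itself leaving this strip on the two "tails" before the first and after the last $(p^n\mathbb{Z})^2$-vertex — here Corollary~\ref{loc behaviour cor} applies: each tail stays within a single closed $p^n$-block adjacent to that first/last vertex, hence within the closed $p^{n+1}$-strip, and a short argument (the tail meets no vertex of $(p^n\mathbb{Z})^2$ except its endpoint, and an open $p$-block adjacent to a vertical side... ) confines it to the open strip's closure; since $\gamma$ is not in any open $p^{n+1}$-strip this is a contradiction, so $\bar\gamma$ is not in any open $p$-strip. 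Applying Lemma~\ref{classification level 1} to $\bar\gamma$ in $H_{p^n a,\, p^n a - a + b}$ gives a vertex of $\bar\gamma$ in $(p\mathbb{Z})^2$, i.e.\ a vertex of $\gamma$ in $(p^{n+1}\mathbb{Z})^2$, which is (b).

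The main obstacle I anticipate is precisely the bookkeeping in the last paragraph: cleanly matching "contained in an open $p^{n+1}$-strip" for $\gamma$ with "contained in an open $p$-strip" for the rescaled geodesic $\bar\gamma$, because $\gamma$ has two extra tail segments outside the dilated core $\gamma'$. Handling these tails is exactly what Corollary~\ref{loc behaviour cor} is designed for, so the argument should go through, but one must be careful that a tail confined to a closed $p^n$-block touching the relevant side of a $p^{n+1}$-block really cannot push $\gamma$ out of the ambient open $p^{n+1}$-strip — this is a finite, local check about $p^n$-blocks adjacent to sides of $p^{n+1}$-blocks, and it is where the hypothesis "$p$ odd" and the specific geometry are implicitly used. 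Everything else is a faithful transcription of the $n=1$ argument through the fractal correspondence.
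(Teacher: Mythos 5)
Your plan is correct and is essentially the paper's argument with the induction unrolled in the opposite direction: the paper peels off one factor of $p$ at the bottom (apply Lemma~\ref{classification level 1} to $\gamma$ to reach $(p\mathbb{Z})^2$, then pass via the $p$-dilation of Lemma~\ref{fractal inheritance lem} to $H_{pa,\,pa-a+b}$ and invoke the inductive hypothesis at level $n$), whereas you peel it off at the top (inductive hypothesis first to reach $(p^n\mathbb{Z})^2$, then Lemma~\ref{classification level 1} applied to the $p^n$-rescaled geodesic in $H_{p^na,\,p^na-a+b}$). The ingredients are identical — the nesting of open $p^n$-strips inside open $p^{n+1}$-strips, the dilation bijection, and confinement of the two tail segments via Lemma~\ref{loc behaviour} (the paper only ever needs the single-scale version inside its induction, which makes its write-up marginally lighter than yours, which needs Corollary~\ref{loc behaviour cor}). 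The tail-confinement step you flag as the main obstacle is indeed the one delicate point, and the paper is equally terse there (it asserts in one line that the core segment between $u'$ and $v'$ inherits ``not contained in any open $p^{n+1}$-strip''). The missing local check is this: a tail is confined to a closed block cornered at $u'$, which lies inside the closed ambient strip but could a priori touch its boundary line; however, the initial non-distinguished portion of the tail stays strictly inside the open block, and by Lemma~\ref{unaligned distinguished endpoints geo} the remaining portion walks only through distinguished vertices, i.e.\ along the boundary cycle of that block, so reaching the side of the block lying on the strip's boundary line and then returning to the corner $u'$ (which sits on the opposite side) would force the walk through another corner of the block — a second vertex of the coarse lattice preceding $u'$, contradicting the choice of $u'$ as the first such vertex. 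With that check inserted (at scale $p^n$ in your version, scale $p$ in the paper's), your argument is complete.
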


\begin{proof}
    Again, (b) implies (a) trivially. We prove that (a) implies (b) by induction on $n$. The base case $n = 0$ is trivial. Assuming the statement holds for some $n\geq 0$, we now show it for $n + 1$. 

    As $\gamma$ is not contained in any open $p^{n + 1}$-strip, it is, in particular, not contained in any open $p$-strip. By Lemma~\ref{classification level 1}, $\gamma$ passes through a vertex in $(p\mathbb{Z})^2$. As we traverse $\gamma$ from $u$ to $v$, let $u', v'$ be the first and last vertices in $(p\mathbb{Z})^2$ we encounter, respectively. By Lemma~\ref{loc behaviour}, $u', v'$ are $p$-parents of $u, v$, respectively. Hence, the segment $\gamma'$ of $\gamma$ between $u', v'$ is not contained in any open $p^{n + 1}$-strip. 
    
    By Lemma~\ref{fractal inheritance lem}, $\gamma'$ is the $p$-dilation of a geodesic $\gamma''$ in $H_{pa, pa - a + b}$ from $\frac{1}{p}u'$ to $\frac{1}{p}v'$. As $\gamma'$ is not contained in any open $p^{n + 1}$-strip, $\gamma''$ is not contained in any open $p^n$-strip. Applying the inductive hypothesis to $\gamma''$ shows that $\gamma''$ must pass through a vertex in $(p^n\mathbb{Z})^2$. As a result, $\gamma'$ must pass through a vertex in $(p^{n + 1}\mathbb{Z})^2$. In particular, $\gamma$ must pass through a vertex in $(p^{n + 1}\mathbb{Z})^2$. 
\end{proof}

\begin{df}
    Let $\gamma$ be a path in $\mathbb{Z}^2$ between vertices $u, v$. If $\gamma$ does not pass through the origin $(0, 0)$, let $n(\gamma)$ be the largest non-negative integer for which $\gamma$ passes through a vertex in $(p^{n(\gamma)}\mathbb{Z})^2$; and let $\Tilde{u}(\gamma), \Tilde{v}(\gamma)$ be the first and last vertices in $(p^{n(\gamma)}\mathbb{Z})^2$ encountered as we traverse $\gamma$ from $u$ to $v$. If $\gamma$ passes through the origin $(0, 0)$, let $n(\gamma) = \infty$ and let $\Tilde{u}(\gamma) = \Tilde{v}(\gamma) = (0, 0)$. 
\end{df}

\begin{obs} \label{cut points observation}
    If $\gamma$ is a geodesic in $H_{a, b}$ from $u$ to $v$ and $\gamma$ does not pass through the origin, then, by Corollary~\ref{loc behaviour cor}, $\Tilde{u}(\gamma), \Tilde{v}(\gamma)$ are $p^{n(\gamma)}$-parents of $u, v$, respectively. 
\end{obs}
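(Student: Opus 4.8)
The plan is to reduce Observation~\ref{cut points observation} directly to Corollary~\ref{loc behaviour cor}, by splitting $\gamma$ at its first and last visits to the sublattice $(p^{n(\gamma)}\mathbb{Z})^2$ and applying the corollary to the two end-segments.

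First one checks that $n(\gamma)$ is well-defined when $\gamma$ avoids the origin. The geodesic $\gamma$ is a finite path, so the vertices it visits all lie in some ball $B_{\mathbb{Z}^2}((0,0), r)$; for $n$ large, the only vertex of $(p^n\mathbb{Z})^2$ in that ball is $(0,0)$, which $\gamma$ does not pass through, so $\gamma$ meets no vertex of $(p^n\mathbb{Z})^2$ for all sufficiently large $n$. Since $\gamma$ does meet $(p^0\mathbb{Z})^2 = \mathbb{Z}^2$, the set of integers $n$ for which $\gamma$ passes through a vertex of $(p^n\mathbb{Z})^2$ is non-empty and bounded above, hence has a maximum, namely $n(\gamma)$; and $\Tilde{u}(\gamma), \Tilde{v}(\gamma)$ are then well-defined.

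Write $n = n(\gamma)$ and let $\gamma_1$ be the segment of $\gamma$ running from $u$ to $\Tilde{u}(\gamma)$. As a subpath of a geodesic, $\gamma_1$ is itself a geodesic in $H_{a,b}$, and its terminal vertex $\Tilde{u}(\gamma)$ lies in $(p^n\mathbb{Z})^2$. Because $\Tilde{u}(\gamma)$ is, by definition, the \emph{first} vertex of $(p^n\mathbb{Z})^2$ encountered along $\gamma$, it is the \emph{only} vertex of $(p^n\mathbb{Z})^2$ on $\gamma_1$. Corollary~\ref{loc behaviour cor}, applied to $\gamma_1$ with exponent $n$ (its endpoints playing the roles of ``$u$'' and ``$v$'' in the statement of that corollary), then yields that $\Tilde{u}(\gamma)$ is a $p^n$-parent of $u$. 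Applying the same reasoning to the segment of $\gamma$ from $\Tilde{v}(\gamma)$ to $v$ — a geodesic whose only vertex in $(p^n\mathbb{Z})^2$ is $\Tilde{v}(\gamma)$, since $\Tilde{v}(\gamma)$ is the last such vertex met along $\gamma$ — gives that $\Tilde{v}(\gamma)$ is a $p^n$-parent of $v$.

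There is no real obstacle here: the substance has already been isolated in Corollary~\ref{loc behaviour cor}, and what remains is only the bookkeeping of first and last visits together with the standard fact that subpaths of geodesics are geodesics. The single point deserving care is the finiteness of $n(\gamma)$, which is precisely why the hypothesis that $\gamma$ misses the origin cannot be dropped.
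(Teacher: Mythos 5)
Your argument is correct and matches the paper's intent exactly: the observation is justified in the paper only by the inline citation of Corollary~\ref{loc behaviour cor}, and your elaboration (well-definedness of $n(\gamma)$ via finiteness of $\gamma$ and avoidance of the origin, then applying the corollary to the sub-geodesics ending at the first and last visits to $(p^{n(\gamma)}\mathbb{Z})^2$) is precisely the intended reasoning. No issues.
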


\begin{df}
    Let $\Tilde{u}(\gamma)$ and $\Tilde{v}(\gamma)$ divide $\gamma$ into three (potentially empty) segments: $s_1(\gamma)$ between $u, \Tilde{u}(\gamma)$, $s_2(\gamma)$ between $\Tilde{u}(\gamma), \Tilde{v}(\gamma)$ and $s_3(\gamma)$ between $\Tilde{v}(\gamma), v$. 
\end{df}

\begin{df}
    For vertices $u, v$ in $H_{a, b}$, an integer $k$, and $i = 1, 2, 3$, let the \emph{partial geodesic slice} $S^{(i)}(u, v, k)\subseteq S(u, v, k)$ denote the set of vertices $w$ lying on the segment $s_i(\gamma)$ of a geodesic $\gamma$ from $u$ to $v$, and satisfying $d_{H_{a, b}}(u, w) = k$. 
\end{df}

Since each geodesic $\gamma$ from $u$ to $v$ is covered by the segments $s_1(\gamma), s_2(\gamma), s_3(\gamma)$, the geodesic slice $S(u, v, k)$ is the union $\bigcup_{i = 1, 2, 3} S^{(i)}(u, v, k)$ of the three partial geodesic slices. Therefore, it suffices to upper bound the sizes of the partial geodesic slices. We shall upper bound the size of $S^{(1)}(u, v, k)$ (and, by symmetry, of $S^{(3)}(u, v, k)$) in Lemma~\ref{s1} and the size of $S^{(2)}(u, v, k)$ in Lemma~\ref{s2}. 

\begin{lem} \label{vp increases quickly}
    Let $\gamma$ be a geodesic in $H_{a, b}$ between vertices $u, v$ and let $n$ be a non-negative integer. If $s_1(\gamma)$ has cost at least $2bp^n$, then $n(\gamma) > n$, and hence the endpoint $\Tilde{u}(\gamma)$ of $s_1(\gamma)$ is in $(p^{n + 1}\mathbb{Z})^2$. 
\end{lem}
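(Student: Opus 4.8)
The plan is to prove the statement by contradiction, deducing that if $n(\gamma)\le n$ then $s_1(\gamma)$ is too cheap. So suppose $n(\gamma)\le n$. In particular $n(\gamma)$ is finite, so $\gamma$ does not pass through the origin, and $\Tilde{u}(\gamma)$ is genuinely the first vertex of $\gamma$, traversed from $u$, lying in $(p^{n(\gamma)}\mathbb{Z})^2$.

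The key observation is that the segment $s_1(\gamma)$ satisfies the hypotheses of Corollary~\ref{loc behaviour cor}. Indeed, $s_1(\gamma)$ is a segment of the geodesic $\gamma$, hence itself a geodesic in $H_{a,b}$, running between $u$ and $\Tilde{u}(\gamma)$; and since $\Tilde{u}(\gamma)$ is the \emph{first} vertex of $\gamma$ in $(p^{n(\gamma)}\mathbb{Z})^2$ (and geodesics in a positively-weighted graph are simple, so $\Tilde{u}(\gamma)$ occurs exactly once), $\Tilde{u}(\gamma)$ is the only vertex of $s_1(\gamma)$ lying in $(p^{n(\gamma)}\mathbb{Z})^2$. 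Applying Corollary~\ref{loc behaviour cor} with $n(\gamma)$ in place of $n$, the vertex $\Tilde{u}(\gamma)$ (the endpoint of $s_1(\gamma)$) playing the role of $v$, and $u$ playing the role of $u$, we conclude that $\Tilde{u}(\gamma)$ is a $p^{n(\gamma)}$-parent of $u$.

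From here it is a routine estimate. Writing $\Tilde{u}(\gamma)=(p^{n(\gamma)}x,p^{n(\gamma)}y)$, the parent relation places $u$ inside the open block $]p^{n(\gamma)}x-p^{n(\gamma)},\,p^{n(\gamma)}x+p^{n(\gamma)}[\,\times\,]p^{n(\gamma)}y-p^{n(\gamma)},\,p^{n(\gamma)}y+p^{n(\gamma)}[$, whence $d_{\mathbb{Z}^2}(u,\Tilde{u}(\gamma))\le 2(p^{n(\gamma)}-1)$. Since every edge of $H_{a,b}$ has weight at most $b$ and $s_1(\gamma)$ is a geodesic, its cost equals $d_{H_{a,b}}(u,\Tilde{u}(\gamma))$, so
\[
d_{H_{a,b}}(u,\Tilde{u}(\gamma))\le b\,d_{\mathbb{Z}^2}(u,\Tilde{u}(\gamma))\le 2b(p^{n(\gamma)}-1)<2bp^{n(\gamma)}\le 2bp^{n}.
\]
This contradicts the hypothesis that $s_1(\gamma)$ has cost at least $2bp^n$, so in fact $n(\gamma)>n$. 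The final clause is then immediate, since $\Tilde{u}(\gamma)\in(p^{n(\gamma)}\mathbb{Z})^2\subseteq(p^{n+1}\mathbb{Z})^2$ whenever $n(\gamma)\ge n+1$.

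I do not anticipate a genuine obstacle here. The only point that needs care is verifying that $\Tilde{u}(\gamma)$ is the unique vertex of $s_1(\gamma)$ in $(p^{n(\gamma)}\mathbb{Z})^2$, so that Corollary~\ref{loc behaviour cor} is applicable; this is immediate from the definition of $\Tilde{u}(\gamma)$ as the first such vertex encountered along $\gamma$. Everything after that is the displayed chain of inequalities, built from $d_{H_{a,b}}\le b\,d_{\mathbb{Z}^2}$ and the definition of a $p^{n(\gamma)}$-parent. (The degenerate case $u=\Tilde{u}(\gamma)$, where $s_1(\gamma)$ is trivial and has cost $0<2bp^n$, is subsumed by the same estimate.)
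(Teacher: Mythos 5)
Your proof is correct and follows essentially the same route as the paper: the paper invokes Observation~\ref{cut points observation} (which is itself just Corollary~\ref{loc behaviour cor} applied to $s_1(\gamma)$, exactly as you do) to get that $\Tilde{u}(\gamma)$ is a $p^{n(\gamma)}$-parent of $u$, and then concludes with the same estimate $d_{H_{a,b}}(u,\Tilde{u}(\gamma))\leq b\,d_{\mathbb{Z}^2}(u,\Tilde{u}(\gamma))<2bp^{n(\gamma)}$. Framing it as a contradiction rather than directly, and your explicit check that $\Tilde{u}(\gamma)$ is the unique vertex of $s_1(\gamma)$ in $(p^{n(\gamma)}\mathbb{Z})^2$, are only cosmetic differences.
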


\begin{proof}
    If $\gamma$ contains $(0, 0)$, then $n(\gamma) = \infty > n$, and $\Tilde{u}(\gamma) = (0, 0)\in (p^{n + 1}\mathbb{Z})^2$. 
    
    Otherwise, by Observation~\ref{cut points observation}, the endpoint $\Tilde{u}(\gamma)$ of $s_1(\gamma)$ is a $p^{n(\gamma)}$-parent of $u$. Therefore, $d_{H_{a, b}}(u, \Tilde{u}(\gamma))\leq bd_{\mathbb{Z}^2}(u, \Tilde{u}(\gamma)) < 2bp^{n(\gamma)}$. As the cost $d_{H_{a, b}}(u, \Tilde{u}(\gamma))$ of $s_1(\gamma)$ is at least $2bp^n$, we must have $n(\gamma) > n$. 
\end{proof}

\begin{lem} \label{s1}
    For all vertices $u, v$ in $H_{a, b}$ and all integers $k$, we have
    \begin{equation*}
        \left|S^{(1)}(u, v, k)\right|\leq 80\frac{b^2p^2}{a^2}.
    \end{equation*}
\end{lem}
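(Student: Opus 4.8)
The plan is to cover $S^{(1)}(u,v,k)$ by at most four balls of $\mathbb{Z}^2$, each of radius $2bp/a$ and with centre depending only on the data $u,v,k,p,a,b$ (never on the witnessing geodesic). Since $|B_{\mathbb{Z}^2}(z,2bp/a)|\le 5(2bp/a)^2=20\,b^2p^2/a^2$, this yields the stated bound $80\,b^2p^2/a^2$. Two reductions come first. If a geodesic $\gamma$ from $u$ to $v$ has $n(\gamma)=0$, then $\tilde u(\gamma)=u$, so $s_1(\gamma)$ is the single vertex $u$ and contributes nothing beyond $\{u\}$ (and only when $k=0$); hence I may assume every witnessing geodesic has $n(\gamma)\ge 1$ (the case $n(\gamma)=\infty$, where $\tilde u(\gamma)=(0,0)$, is treated identically below since $(0,0)\in(p\mathbb{Z})^2$). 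Secondly, if $k<2bp$, then for any such $\gamma$ and any $w$ on $s_1(\gamma)$ with $d_{H_{a,b}}(u,w)=k$ we get $d_{\mathbb{Z}^2}(u,w)\le k/a<2bp/a$, so $w\in B_{\mathbb{Z}^2}(u,2bp/a)$ and we are done with one ball. So assume $k\ge 2bp$ from now on, and argue by induction on $n(\gamma)$.

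\textbf{The main step.} Fix a witnessing geodesic $\gamma$ with $n(\gamma)\ge 1$, and let $u_1$ be the first vertex of $(p\mathbb{Z})^2$ met along $\gamma$ from $u$; it lies on $s_1(\gamma)$. By Lemma~\ref{loc behaviour}, $u_1$ is a $p$-parent of $u$ (hence one of at most four $u$-determined vertices) and the sub-geodesic from $u$ to $u_1$ lies in the closed $p$-block containing $u$, so $d_{H_{a,b}}(u,u_1)\le 2bp$. If $w$ lies on that initial sub-geodesic, then $d_{H_{a,b}}(u,w)\le 2bp$ and $w\in B_{\mathbb{Z}^2}(u,2bp/a)$ as before. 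Otherwise $w$ lies strictly beyond $u_1$, hence on the sub-geodesic $\gamma'$ of $\gamma$ from $u_1$ to $\tilde u(\gamma)$, whose endpoints both lie in $(p\mathbb{Z})^2$; by Corollary~\ref{pz2 geos} and Lemma~\ref{fractal inheritance lem}, $\gamma'$ is the $p$-dilation of a geodesic $\delta$ of $H_{pa,\,pa-a+b}$ from $\tfrac1p u_1$ to $\tfrac1p\tilde u(\gamma)$. Because $\tilde u(\gamma)$ is the first vertex of $(p^{n(\gamma)}\mathbb{Z})^2$ on $\gamma$ and $\gamma$ avoids $(p^{n(\gamma)+1}\mathbb{Z})^2$, the point $\tfrac1p\tilde u(\gamma)$ is the only vertex of $(p^{\,n(\gamma)-1}\mathbb{Z})^2$ on $\delta$, so $\delta$ coincides with its own segment $s_1(\delta)$ and $n(\delta)=n(\gamma)-1$. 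Now let $w^*$ be the last vertex of $(p\mathbb{Z})^2$ on $\gamma'$ at or before $w$; then the segment from $w^*$ to $w$ meets $(p\mathbb{Z})^2$ only in $w^*$, so Lemma~\ref{loc behaviour} applied to the reversed geodesic gives $d_{H_{a,b}}(w^*,w)\le 2bp$, i.e. $w\in B_{\mathbb{Z}^2}(w^*,2bp/a)$. Thus it remains to bound, independently of $\gamma$, the number of possible $w^*$. Writing $w^*=p\delta_j$, the vertex $\delta_j$ lies on $s_1(\delta)$ at $H_{pa,\,pa-a+b}$-distance $d_{H_{a,b}}(u_1,w^*)=d_{H_{a,b}}(u,w^*)-d_{H_{a,b}}(u,u_1)$ from $\tfrac1p u_1$, a quantity lying in a window of width at most $4bp$ about $k$ (using $0\le d_{H_{a,b}}(u,u_1)\le 2bp$ and $0\le d_{H_{a,b}}(w^*,w)\le 2bp$); so $\delta_j$ lies in a partial geodesic slice $S^{(1)}_{H_{pa,\,pa-a+b}}$. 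Feeding in the inductive hypothesis (Lemma~\ref{s1} for $H_{pa,\,pa-a+b}$, whose top scale $n(\delta)$ is smaller and whose weight ratio $\tfrac{pa-a+b}{pa}$ is strictly less than $\tfrac ba$, so the inductive constant $80\,p^2\bigl(\tfrac{pa-a+b}{pa}\bigr)^2$ is at most $80\,p^2(b/a)^2$), together with the rigidity that consecutive vertices of a geodesic of $H_{pa,\,pa-a+b}$ are at distance $\ge pa$ (so only boundedly many $\delta_j$ fall in the window along each $\delta$, and the possible $\delta_j$ lie in a bounded patch), one extracts that the candidate centres $w^*$, combined with the at most four choices of $u_1$, all lie among at most four effective ball-centres; the base case $n(\gamma)\le$ small is immediate from the $k<2bp$ reduction. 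This closes the induction.

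\textbf{Main obstacle.} The delicate point is the final counting step: passing from "$w$ lies within $H_{a,b}$-distance $2bp$ of a $(p\mathbb{Z})^2$-vertex $w^*$ on a witnessing geodesic" to "$w^*$ ranges over a $\gamma$-independent set of at most four positions". This is exactly where the self-similarity (Corollary~\ref{pz2 geos}, Lemma~\ref{fractal inheritance lem}) and Lemma~\ref{vp increases quickly} do the work: one must show that, once the $s_1$-segment has advanced a fixed distance $2bp$ beyond $u$, it is confined to a corridor of $\mathbb{Z}^2$-width $O(bp/a)$ emanating from one of the $\le 4$ $p$-parents of $u$ — equivalently, that the relevant $S^{(1)}$-type partial slices of $H_{pa,\,pa-a+b}$ near $\tfrac1p u_1$ are themselves bounded, which is precisely the content of the induction on $n(\gamma)$, with the strictly improving weight ratio guaranteeing the constant does not blow up across scales. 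Keeping careful track of the $\gamma$-independence of the slice endpoints $\tfrac1p u_1,\tfrac1p\tilde u(\gamma)$ through this reduction is the part that requires the most care.
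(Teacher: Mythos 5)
There is a genuine gap, on two levels. First, the covering you announce --- at most four balls of radius $2bp/a$ whose centres do not depend on the witnessing geodesic --- is geometrically false for large $k$. Take $2bp^n\le k<2bp^{n+1}$ with $n$ large. Witnessing geodesics may head towards any of the four $p^{n(\gamma)}$-parents of $u$ (for various $n(\gamma)\ge n$), and a vertex of $s_1(\gamma)$ at weighted distance $k$ from $u$ lies at $\mathbb{Z}^2$-distance at least $k/b\ge 2p^n$ from $u$; points on geodesics aimed at parents in different quadrants are therefore separated by $\mathbb{Z}^2$-distance of order $p^n$, which dwarfs $bp/a$. So $S^{(1)}(u,v,k)$ can have diameter of order $k/a$, and no bounded family of radius-$2bp/a$ balls covers it. Second, the induction is not well set up. You invoke ``Lemma~\ref{s1} for $H_{pa,\,pa-a+b}$'' as the inductive hypothesis, but the family $H_{p^ma,\,p^ma-a+b}$ has no base case (the weight ratio tends to $1$ but never gets there), and inducting on $n(\gamma)$ instead does not work because for fixed $u,v,k$ the witnessing geodesics have varying $n(\gamma)$ and the bound must hold uniformly over all of them at once. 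Even granting the hypothesis, you must apply it to every $k'$ in a window of width about $4bp$ and to up to four choices of $u_1$, so the constant you recover is (number of $k'$ in the window)$\times 4\times 80p^2(b/a)^2$, which is far larger than $80b^2p^2/a^2$; the improving weight ratio cannot absorb this, since the loss comes from the window and the union over parents, not from the ratio. Your closing claim that the candidates collapse to ``at most four effective ball-centres'' is exactly the unproven (and false) step.

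The paper's argument avoids induction entirely and counts differently. Fix $n$ to be the largest integer with $2bp^n\le k$ (a choice depending only on $k$). Lemma~\ref{vp increases quickly} forces $\tilde u(\gamma)\in(p^{n+1}\mathbb{Z})^2$, and Corollary~\ref{loc behaviour cor} shows the first $(p^n\mathbb{Z})^2$-vertex $u''$ on $s_1(\gamma)$ satisfies $d_{H_{a,b}}(u,u'')<2bp^n\le k$; hence $w$ lies on the segment of $\gamma$ between $u''$ and $\tilde u(\gamma)$, which by Lemma~\ref{fractal inheritance lem 2} is a concatenation of $p^n$-SPs. Each such $p^n$-SP, being a segment of a geodesic, contains at most one vertex of the slice, and one of its endpoints is a point of $(p^n\mathbb{Z})^2$ inside $B_{\mathbb{Z}^2}(u,2bp^{n+1}/a)$. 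The crucial count is then $|B_{\mathbb{Z}^2}(u,2bp^{n+1}/a)\cap(p^n\mathbb{Z})^2|\le 5(2bp/a)^2$: the candidates are many (of order $(bp/a)^2$, not four), but the sparsity of $(p^n\mathbb{Z})^2$ matches the growth of the ball, so the count is scale-independent, and four SPs per endpoint gives $80b^2p^2/a^2$. This single-scale counting of SP endpoints is the idea your proposal is missing.
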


\begin{proof}
    For $k\leq 2b$, we have $S^{(1)}(u, v, k)\subseteq S(u, v, k)\subseteq B_{H_{a, b}}(u, k)\subseteq B_{\mathbb{Z}^2}(u, 2b/a)$. Therefore, $|S^{(1)}(u, v, k)|\leq |B_{\mathbb{Z}^2}(u, 2b/a)|\leq 5(2b/a)^2$. Hence, we now assume that $k > 2b$. Let $n$ be the largest non-negative integer such that $2bp^n\leq k$. We first show that every $w\in S^{(1)}(u, v, k)$ lies on a $p^n$-SP $\lambda$ on which $w$ is the only vertex in $S^{(1)}(u, v, k)$. 

    Let $w$ lie on $s_1(\gamma)$ for a geodesic $\gamma$ from $u$ to $v$. Denote the endpoint $\Tilde{u}(\gamma)$ of $s_1(\gamma)$ by $u'$. The cost of $s_1(\gamma)$ is at least $d_{H_{a, b}}(u, w) = k \geq 2bp^n$. Therefore, by Lemma~\ref{vp increases quickly}, $u'\in (p^{n + 1}\mathbb{Z})^2\subseteq (p^n\mathbb{Z})^2$. Let $u''$ be the first vertex in $(p^n\mathbb{Z})^2$ encountered as we traverse $s_1(\gamma)$ from $u$ to $u'$. By Corollary~\ref{loc behaviour cor}, $u''$ is a $p^n$-parent of $u$. Therefore, $d_{H_{a, b}}(u, u'')\leq bd_{\mathbb{Z}^2}(u, u'') < 2bp^n \leq k = d_{H_{a, b}}(u, w)$. As a result, $w$ lies on the segment $\gamma'$ of $\gamma$ between $u''$ and $u'$. By Lemma~\ref{fractal inheritance lem 2}, $\gamma'$ is composed of $p^n$-SPs. Therefore, $w$ lies on a $p^n$-SP $\lambda$ which is a segment of the geodesic $\gamma$. Therefore, all vertices in $\lambda$ have different distance to $u$ in $G$. Hence, $w$ is the only vertex on $\lambda$ in $S^{(1)}(u, v, k)$. 

    As $\lambda$ is a segment of $\gamma$, one of the endpoints $w'$ of $\lambda$ lies in $B_{H_{a, b}}(u, k)\subseteq B_{H_{a, b}}(u, 2bp^{n + 1})\subseteq B_{\mathbb{Z}^2}(u, 2bp^{n + 1}/a)$. As $w$ is determined by $\lambda$, and there are at most four choices of $\lambda$ given $w'$, we have
    \begin{equation*}
        |S^{(1)}(u, v, k)|\leq 4 |B_{\mathbb{Z}^2}(u, 2bp^{n + 1}/a)\cap (p^n\mathbb{Z})^2|\leq 20 (2b/a)^2 p^2. \qedhere
    \end{equation*}
\end{proof}

To upper bound the size of $S^{(2)}(u, v, k)$, we first further express $S^{(2)}(u, v, k)$ as a union. 

\begin{df}
    Given vertices $u, v$ in $H_{a, b}$ and $n\in\mathbb{Z}^{\geq 0}\cup\{\infty\}$, let $\Gamma(u, v, n)$ denote the set of geodesics $\gamma$ from $u$ to $v$ with $n(\gamma) = n$. For vertices $u, v$ in $H_{a, b}$, an integer $k$, and $n\in\mathbb{Z}^{\geq 0}\cup\{\infty\}$, let $S^{(2, n)}(u, v, k)\subseteq S^{(2)}(u, v, k)$ denote the set of vertices $w$ lying on the segment $s_i(\gamma)$ of a geodesic $\gamma \in \Gamma(u, v, n)$ from $u$ to $v$, and satisfying $d_{H_{a, b}}(u, w) = k$. 
\end{df}

Note that $S^{(2)}(u, v, k) = \bigcup_{n\in\mathbb{Z}^{\geq0}\cup\{\infty\}}S^{(2, n)}(u, v, k)$, and $S^{(2, \infty)}(u, v, k) \subseteq \{(0, 0)\}$. 

\begin{lem} \label{s2n}
    For all vertices $u, v$ in $H_{a, b}$, integers $k$, and non-negative integers $n$, we have
    \begin{equation*}
        \left|S^{(2, n)}(u, v, k)\right|\leq 16p(p + 3b/a). 
    \end{equation*}
\end{lem}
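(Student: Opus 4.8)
\textbf{Proof proposal for Lemma~\ref{s2n}.}

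The plan is to mirror the structure of the proof of Lemma~\ref{s1}, but now working with the middle segment $s_2(\gamma)$, whose endpoints $\Tilde u(\gamma),\Tilde v(\gamma)$ lie in $(p^n\mathbb Z)^2$ and whose whole body, by Lemma~\ref{fractal inheritance lem 2} and the definition of $n(\gamma)$, is a $p^n$-dilation of a geodesic in $H_{p^na,\,p^na-a+b}$ that is \emph{not} contained in any open $p$-strip (otherwise $n(\gamma)$ would be larger, by Lemma~\ref{classification level n} applied at scale $p^{n+1}$). The first step is to handle the case $\gamma\in\Gamma(u,v,\infty)$ separately: then $S^{(2,\infty)}(u,v,k)\subseteq\{(0,0)\}$, so the bound is trivial. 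Hence assume $n$ is finite. By Observation~\ref{cut points observation}, $\Tilde u(\gamma)$ is a $p^n$-parent of $u$, so $d_{H_{a,b}}(u,\Tilde u(\gamma))<2bp^n$; likewise for $v$. Consequently, if $w\in S^{(2,n)}(u,v,k)$ then $w$ is reached from $u$ at distance $k$, but also $w$ lies on $s_2(\gamma)$, so $d_{H_{a,b}}(\Tilde u(\gamma),w)=k-d_{H_{a,b}}(u,\Tilde u(\gamma))$, which is determined up to an additive error of at most $2bp^n$. This is the key localisation: all of $s_2(\gamma)$ is a concatenation of $p^n$-SPs, so $w$ lies on a $p^n$-SP $\lambda$ that is a segment of $\gamma$, and hence $w$ is the \emph{unique} vertex of that $p^n$-SP at distance $k$ from $u$.

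The second step is to pin down which $p^n$-SP $\lambda$ can occur. Rescale by $p^n$: let $\bar\gamma$ be the geodesic in $H_{p^na,\,p^na-a+b}$ with $p^n$-dilation equal to $s_2(\gamma)$, running between $\bar u=\frac1{p^n}\Tilde u(\gamma)$ and $\bar v=\frac1{p^n}\Tilde v(\gamma)$, both in $(p\mathbb Z)^2$. Because $\bar\gamma$ passes through no vertex of $(p\mathbb Z)^2$ other than its endpoints (this is exactly the statement that $n(\gamma)=n$, not larger), Lemma~\ref{loc behaviour} forces each maximal sub-path of $\bar\gamma$ between consecutive visits to distinguished vertices to stay inside a single closed $p$-block; combined with Lemma~\ref{unaligned distinguished endpoints geo} and Corollary~\ref{one excursion}, $\bar\gamma$ therefore lives inside a corridor of $p$-blocks that is (at most) one $p$-block wide — i.e.\ a closed $p$-strip, or rather the intersection of a horizontal and a vertical closed $p$-strip is impossible by Lemma~\ref{classification level 1}, so $\bar\gamma$ sits in one closed $p$-strip — say a vertical one, $[px_0,px_0+p]\times\mathbb Z$, of width $p+1$. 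The vertex $\bar w$ corresponding to $\lambda$ in $\bar\gamma$ has $d_{H_{p^na,\dots}}(\bar u,\bar w)=k-d_{H_{a,b}}(u,\Tilde u(\gamma))$, an interval of length at most $2bp^n$ in this rescaled metric divided by $p^n$, i.e.\ length at most $2b/a$ worth of unweighted distance since each unit step in $H_{p^na,\dots}$ costs at least $p^na\ge p^n a$; more precisely $d_{\mathbb Z^2}(\bar u,\bar w)\le \frac1{p^na}\cdot(\text{that interval})$, bounding the number of possible heights of $\bar w$ in the strip by roughly $2b/a + O(1)$. Multiplying the width $p+O(1)$ of the strip by the number $O(p+b/a)$ of admissible heights (and the constantly many choices of orientation of $\lambda$ at $\bar w$, of which closed $p$-strip, and of $p^n$-parent $\Tilde u(\gamma)$ of $u$), one gets the stated bound $16p(p+3b/a)$.

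The main obstacle I expect is the bookkeeping in the second step: one must argue cleanly that, once $n(\gamma)=n$ is fixed and $\Tilde u(\gamma)$ is chosen, the whole middle segment is confined to a \emph{single} closed $p$-strip at scale $p^n$, and then count lattice points of $(p^n\mathbb Z)^2$-aligned $p^n$-SPs inside a bounded window of that strip — the window being bounded because $w$ has a prescribed distance $k$ from $u$ and $\Tilde u(\gamma)$ is within $2bp^n$ of $u$. The confinement to a strip is really a restatement of Lemma~\ref{classification level n} at level $p$ after rescaling (if $\bar\gamma$ escaped every $p$-strip it would hit $(p\mathbb Z)^2$ internally, contradicting maximality of $n(\gamma)$), so the ingredients are all in place; the work is purely in assembling the constants, which, as the paper notes, are not optimised.
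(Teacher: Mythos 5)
Your overall strategy is the same as the paper's: confine any $\gamma\in\Gamma(u,v,n)$ to one of boundedly many open $p^{n+1}$-strips, decompose $s_2(\gamma)$ into $p^n$-SPs each of which meets the slice in at most one vertex, and count the admissible $p^n$-SPs by locating their endpoints. (Two incidental slips: $\frac{1}{p^n}\Tilde{u}(\gamma)$ and $\frac{1}{p^n}\Tilde{v}(\gamma)$ are \emph{not} in $(p\mathbb{Z})^2$ --- that is precisely what maximality of $n(\gamma)$ excludes --- and the confinement to a strip is obtained in one line by applying Lemma~\ref{classification level n} to all of $\gamma$, which also pins the strip down to at most two choices because it must contain $u$ and $v$; your detour through Lemma~\ref{loc behaviour} and Corollary~\ref{one excursion} is not needed and Lemma~\ref{loc behaviour} does not actually apply to arbitrary sub-paths between distinguished vertices.)

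The genuine gap is in the counting step. From $d_{H_{a,b}}(u,\Tilde{u}(\gamma))<2bp^n$ you conclude that $d(\bar u,\bar w)$ is known up to an additive error of order $bp^n$, and then infer $d_{\mathbb{Z}^2}(\bar u,\bar w)\le\frac{1}{p^na}\cdot(\text{that interval})$. This inference is false: knowing that a distance lies in an interval of length $3bp^n$ does not confine $\bar w$ to a window of diameter $3b/a$; a priori the admissible $\bar w$ fill an annulus of radius about $k/(p^na)$. What actually bounds the number of admissible positions along the strip is a two-sided sandwich of $d_{H_{a,b}}(\Tilde{u}(\gamma),w')$ by linear functions of $|x'-x|/p^n$ with the \emph{same} slope $p^na-a+b$: the lower bound uses that every horizontal $p^n$-SP inside the open $p^{n+1}$-strip contains a slow edge and therefore costs exactly $p^na-a+b$ (not merely at least $p^na$), and the upper bound comes from exhibiting an explicit path using $|x'-\Tilde{x}|/p^n$ horizontal $p^n$-SPs and at most $p-2$ vertical ones. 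Subtracting these confines $|x'-x|/p^n$ to an interval of length about $p+3b/a$, independent of $k$. With only the trivial per-step lower bound $p^na$ that you invoke, the two slopes differ by roughly $(b-a)/p^n$, and the resulting window grows linearly in $k$, so the count is not uniformly bounded. This matching-slopes estimate is the heart of the lemma and is missing from your sketch.
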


\begin{proof}
    For any geodesic $\gamma$ in $H_{a, b}$ from $u$ to $v$ with $n(\gamma) = n$, $\gamma$ does not pass through any vertices in $(p^{n + 1}\mathbb{Z})^2$. Therefore, by Lemma~\ref{classification level n}, $\gamma$ is contained within an open $p^{n + 1}$-strip. Such an open $p^{n + 1}$-strip must contain the endpoints $u, v$ of $\gamma$. Therefore, there are at most two such open $p^{n + 1}$-strips: at most one horizontal open $p^{n + 1}$-strip and at most one vertical open $p^{n + 1}$-strip. 

    As the endpoints of $s_2(\gamma)$ lie in $(p^n\mathbb{Z})^2$, by Lemma~\ref{fractal inheritance lem 2}, $s_2(\gamma)$ consists of $p^n$-SPs. For an open $p^{n + 1}$-strip $R$ containing $u, v$, let $\Lambda (R)$ be the set of $p^n$-SPs $\lambda$ contained in $R$, each of which forms a segment of $s_2(\gamma)$ for some geodesic $\gamma\in \Gamma(u, v, n)$, and intersects $B_{H_{a, b}}(u, k)$. As in Lemma~\ref{s1}, for any open $p^{n + 1}$-strip $R$ containing $u, v$, each $\lambda\in \Lambda(R)$ contains at most one vertex in $S^{(2, n)}(u, v, k)$; and all vertices in $S^{(2, n)}(u, v, k)$ lie on some $\lambda\in \Lambda(R)$ for some open $p^{n + 1}$-strip $R$ containing $u, v$. Therefore, it suffices to show that there are at most $16p(p + 3b/a)$ such $p^n$-SPs $\lambda$. 

    Since there are at most two choices of the open $p^{n + 1}$-strip $R$ containing $u, v$, it suffices to show, for each such $R$, that $|\Lambda(R)|\leq 8p(p + 3b/a)$. Without loss of generality, assume that $R$ is horizontal. Let $u = (x, y)$. 

    Given any $\lambda\in \Lambda(R)$, let $w'$ be the endpoint of $\lambda$ that is closer to $u$ in $H_{a, b}$. Let $\gamma\in\Gamma(u, v, n)$ be a geodesic for which $\lambda$ is a segment of $s_2(\gamma)$. Let $\Tilde{u}(\gamma) = (\Tilde{x}, \Tilde{y})$ and $w' = (x', y')$. As $\Tilde{u}(\gamma)$ is a $p^n$-parent of $u$, $|x - \Tilde{x}| < p^n$. We note that all horizontal $p^n$-SPs contained in $R$ cost $p^na - a + b$, and the segment of $\gamma$ from $\Tilde{u}(\gamma)$ to $w'$ uses at least $\frac{1}{p^n}|x' - \Tilde{x}|$ such horizontal $p^n$-SPs. Therefore, 
    \begin{equation*}
        d_{H_{a, b}}(\Tilde{u}(\gamma), w') \geq (p^na - a + b)\frac{|x' - \Tilde{x}|}{p^n} > (p^na - a + b)\left(\frac{|x' - x|}{p^n} - 1\right). 
    \end{equation*}
    On the other hand, there is a path from $\Tilde{u}(\gamma)$ to $w'$ using exactly $\frac{1}{p^n}|x' - \Tilde{x}|$ horizontal $p^n$-SPs and at most $p - 2$ vertical $p^n$-SPs. As any $p^n$-SP costs at most $p^na - a + b$, we have
    \begin{equation*}
        d_{H_{a, b}}(\Tilde{u}(\gamma), w') \leq (p^na - a + b)\left(\frac{|x' - \Tilde{x}|}{p^n} + (p - 2)\right) < (p^na - a + b)\left(\frac{|x' - x|}{p^n} + p - 1\right). 
    \end{equation*}
    The distance $d_{H_{a, b}}(\Tilde{u}(\gamma), w')$ is related to $k$ via
    \begin{equation*}
        k - 3bp^n\leq k - d_{H_{a, b}}(u, \Tilde{u}(\gamma)) - d_{H_{a, b}}(w, w') = d_{H_{a, b}}(\Tilde{u}(\gamma), w')\leq k. 
    \end{equation*}
    Combining the three preceding inequalities, we have
    \begin{equation*}
        1 - p + \frac{k - 3bp^n}{p^na - a + b} < \frac{|x' - x|}{p^n} < 1 + \frac{k}{p^na - a + b}. 
    \end{equation*}
    As $x'\in p^n\mathbb{Z}$, there are at most $2\lceil p + \frac{3bp^n}{p^na - a + b}\rceil\leq 2(p + 3b/a + 1)$ choices of $x'$. As $y'\in p^n\mathbb{Z}$ is confined within an open $p^{n + 1}$-interval for $w' = (x', y')$ to lie in the open $p^{n + 1}$-strip $R$, there are at most $p - 1$ choices of $y'$. Hence, there are at most $2(p + 3b/a + 1)(p - 1)$ choices of $w'$. There are at most four choices of a $p^n$-SP $\lambda$ given one of its endpoints $w'$. Therefore, $|\Lambda(R)|\leq 8(p + 3b/a + 1)(p - 1)\leq 8p(p + 3b/a)$, as needed. 
\end{proof}

To uniformly upper bound the size of $S^{(2)}(u, v, k) = \bigcup_{n\in\mathbb{Z}^{\geq0}\cup\{\infty\}}S^{(2, n)}(u, v, k)$, it now remains to uniformly upper bound, across pairs of vertices $u, v$, the number of values of $n$ for which $S^{(2, n)}(u, v, k)$ is non-empty. 

\begin{df}
    Given vertices $u, v$ in $H_{a, b}$, let $N_H(u, v)$ (\emph{resp.}~$N_V(u, v)$) denote the set of non-negative integers $n$ for which there exists a geodesic $\gamma\in \Gamma(u, v, n)$ contained within an open horizontal (\emph{resp.}~vertical) $p^{n + 1}$-strip. 
\end{df}

By Lemma~\ref{classification level n}, $\Gamma(u, v, n)$ is non-empty if and only if $n\in N_H(u, v)\cup N_V(u, v)$. We shall now show the uniform boundedness of $|N_H(u, v)|$ across all vertices $u, v$ in $H_{a, b}$. 

\begin{lem} \label{n bound start}
    For vertices $u, v$ in $H_{a, b}$, if $n_1 < n_2$ are both elements of $N_H(u, v)$, then the difference $\Delta$ of the horizontal coordinates of $u, v$ satisfies
    \begin{equation*}
        a p^{n_1} + (b - a)\left(\frac{d_{\mathbb{Z}^2}(u, v)}{2p^{n_1}} - p\right) < d_{H_{a, b}}(u, v) - a\Delta < 8bp^{n_1 + 1} + \frac{b - a}{p^{n_1}}d_{\mathbb{Z}^2}(u, v).
    \end{equation*}
\end{lem}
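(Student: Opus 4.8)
The plan is to prove the two displayed inequalities separately, using $n_1$ for the upper bound and both $n_1$ and $n_2$ for the lower bound. Write $u=(x_1,y_1)$, $v=(x_2,y_2)$, so that $d_{\mathbb{Z}^2}(u,v)=\Delta+|y_1-y_2|$. Since $n_1\in N_H(u,v)$, fix a geodesic $\gamma_1\in\Gamma(u,v,n_1)$ contained in an open horizontal $p^{n_1+1}$-strip. As this strip contains $u$ and $v$, their vertical coordinates lie in the same open $p^{n_1+1}$-interval, so $\Delta\le d_{\mathbb{Z}^2}(u,v)<\Delta+p^{n_1+1}$; this estimate will be used repeatedly to pass between $\Delta$ and $d_{\mathbb{Z}^2}(u,v)$.

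For the upper bound, decompose $\gamma_1=s_1(\gamma_1)s_2(\gamma_1)s_3(\gamma_1)$ at its first and last vertices in $(p^{n_1}\mathbb{Z})^2$ (the case where $\gamma_1$ passes through the origin is trivial). By Corollary~\ref{loc behaviour cor} these division points are $p^{n_1}$-parents of $u$ and $v$, so $s_1(\gamma_1)$ and $s_3(\gamma_1)$ each cost less than $2bp^{n_1}$. By Lemma~\ref{fractal inheritance lem 2}, the cost of $s_2(\gamma_1)$ equals $d_{H_{p^{n_1}a,\,p^{n_1}a-a+b}}(u'',v'')$ for the corresponding $\tfrac1{p^{n_1}}$-scalings $u'',v''$, and the de-dilated geodesic is confined to a horizontal $p$-strip, which forces $|y(u'')-y(v'')|\le p-2$ and $|x(u'')-x(v'')|<\Delta/p^{n_1}+2$. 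Bounding every edge of $H_{p^{n_1}a,\,p^{n_1}a-a+b}$ by its maximal weight $p^{n_1}a-a+b$ along a $d_{\mathbb{Z}^2}(u'',v'')$-step path, expanding $(p^{n_1}a-a+b)\tfrac{\Delta}{p^{n_1}}=a\Delta+\tfrac{b-a}{p^{n_1}}\Delta$, using $\tfrac{b-a}{p^{n_1}}\Delta\le\tfrac{b-a}{p^{n_1}}d_{\mathbb{Z}^2}(u,v)$, and absorbing all remaining terms into $8bp^{n_1+1}$ (using $a<b$, $p\ge3$, $p^{n_1+1}\ge p$) gives the right-hand inequality with room to spare.

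For the lower bound there are two contributions. First, the slow-edge term: since $d_{H_{a,b}}(u,v)=\textrm{cost}(\gamma_1)\ge\textrm{cost}(s_2(\gamma_1))$ and the de-dilated middle geodesic sits in a horizontal $p$-strip in which every horizontal line has $p$-adic level $0$, at least a fraction $\tfrac{p-1}{p}\ge\tfrac12$ of its horizontal unit steps are slow edges; re-dilating, at least $\tfrac12(\Delta/p^{n_1}-2)$ of the horizontal $p^{n_1}$-SPs of $s_2(\gamma_1)$ are slow, each costing $b-a$ in excess of the $p^{n_1}a$ charged per step, and adding this to the $a$-per-edge cost of $s_1(\gamma_1),s_3(\gamma_1)$ via the triangle inequality on the horizontal coordinates of $u$, the two division points, and $v$ recovers $a\Delta$ and produces $d_{H_{a,b}}(u,v)>a\Delta+(b-a)\bigl(\tfrac{\Delta}{2p^{n_1}}-2\bigr)$. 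Second, the extra $ap^{n_1}$ term, for which we invoke $n_2>n_1$: a geodesic $\gamma_2\in\Gamma(u,v,n_2)$ passes through a vertex $w\in(p^{n_2}\mathbb{Z})^2\subseteq(p^{n_1+1}\mathbb{Z})^2$, and decomposing $\gamma_2$ at its first and last vertices of $(p^{n_1+1}\mathbb{Z})^2$, either the middle piece is a non-empty concatenation of $p^{n_1+1}$-SPs (cost $\ge p^{n_1+1}a\ge p\cdot ap^{n_1}$, which after comparing with the two end pieces via the triangle inequality on horizontal coordinates yields the surplus), or $\gamma_2$ meets $(p^{n_1+1}\mathbb{Z})^2$ only in $w$, in which case $w$ is a $p^{n_1+1}$-parent of both $u$ and $v$ so $d_{\mathbb{Z}^2}(u,v)<4p^{n_1+1}$ and the whole right-hand side is already $O(p^{n_1+1})$, to be handled by a cruder estimate using that $w\in(p^{n_2}\mathbb{Z})^2$ lies outside the $p^{n_1+1}$-strip $R_1$ containing $u,v$ and that $n_1,n_2\in N_H(u,v)$ rules out the truly degenerate configurations. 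Combining the two contributions, and replacing $\Delta$ by $d_{\mathbb{Z}^2}(u,v)$ where needed (with a $p^{n_1+1}$ error), yields $d_{H_{a,b}}(u,v)-a\Delta>ap^{n_1}+(b-a)\bigl(\tfrac{d_{\mathbb{Z}^2}(u,v)}{2p^{n_1}}-p\bigr)$.

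The main obstacle is the lower bound, and specifically the fact that the slow-edge estimate and the level-$n_2$ estimate are a priori two separate lower bounds on the single quantity $d_{H_{a,b}}(u,v)$ rather than additive. Bridging this gap is the crux: it requires either a single argument run directly on $\gamma_2$ that simultaneously accounts for its total horizontal travel, its slow edges at levels $\le n_1$, and its forced excursion to level $\ge n_1+1$, or a case split on the size of $d_{\mathbb{Z}^2}(u,v)$ relative to $p^{n_1}$ (when $d_{\mathbb{Z}^2}(u,v)$ is large the slow term dominates $ap^{n_1}$ and $\gamma_1$ alone suffices; when it is $O(p^{n_1+1})$ both sides are bounded and a weaker estimate closes the case). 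The sub-case where $\gamma_2$ meets $(p^{n_1+1}\mathbb{Z})^2$ in a single vertex is the most delicate, and it is there that the hypothesis that \emph{both} $n_1$ and $n_2$ lie in $N_H(u,v)$ — not merely one of them — is genuinely needed.
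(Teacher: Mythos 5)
Your upper bound is essentially the paper's argument (the paper takes arbitrary $p^{n_1}$-parents and the triangle inequality rather than the $s_1,s_2,s_3$ decomposition, but the estimates are the same), and you have correctly identified the two ingredients of the lower bound: the slow edges forced on $s_2(\gamma_1)$ inside the horizontal $p^{n_1+1}$-strip, and an extra $ap^{n_1}$ coming from the existence of $\gamma_2$ at level $n_2>n_1$. However, the step you yourself flag as ``the crux'' --- making these two contributions to the single quantity $d_{H_{a,b}}(u,v)$ additive --- is exactly the step you have not supplied, and neither of the two strategies you sketch is carried out. This is a genuine gap, not a presentational one.

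The paper closes it with a cycle argument you do not have. Concatenate $\gamma_1$ with the reversal of $\gamma_2$ to form a closed walk $\delta$ of total cost exactly $2d_{H_{a,b}}(u,v)$. Since $\tilde u(\gamma_1)$ has vertical coordinate in $p^{n_1}\mathbb{Z}\setminus p^{n_1+1}\mathbb{Z}$ (it lies strictly inside the open horizontal $p^{n_1+1}$-strip) while $\tilde u(\gamma_2)$ has vertical coordinate in $p^{n_2}\mathbb{Z}\subseteq p^{n_1+1}\mathbb{Z}$, these two heights are distinct multiples of $p^{n_1}$, so $\delta$ must use at least $2p^{n_1}$ vertical edges; it must also use at least $2\Delta$ horizontal edges, and at least $\Delta/p^{n_1}-2$ of its edges are slow (those contributed by the horizontal $p^{n_1}$-SPs of $s_2(\gamma_1)$). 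These three counts are requirements on one edge multiset, so the cost of $\delta$ is at least $a(2\Delta+2p^{n_1})+(b-a)(\Delta/p^{n_1}-2)$; dividing by $2$ gives the lower bound with both terms present simultaneously. By contrast, your proposed extraction of the $ap^{n_1}$ surplus from $\gamma_2$ alone does not work as stated: the cost $\geq p^{n_1+1}a$ of the middle excursion of $\gamma_2$ to $(p^{n_1+1}\mathbb{Z})^2$ may be entirely accounted for by horizontal progress (e.g.\ a single horizontal $p^{n_1+1}$-SP), so it yields no surplus over $a\Delta$; the surplus genuinely comes from the \emph{vertical} discrepancy between where $\gamma_1$ and $\gamma_2$ first meet their respective sublattices, i.e.\ from the interaction of the two geodesics, which is what the cycle makes visible. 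Your remaining sub-cases (``degenerate configurations'') are likewise not resolved.
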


\begin{proof}
    As $u, v$ are contained in the same open horizontal $p^{n_1 + 1}$-strip, their vertical coordinates differ by less than $p^{n_1 + 1}$. Hence, $d_{\mathbb{Z}^2}(u, v) < \Delta + p^{n_1 + 1}$. 

    We first prove the upper bound on $d_{H_{a, b}}(u, v) - a\Delta$. Take any $p^{n_1}$-parents $u', v'$ of $u, v$, respectively. By the triangle inequality, 
    \begin{equation*}
        d_{H_{a, b}}(u, v)\leq d_{H_{a, b}}(u, u') + d_{H_{a, b}}(u', v') + d_{H_{a, b}}(v', v). 
    \end{equation*}
    We now upper bound each of the three terms on the right-hand side. We have $d_{H_{a, b}}(u, u')\leq bd_{\mathbb{Z}^2}(u, u')\leq 2bp^{n_1}$, and similarly $d_{H_{a, b}}(v, v')\leq 2bp^{n_1}$. Lastly, by Lemma~\ref{fractal inheritance lem 2}, we have
    \begin{align*}
        d_{H_{a, b}}(u', v') &= d_{H_{p^{n_1}a, p^{n_1}a - a + b}}\left(\frac{1}{p^{n_1}}u', \frac{1}{p^{n_1}}v'\right)\\
        &\leq (p^{n_1}a - a + b)d_{\mathbb{Z}^2}\left(\frac{1}{p^{n_1}}u', \frac{1}{p^{n_1}}v'\right)\\
        &= \frac{p^{n_1}a - a + b}{p^{n_1}}d_{\mathbb{Z}^2}(u', v'). 
    \end{align*}
    Note that, again by the triangle inequality, $d_{\mathbb{Z}^2}(u', v')\leq d_{\mathbb{Z}^2}(u, v) + 4p^{n_1}$. Therefore, we have
    \begin{align*}
        d_{H_{a, b}}(u, v) - a\Delta&\leq 4bp^{n_1} + \frac{p^{n_1}a - a + b}{p^{n_1}}d_{\mathbb{Z}^2}(u, v) + 4(p^{n_1}a - a + b) - a\Delta\\
        &\leq 8bp^{n_1} + ad_{\mathbb{Z}^2}(u, v) + \frac{b - a}{p^{n_1}}d_{\mathbb{Z}^2}(u, v) - a\Delta\\
        &< 8bp^{n_1} + ap^{n_1 + 1} + \frac{b - a}{p^{n_1}}d_{\mathbb{Z}^2}(u, v)\\
        & < 8bp^{n_1 + 1} + \frac{b - a}{p^{n_1}}d_{\mathbb{Z}^2}(u, v), 
    \end{align*}
    establishing the upper bound on $d_{H_{a, b}}(u, v) - a\Delta$. We now turn to the lower bound. For $i = 1, 2$, let $\gamma_i\in \Gamma(u, v, n_i)$ which is contained in an open horizontal $p^{n_i + 1}$-strip $R_i$ containing $u, v$. Let $y_i$ be the vertical coordinate of $\Tilde{u}(\gamma_i)$. We have $y_1\in p^{n_1}\mathbb{Z}$ and $y_2\in p^{n_2}\mathbb{Z}\subseteq p^{n_1}\mathbb{Z}$. However, the entirety of $\gamma_1$, and in particular $\Tilde{u}(\gamma_1)$, is contained in the open horizontal $p^{n_1 + 1}$-strip. Therefore, $y_1\notin p^{n_1 + 1}\mathbb{Z}$ and hence $y_1\notin p^{n_2}\mathbb{Z}$. In particular, $y_1\neq y_2$. As they are both multiples of $p^{n_1}$, we have $|y_1 - y_2|\geq p^{n_1}$. Let $\delta$ be the cycle formed by concatenating $\gamma_1$ with (the reversal of) $\gamma_2$. As $\delta$ passes through both $\Tilde{u}(\gamma_1), \Tilde{u}(\gamma_2)$, $\delta$ must pass through at least $2p^{n_1}$ vertical edges. Similarly, as $\delta$ passes through both $u$ and $v$, $\delta$ must pass through $2\Delta$ horizontal edges. Therefore, $\delta$ must pass through at least $2\Delta + 2p^{n_1}$ edges in total. 
    
    By Lemma~\ref{classification level n}, $s_2(\gamma_1)$ consists of $p^{n_1}$-SPs. The difference in the horizontal coordinates of $\Tilde{u}(\gamma_1)$ and $\Tilde{v}(\gamma_1)$ is at least $\Delta - 2p^{n_1}$. Therefore, $s_2(\gamma_1)$ passes through at least $\frac{\Delta}{p^{n_1}} - 2$ horizontal $p^{n_1}$-SPs. All horizontal $p^{n_1}$-SPs contained in the open horizontal $p^{n_1 + 1}$-strip $R_1$ contain a slow edge. Therefore, $s_2(\gamma_1)$, and hence $\gamma_1$, passes through at least $\frac{\Delta}{p^{n_1}} - 2$ slow edges. Therefore, $\delta$ must pass through at least $\frac{\Delta}{p^{n_1}} - 2$ slow edges. Since $\delta$ passes through at least $2\Delta + 2p^{n_1}$ edges, the cost of the cycle $\delta$ (\emph{i.e.}~the sum of the weights of its edges) is at least $a(2\Delta + 2p^{n_1}) + (b - a)\left(\frac{\Delta}{p^{n_1}} - 2\right)$. On the other hand, the cost of $\delta$, being the sum of the costs of the geodesics $\gamma_1$, $\gamma_2$, is equal to $2d_{H_{a, b}}(u, v)$. Therefore, 
    \begin{equation*}
        d_{H_{a, b}}(u, v)\geq a(\Delta + p^{n_1}) + (b - a)\left(\frac{\Delta}{2p^{n_1}} - 1\right). 
    \end{equation*}
    As $d_{\mathbb{Z}^2}(u, v) < \Delta + p^{n_1 + 1}$, we have the lower bound 
    \begin{equation*}
        d_{H_{a, b}}(u, v) - a\Delta\geq a p^{n_1} + (b - a)\left(\frac{\Delta}{2p^{n_1}} - 1\right) > a p^{n_1} + (b - a)\left(\frac{d_{\mathbb{Z}^2}(u, v)}{2p^{n_1}} - p/2 - 1\right). \qedhere
    \end{equation*}
\end{proof}

\begin{lem} \label{n bound}
    For any vertices $u, v$ in $H_{a, b}$, we have
    \begin{equation*}
        |N_H(u, v)|\leq 2\log_p\left(\frac{18bp}{a}\right) + 1. 
    \end{equation*}
\end{lem}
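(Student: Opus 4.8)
The plan is to combine the two-sided estimate of Lemma~\ref{n bound start} across pairs of elements of $N_H(u,v)$, reading it as a constraint that confines each element (other than the largest) to a bounded ``logarithmic annulus'' of values of $p^n$, and then to count lattice points in that annulus.

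Write $D = d_{\mathbb{Z}^2}(u,v)$, let $\Delta$ be the horizontal displacement as in Lemma~\ref{n bound start}, and set $Q = d_{H_{a,b}}(u,v) - a\Delta$ and $R = Q + (b-a)p$. If $|N_H(u,v)|\le 1$ there is nothing to prove, so list its elements as $n_1 < \dots < n_m$ with $m\ge 2$. For each $i\le m-1$ I would apply Lemma~\ref{n bound start} to the pair $(n_i, n_m)$. The lower bound yields $ap^{n_i} < R$ and $(b-a)\tfrac{D}{2p^{n_i}} < R$, hence
\begin{equation*}
    \frac{(b-a)D}{2R} < p^{n_i} < \frac{R}{a};
\end{equation*}
the upper bound yields $Q < 8bp\cdot p^{n_i} + (b-a)\tfrac{D}{p^{n_i}}$. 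Introducing the functions $g_1(t) = ap^t + (b-a)\tfrac{D}{2p^t}$ and $g_2(t) = 8bp\cdot p^t + (b-a)\tfrac{D}{p^t}$, which are convex in $t$, the above says exactly that $g_1(n_i) < R$ and $g_2(n_i) > Q$. By convexity, $\{t : g_1(t) < R\}$ is an interval and $\{t : g_2(t)\le Q\}$ is an interval, so the admissible values of $n_i$ lie in an interval with at most one sub-interval removed, i.e.~in a union of at most two intervals; it then remains to bound the total length of these intervals by $2\log_p(18bp/a)$ and to count integers in them.

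I would split into cases according to whether $Q$ is below $\min_t g_2(t) = 4\sqrt{2bp(b-a)D}$. If it is, then $\{g_2\le Q\}$ is empty, the admissible set is the single interval $\{g_1 < R\}$, whose length in the variable $t = \log_p(\cdot)$ is at most $\log_p\tfrac{2R^2}{a(b-a)D}$; since $Q < 4\sqrt{2bp(b-a)D}$ forces $R = Q+(b-a)p$ to be $O\bigl(\sqrt{bp(b-a)D}\bigr)$, this length is $O(\log_p(bp/a))$. If instead $Q\ge 4\sqrt{2bp(b-a)D}$, then $R/Q = 1 + (b-a)p/Q$ is bounded, close to $1$, and, estimating the endpoints $\sigma_\pm$ of $\{g_2\le Q\}$ by $\sigma_+\ge Q/(16bp)$ and $\sigma_-\le 2(b-a)D/Q$ (roots of the relevant quadratic) together with the bounds $\tfrac{(b-a)D}{2R} < p^{n_i} < \tfrac{R}{a}$ just obtained, each of the two admissible sub-intervals again has length $O(\log_p(bp/a))$. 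Summing the contributions, together with the $O(1)$ loss from passing from the length of an interval to the number of integers it contains, gives $m\le 2\log_p(18bp/a)+1$. The degenerate regime of small $D$, where these estimates lose control, I would dispose of separately: if $D$ is bounded in terms of $a,b,p$, then every geodesic between $u$ and $v$ is contained in the ball $B_{\mathbb{Z}^2}(u,O(D))$, so $|N_H(u,v)|$ is bounded directly by $|B_{\mathbb{Z}^2}(u,O(D))|$.

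The main obstacle is the bookkeeping of constants: the two-interval estimate must be pushed through tightly enough to land under $2\log_p(18bp/a)$ uniformly over odd $p\ge 3$ and all $1\le a<b$, which forces economy at essentially every step — in particular in the conversion from logarithmic length to integer count, and in the estimate of $R/Q$ in the second case. A secondary point needing care is verifying that, in the second case, the removed set $\{g_2\le Q\}$ lies strictly inside $\{g_1<R\}$ (so one really obtains two one-sided pieces rather than something worse): at an endpoint $s$ of $\{g_1<R\}$ one has $g_1(s)=R$, whence $(b-a)D/s = 2(R-as)$ and therefore $g_2(s) = (8bp-2a)s + 2R > 2R > Q$, placing $s$ outside $\{g_2\le Q\}$ as required.
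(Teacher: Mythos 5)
Your setup is correct and is, at its core, the same argument as the paper's: apply Lemma~\ref{n bound start} with $n_2$ taken to be the maximal element of $N_H(u,v)$, observe that every other element $n$ then satisfies $g_1(n)<R$ and $g_2(n)>Q$, and use convexity of these exponential sums to confine the non-maximal elements to at most two short intervals (your verification that the endpoints of $\{g_1<R\}$ lie outside $\{g_2\le Q\}$ is also correct). The paper, however, never compares $g_1$ and $g_2$ to the absolute quantities $Q$ and $R$: it divides the lower bound at $n$ into the upper bound at $m$ to obtain the scale-free inequality $p^n/c+c/p^n<\tfrac{9bp}{a}\left(p^m/c+c/p^m\right)$ for all pairs of non-maximal elements, with $c=\sqrt{(b-a)d_{\mathbb{Z}^2}(u,v)/(2a)}$, and then pigeonholes on the two monotone branches of $t\mapsto p^t/c+c/p^t$. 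Because $Q$, $R$ and $D$ cancel out of that comparison, there is no case split on $Q$ versus $\min_t g_2(t)$ and no degenerate small-$D$ regime. This is exactly where your version has a genuine gap: disposing of small $D$ via $|N_H(u,v)|\le|B_{\mathbb{Z}^2}(u,O(D))|$ bounds a count of exponents $n$ by a count of lattice points, which is $\Theta((bD/a)^2)$ --- polynomial in $p$ when $D\sim p$ --- and therefore does not establish the stated bound $2\log_p(18bp/a)+1$. It yields uniform boundedness, which suffices for the paper's main theorem but not for Lemma~\ref{n bound} as stated.

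The second problem is the constant accounting, which you flag but which in fact does not close along this route. You pay $+1$ per interval when converting length to an integer count ($+2$ for the two pieces) plus $+1$ for the excluded maximum, so you need the total logarithmic length of the two pieces to be at most $2\log_p(18bp/a)-2=2\log_p(18b/a)$. But with your own endpoint estimates ($s_+\le R/a$, $s_-\ge (b-a)D/(2R)$, $\sigma_+\ge Q/(16bp)$, $\sigma_-\le 2(b-a)D/Q$) the total length is $\log_p\tfrac{4R}{Q}+\log_p\tfrac{16bpR}{aQ}\ge\log_p\tfrac{64bp}{a}$, and $\log_p\tfrac{64bp}{a}\le 2\log_p\tfrac{18b}{a}$ amounts to $64pa\le 324b$, which already fails for $p=11$, $a=1$, $b=2$; the mismatch is structural for large $p$, since your budget $2\log_p(18b/a)$ carries no factor of $p$ while the gap between the roots of $g_1=R$ and $g_2=Q$ does. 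So the method as outlined proves $|N_H(u,v)|\le\log_p(Cbp/a)+3$ --- a valid uniform bound of the right order, but not the stated one. To recover the stated constant you essentially have to restructure into the paper's pairwise comparison, which charges only $+1$ in total (for the maximum) and extracts $2\log_p(18bp/a)$ for everything else directly from the pigeonhole.
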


\begin{proof}
    Let $T$ denote the set $N_H(u, v)$ with its maximum element removed (if one exists). As $|N_H(u, v)\backslash T|\leq 1$, it suffices to prove that $|T|\leq 2\log_p\left(\frac{18bp}{a}\right)$. For the sake of contradiction, we assume otherwise, that $|T| > 2\log_p\left(\frac{18bp}{a}\right)$. 
    
    By Lemma~\ref{n bound start}, for any $n\in T$, we have
    \begin{equation*}
        a p^n + (b - a)\left(\frac{d_{\mathbb{Z}^2}(u, v)}{2p^n} - p\right) < d_{H_{a, b}}(u, v) - a\Delta < 8bp^{n + 1} + \frac{b - a}{p^n}d_{\mathbb{Z}^2}(u, v), 
    \end{equation*}
    where $\Delta$ is the difference of the horizontal coordinates of $u, v$. Therefore, for any $n, m\in T$, we have
    \begin{equation*}
        a p^n + (b - a)\left(\frac{d_{\mathbb{Z}^2}(u, v)}{2p^n} - p\right) < 8bp^{m + 1} + \frac{b - a}{p^m}d_{\mathbb{Z}^2}(u, v). 
    \end{equation*}
    Hence, for any $n, m\in T$, we have
    \begin{equation*}
        a p^n + \frac{b - a}{2p^n}d_{\mathbb{Z}^2}(u, v) < 9bp^{m + 1} + \frac{b - a}{p^m}d_{\mathbb{Z}^2}(u, v)\leq \frac{9bp}{a}\left(a p^m + \frac{b - a}{2p^m}d_{\mathbb{Z}^2}(u, v)\right). 
    \end{equation*}
    In other words, for any $n, m\in T$, we have
    \begin{equation*}
        \frac{p^n}{c} + \frac{c}{p^n} < \frac{9bp}{a}\left(\frac{p^m}{c} + \frac{c}{p^m}\right), 
    \end{equation*}
    where $c = \sqrt{\frac{(b-a)d_{\mathbb{Z}^2}(u, v)}{2a}} > 0$. Since $|T| > 2\log_p\left(\frac{18bp}{a}\right)$, either $T\cap [\log_p c, \infty)$ or $T\cap [0, \log_p c]$ has at least $\log_p\left(\frac{18bp}{a}\right) + 1$ elements. If $|T\cap [\log_p c, \infty)|\geq \log_p\left(\frac{18bp}{a}\right) + 1$, let $n, m\in T\cap [\log_p c, \infty)$ with $n - m \geq \log_p\left(\frac{18bp}{a}\right)$. As $m \geq \log_p c$, we have $\frac{p^m}{c}\geq 1\geq \frac{c}{p^m}$. Therefore, 
    \begin{equation*}
        \frac{p^n}{c} + \frac{c}{p^n} > \frac{p^n}{c} \geq \frac{18bp}{a}\frac{p^m}{c} \geq \frac{18bp}{a}\frac{\frac{p^m}{c} + \frac{c}{p^m}}{2} = \frac{9bp}{a}\left(\frac{p^m}{c} + \frac{c}{p^m}\right), 
    \end{equation*}
    a contradiction. Similarly, if $|T\cap [0, \log_p c]|\geq \log_p\left(\frac{18bp}{a}\right) + 1$, let $n, m\in T\cap [0, \log_p c]$ with $m - n \geq \log_p\left(\frac{18bp}{a}\right)$. As $m \leq \log_p c$, we have $\frac{c}{p^m}\geq 1\geq \frac{p^m}{c}$. Therefore, 
    \begin{equation*}
        \frac{p^n}{c} + \frac{c}{p^n} > \frac{c}{p^n} \geq \frac{18bp}{a}\frac{c}{p^m} \geq \frac{18bp}{a}\frac{\frac{p^m}{c} + \frac{c}{p^m}}{2} = \frac{9bp}{a}\left(\frac{p^m}{c} + \frac{c}{p^m}\right), 
    \end{equation*}
    again, a contradiction. 
\end{proof}

We are now ready to show the uniform boundedness of $S^{(2)}(u, v, k)$ across all pairs of vertices $u, v$ in $H_{a, b}$ and integers $k$. 

\begin{lem} \label{s2}
    For all vertices $u, v$ in $H_{a, b}$ and integers $k$, we have
    \begin{equation*}
        \left|S^{(2)}(u, v, k)\right|\leq 64p(p + 3b/a)\log_p\left(\frac{18bp^2}{a}\right).
    \end{equation*}
\end{lem}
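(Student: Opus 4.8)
The plan is to combine the two preceding lemmas in the obvious way: the decomposition $S^{(2)}(u,v,k) = \bigcup_{n \in \mathbb{Z}^{\geq 0} \cup \{\infty\}} S^{(2,n)}(u,v,k)$ expresses the partial geodesic slice as a union, so I would bound the number of nonempty summands and the size of each. By Lemma~\ref{classification level n}, $S^{(2,n)}(u,v,k) \subseteq S^{(2)}(u,v,k)$ is nonempty (equivalently $\Gamma(u,v,n) \neq \emptyset$) only when $n \in N_H(u,v) \cup N_V(u,v)$, together with the single exceptional value $n = \infty$, for which $S^{(2,\infty)}(u,v,k) \subseteq \{(0,0)\}$ contributes at most one vertex. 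Lemma~\ref{n bound} bounds $|N_H(u,v)|$, and the analogous bound for $|N_V(u,v)|$ follows immediately by the reflective symmetry of the weights of $H_{a,b}$ across $x = y$ (noted after the definition of $H_{a,b}$), which swaps horizontal and vertical $p^{n+1}$-strips; so the number of finite $n$ with $S^{(2,n)}(u,v,k) \neq \emptyset$ is at most $2\left(2\log_p\left(\frac{18bp}{a}\right) + 1\right)$.

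Then I would invoke Lemma~\ref{s2n} to bound each $|S^{(2,n)}(u,v,k)| \leq 16p(p + 3b/a)$, and multiply. The resulting product is at most
\begin{equation*}
    16p(p + 3b/a)\left(2\left(2\log_p\left(\tfrac{18bp}{a}\right) + 1\right) + 1\right) + 1,
\end{equation*}
and the remaining work is just the routine arithmetic to absorb the $+1$ terms and the constant $5$ inside the logarithm to reach the clean stated bound $64p(p + 3b/a)\log_p\left(\frac{18bp^2}{a}\right)$. Concretely, $2(2\log_p(18bp/a) + 1) + 1 = 4\log_p(18bp/a) + 3$, and since $p \geq 3$ and $b > a$ one checks $18bp/a > p$ so $\log_p(18bp/a) > 1$ and hence $4\log_p(18bp/a) + 3 < 4\log_p(18bp^2/a) \cdot (\text{something} \leq 1)$; being slightly careful, $\log_p(18bp^2/a) = \log_p(18bp/a) + 1$, so $4\log_p(18bp/a) + 3 \leq 4\log_p(18bp^2/a)$, giving the bound $64p(p+3b/a)\log_p(18bp^2/a)$ after noting the stray $+1$ from the $n=\infty$ term is dominated.

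I do not expect any genuine obstacle here: this lemma is purely a bookkeeping consolidation of Lemmas~\ref{s2n} and \ref{n bound}. The one point requiring a sentence of care is the treatment of $N_V(u,v)$ — rather than reproving Lemmas~\ref{n bound start} and~\ref{n bound} for the vertical direction, I would explicitly state that applying the reflection $(x,y) \mapsto (y,x)$, which is an automorphism of $H_{a,b}$ interchanging the roles of horizontal and vertical strips and SSPs, transports the bound on $|N_H|$ to an identical bound on $|N_V|$. The only other thing to watch is not to double-count: $N_H(u,v)$ and $N_V(u,v)$ may overlap, but since we are summing over $n$ this overlap only helps (it makes the true count smaller), so taking $|N_H(u,v)| + |N_V(u,v)|$ as an upper bound on the number of relevant finite $n$ is safe. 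Everything else is a one-line combination followed by the constant-chasing indicated above.
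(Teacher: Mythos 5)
Your proposal is correct and follows exactly the paper's own argument: decompose $S^{(2)}(u,v,k)$ into the sets $S^{(2,n)}(u,v,k)$, note via Lemma~\ref{classification level n} that only $n\in N_H(u,v)\cup N_V(u,v)$ (plus $n=\infty$) contribute, bound the count of such $n$ by Lemma~\ref{n bound} together with the $x\leftrightarrow y$ symmetry, bound each piece by Lemma~\ref{s2n}, and absorb the additive constants into $\log_p\left(\frac{18bp^2}{a}\right)$. The constant-chasing you indicate does close correctly, so there is nothing to add.
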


\begin{proof}
    For a non-negative integer $n\notin N_H(u, v)\cup N_V(u, v)$, $\Gamma(u, v, n) = \emptyset$, therefore, by definition, $S^{(2, n)}(u, v, k) = \emptyset$. Hence, $S^{(2)}(u, v, k) = \left(\bigcup_{n\in N_H(u, v)\cup N_V(u, v)}S^{(2, n)}(u, v, k)\right) \cup S^{(2, \infty)}(u, v, k)$. By Lemma~\ref{s2n}, $|S^{(2, n)}(u, v, k)|\leq 16p(p + 3b/a)$ for any non-negative integer $n$. By Lemma~\ref{n bound}, $|N_H(u, v)|\leq 2\log_p\left(\frac{18bp}{a}\right) + 1$, and analogously, $|N_V(u, v)|\leq 2\log_p\left(\frac{18bp}{a}\right) + 1$. Therefore, noting that $S^{(2, \infty)}(u, v, k) \subseteq \{(0, 0)\}$, we have
    \begin{equation*}
        |S^{(2)}(u, v, k)|\leq 16p(p + 3b/a)\cdot 2\left(2\log_p\left(\frac{18bp}{a}\right) + 1\right) + 1\leq 64p(p + 3b/a)\log_p\left(\frac{18bp^2}{a}\right). \qedhere
    \end{equation*}
\end{proof}

We are now ready to prove Proposition~\ref{main reduced prop}, which, along with Proposition~\ref{sec2 reduction prop}, completes the proof of Theorem~\ref{z2 gen ext thm}. 

\begin{proof}[Proof of Proposition~\ref{main reduced prop}]
    We shall prove that the weighted graph $H_{a, b}$ on $\mathbb{Z}^2$ with two positive integer weights $a, b$ has bounded geodesic slices for any choice of an odd integer $p\geq 3$ and positive integers $a < b$. For any pair $u, v$ of vertices in $H_{a, b}$, and any integer $k$, by Lemma~\ref{s1}, we have
    \begin{equation*}
        |S^{(1)}(u, v, k)|\leq 80\frac{b^2p^2}{a^2}, 
    \end{equation*}
    and by symmetry, 
    \begin{equation*}
        |S^{(3)}(u, v, k)|\leq 80\frac{b^2p^2}{a^2}. 
    \end{equation*}
    By Lemma~\ref{s2}, we have
    \begin{equation*}
        |S^{(2)}(u, v, k)|\leq 64p(p + 3b/a)\log_p\left(\frac{18bp^2}{a}\right). 
    \end{equation*}
    Therefore, the size of their union $S(u, v, k) = S^{(1)}(u, v, k)\cup S^{(2)}(u, v, k)\cup S^{(3)}(u, v, k)$ is bounded above by 
    \begin{equation*}
        |S(u, v, k)|\leq 160\frac{b^2p^2}{a^2} + 64p(p + 3b/a)\log_p\left(\frac{18bp^2}{a}\right). 
    \end{equation*}
    Hence $H_{a, b}$ has bounded geodesic slices. 
\end{proof}

\section{The hexagonal and triangular lattices} \label{other Euc}

In this section, we prove Theorem~\ref{general thm}, that all two-dimensional regular tilings are geodesically directable. Following the remarks in the introduction, it suffices to show that the hexagonal lattice $\{6, 3\}$ and the triangular lattice $\{3, 6\}$ are geodesically directable. 

Figure~\ref{fig:hex} shows a bounded extension $Z$ of the hexagonal lattice $\{6, 3\}$. The original lattice is drawn in black with additional edges of the bounded extension drawn in grey. Note that $Z$ is isomorphic to the square lattice $\mathbb{Z}^2 = \{4, 4\}$. By Theorem~\ref{z2 thm}, a further bounded extension $G$ of $Z$ has bounded geodesic slices. As $G$ is also a bounded extension of the hexagonal lattice $\{6, 3\}$, the hexagonal lattice is geodesically directable. 

\begin{figure}
    \centering
    \begin{tikzpicture}[line join=round, line cap=round, thick, scale=1.2]
        
        \def\s{1.2}
        \def\u{0.835}
        \pgfmathsetmacro{\h}{sqrt(3)/2*\s}
        \pgfmathsetmacro{\cx}{1.5*\s}
        \pgfmathsetmacro{\cy}{sqrt(3)*\s}
        
        \tikzset{
          hex/.pic={
            \draw[gray!50] (\s,0) -- (-\s,0);
            \draw
              (\s,0) -- (\s/2,\h) -- (-\s/2,\h) -- (-\s,0) --
              (-\s/2,-\h) -- (\s/2,-\h) -- cycle;
            \foreach \P in {(\s,0),(\s/2,\h),(-\s/2,\h),(-\s,0),(-\s/2,-\h),(\s/2,-\h)}{
              \fill \P circle[radius=0.05];
            }
          }
        }
        
        \tikzset{
          hex1/.pic={
            \draw[gray!50] (\s,0) -- (-\s,0);
            \draw
                (\s/2,-\h) -- (\s,0);
            \foreach \P in {(\s/2,-\h), (\s,0), (-\s,0)}{
              \fill \P circle[radius=0.05];
            }
          }
        }
        
        \tikzset{
          hex2/.pic={
            \draw[gray!50] (\s,0) -- (-\s,0);
            \draw
                (-\s/2,\h) -- (-\s,0);
            \foreach \P in {(-\s/2,\h), (-\s,0), (\s,0)}{
              \fill \P circle[radius=0.05];
            }
          }
        }
        
        \tikzset{
          hex3/.pic={
            \draw[gray!50] (\s,0) -- (-\s,0);
            \draw
              (-\s,0) -- (-\s/2,-\h);
            \foreach \P in {(-\s,0), (-\s/2,-\h), (\s,0)}{
              \fill \P circle[radius=0.05];
            }
          }
        }
        
        \tikzset{
          hex4/.pic={
            \draw[gray!50] (\s,0) -- (-\s,0);
            \draw
              (\s,0) -- (\s/2,\h);
            \foreach \P in {(\s,0), (\s/2,\h), (-\s,0)}{
              \fill \P circle[radius=0.05];
            }
          }
        }
        
        \newcommand{\hexqr}[2]{
          \pgfmathsetmacro{\xx}{\cx*\u*(#1)}
          \pgfmathsetmacro{\yy}{\cy*\u*((#2)+0.5*(#1))}
          \path (\xx,\yy) pic {hex};
        }
        
        \newcommand{\hexqra}[2]{
          \pgfmathsetmacro{\xx}{\cx*\u*(#1)}
          \pgfmathsetmacro{\yy}{\cy*\u*((#2)+0.5*(#1))}
          \path (\xx,\yy) pic {hex1};
        }
        
        \newcommand{\hexqrb}[2]{
          \pgfmathsetmacro{\xx}{\cx*\u*(#1)}
          \pgfmathsetmacro{\yy}{\cy*\u*((#2)+0.5*(#1))}
          \path (\xx,\yy) pic {hex2};
        }
        
        \newcommand{\hexqrc}[2]{
          \pgfmathsetmacro{\xx}{\cx*\u*(#1)}
          \pgfmathsetmacro{\yy}{\cy*\u*((#2)+0.5*(#1))}
          \path (\xx,\yy) pic {hex3};
        }
        
        \newcommand{\hexqrd}[2]{
          \pgfmathsetmacro{\xx}{\cx*\u*(#1)}
          \pgfmathsetmacro{\yy}{\cy*\u*((#2)+0.5*(#1))}
          \path (\xx,\yy) pic {hex4};
        }
        
        \hexqr{0}{0}
        \hexqr{1}{0}
        \hexqr{-1}{0}
        \hexqr{0}{1}
        \hexqr{0}{-1}
        \hexqr{1}{-1}
        \hexqr{-1}{1}
        \hexqra{1}{1}
        \hexqrb{-1}{-1}
        \hexqrc{-1}{2}
        \hexqrd{1}{-2}    
    \end{tikzpicture}
    \caption{A bounded extension of the hexagonal lattice which is isomorphic to $\mathbb{Z}^2$}
    \label{fig:hex}
\end{figure}

Figure~\ref{fig:tri} shows that the triangular lattice $\{3, 6\}$ is isomorphic to a bounded extension of the square lattice $\mathbb{Z}^2 = \{4, 4\}$. Therefore, by Theorem~\ref{z2 gen ext thm}, the triangular lattice $\{3, 6\}$ is geodesically directable. 

\begin{figure}
    \centering
    \begin{tikzpicture}[line join=round, line cap=round, thick, scale=1.2]
        \def\s{0.8}
        \def\u{0.835}
        \tikzset{
          square/.pic={
            \draw
              (-\s,-\s) -- (\s,-\s) -- (\s,\s) -- (-\s,\s) -- cycle;
            \draw
              (-\s,-\s) -- (\s,\s);
            \foreach \P in {(-\s,-\s),(\s,-\s),(\s,\s),(-\s,\s)}{
                \fill \P circle[radius=0.05];
            }
          }
        }
        
        \newcommand{\sqij}[2]{
          \path (#1*2*\u*\s, #2*2*\u*\s) pic {square};
        }
        
        \foreach \i in {-1,0,1,2}{
          \foreach \j in {-1,0,1,2}{
            \sqij{\i}{\j}
          }
        }
    \end{tikzpicture}
    \caption{The triangular lattice as a bounded extension of $\mathbb{Z}^2$}
    \label{fig:tri}
\end{figure}

\section{Concluding Remarks}

The main object of this paper has been the construction of a bounded extension $G$ of $\mathbb{Z}^2$ with bounded geodesic slices (Theorem~\ref{z2 thm}). One could consider a strengthening of the bounded geodesic slices property, in which the number of geodesics in $G$ between any pair of endpoints is uniformly bounded. 

\begin{df}
    Let a graph $G$ have \emph{bounded geodesic multiplicity} if the number of geodesics between a pair of vertices $u, v$ is uniformly bounded. 
\end{df}

A natural question, therefore, is whether there exists a bounded extension $G$ of $\mathbb{Z}^2$ with bounded geodesic multiplicity. 

\begin{ques} \label{bounded geodesic multiplicity}
    Does there exist a bounded extension of $\mathbb{Z}^2$ with bounded geodesic multiplicity? 
\end{ques}

Note that in $\mathbb{Z}^2$ itself, the number of geodesics between two vertices $u, v$ can grow exponentially in the distance $d_{\mathbb{Z}^2}(u, v)$. For example, for any non-negative integer $x$, there are $\binom{2x}{x} = 2^{\Omega(x)}$ distinct geodesics between $(0, 0)$ and $(x, x)$. \emph{A priori}, therefore, one may expect a negative answer to Question~\ref{bounded geodesic multiplicity}. However, the affirmative resolution of the analogous question for bounded geodesic slices (Theorem~\ref{z2 thm}) shows that passing to a bounded extension may drastically alter the geodesic structure. 

\textbf{Acknowledgement.} The author is grateful to Oliver Breach for communicating this problem and its background. The author would also like to thank Timothy Gowers for helpful comments and suggestions. 

\bibliographystyle{abbrv}
\bibliography{mybib}

\appendix
\section{Explicit construction}

For the sake of practical implementations in quantum circuits, we include here an explicit bounded extension $G$ of $\mathbb{Z}^2$ with bounded geodesic slices. This construction results from applying a tightened version of the reduction argument in Section~\ref{intro and weight} to the weighted graph constructed in Section~\ref{construction sec} (with parameters $p = 3, a = 1, b = 2$). 

We form the graph $G$ from $\mathbb{Z}^2$ by adding vertical and horizontal edges of lengths two and four. For a non-zero integer $n$, let $\nu_3(n)$ denote the largest exponent $\alpha$ for which $3^\alpha \mid n$. For $x, y\in \mathbb{Z}$, if $x\neq 0$ and $3^{\nu_3(x)}\mid y - \frac{3^{\nu_3(x)} - 1}{2}$, we add the vertical edges $(4x, 4y)(4x, 4y + 2)$ and $(4x, 4y + 2)(4x, 4y + 4)$; otherwise, we add the vertical edge $(4x, 4y)(4x, 4y + 4)$. Similarly, if $y\neq 0$ and $3^{\nu_3(y)}\mid x - \frac{3^{\nu_3(y)} - 1}{2}$, we add the horizontal edges $(4x, 4y)(4x + 2, 4y)$ and $(4x + 2, 4y)(4x + 4, 4y)$; otherwise, we add the horizontal edge $(4x, 4y)(4x + 4, 4y)$. 

Figure~\ref{fig:def3} illustrates the portion of $G$ over $[0, 36]\times [0, 36]$. 

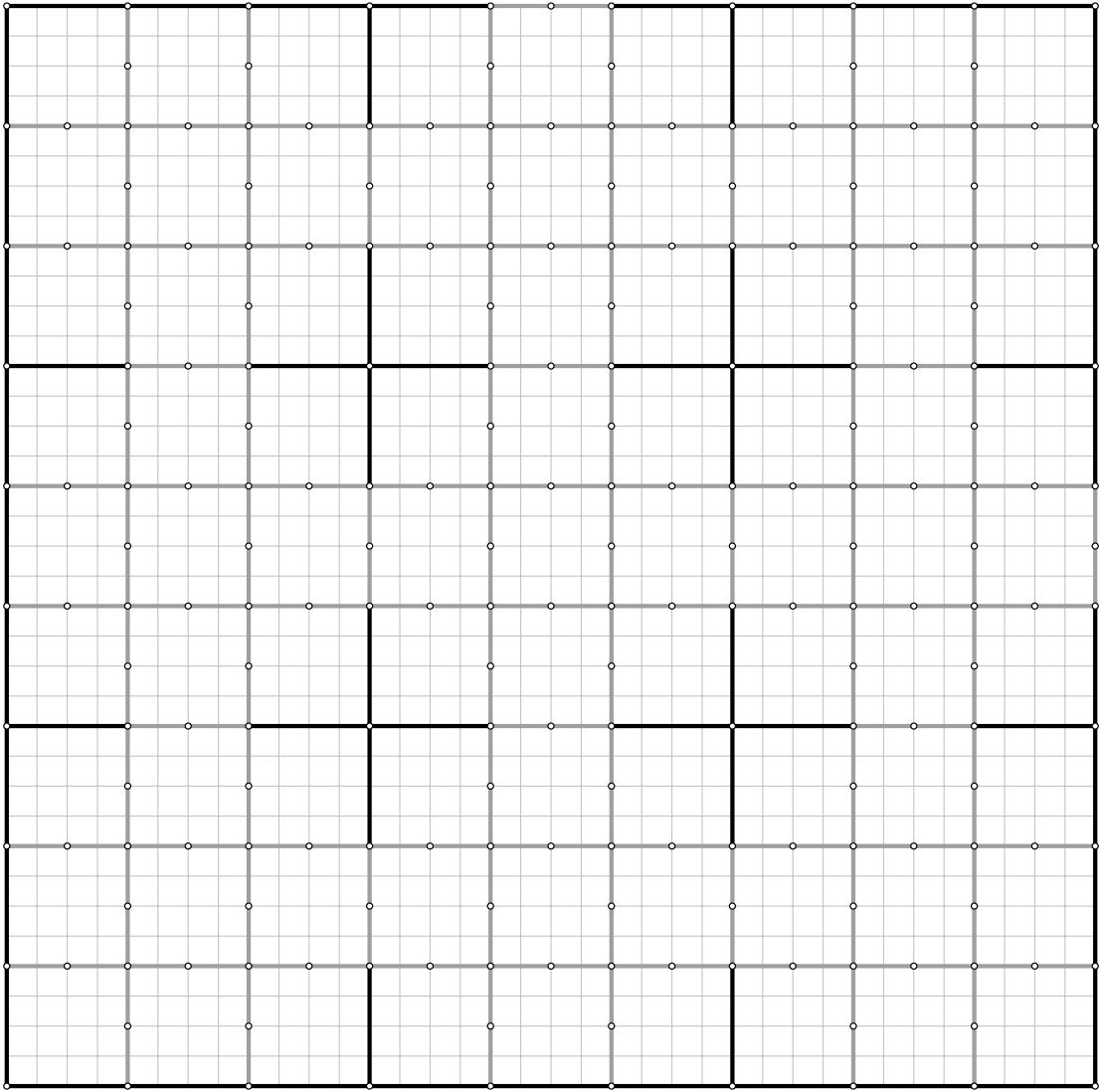
\begin{figure}
    \centering
    \begin{tikzpicture}[scale = 0.4]
        \draw[step=1cm,gray!50,very thin] (0,0) grid (36, 36);
        
        \foreach \x in {1, 2, 4, 5, 7, 8}{
            \foreach \y in {0, 1, 2, 3, 4, 5, 6, 7, 8}{
                \draw[gray!75, ultra thick] (4*\x, 4*\y) -- (4*\x, 4*\y + 2) -- (4*\x, 4*\y + 4);
                \fill (4*\x, 4*\y + 2) circle[radius=0.12];
                \fill[white] (4*\x, 4*\y + 2) circle[radius=0.08];
                \draw[gray!75, ultra thick] (4*\y, 4*\x) -- (4*\y + 2, 4*\x) -- (4*\y + 4, 4*\x);
                \fill (4*\y + 2, 4*\x) circle[radius=0.12];
                \fill[white] (4*\y + 2, 4*\x) circle[radius=0.08];
            }
        }

        \foreach \x in {0, 3, 6, 9}{
            \foreach \y in {0, 2, 3, 5, 6, 8}{
                \draw[ultra thick] (4*\x, 4*\y) -- (4*\x, 4*\y + 4);
                \draw[ultra thick] (4*\y, 4*\x) -- (4*\y + 4, 4*\x);
            }
        }

        \foreach \x in {3, 6}{
            \foreach \y in {1, 4, 7}{
                \draw[gray!75, ultra thick] (4*\x, 4*\y) -- (4*\x, 4*\y + 2) -- (4*\x, 4*\y + 4);
                \fill (4*\x, 4*\y + 2) circle[radius=0.12];
                \fill[white] (4*\x, 4*\y + 2) circle[radius=0.08];
                \draw[gray!75, ultra thick] (4*\y, 4*\x) -- (4*\y + 2, 4*\x) -- (4*\y + 4, 4*\x);
                \fill (4*\y + 2, 4*\x) circle[radius=0.12];
                \fill[white] (4*\y + 2, 4*\x) circle[radius=0.08];
            }
        }

        \foreach \x in {1, 4, 7}{
            \draw[ultra thick] (4*\x, 0) -- (4*\x + 4, 0);
            \draw[ultra thick] (0, 4*\x) -- (0, 4*\x + 4);
        }

        \foreach \x in {1, 7}{
            \draw[ultra thick] (4*\x, 36) -- (4*\x + 4, 36);
            \draw[ultra thick] (36, 4*\x) -- (36, 4*\x + 4);
        }

        \draw[gray!75, ultra thick] (16, 36) -- (18, 36) -- (20, 36);
        \fill (18, 36) circle[radius=0.12];   
        \fill[white] (18, 36) circle[radius=0.08];
        \draw[gray!75, ultra thick] (36, 16) -- (36, 18) -- (36, 20);
        \fill (36, 18) circle[radius=0.12];    
        \fill[white] (36, 18) circle[radius=0.08];

        \foreach \x in {0, 1, 2, 3, 4, 5, 6, 7, 8, 9}{
            \foreach \y in {0, 1, 2, 3, 4, 5, 6, 7, 8, 9}{
                \fill (4*\x, 4*\y) circle[radius=0.12];
                \fill[white] (4*\x, 4*\y) circle[radius=0.08];
            }
        }
    \end{tikzpicture}
    \caption{Illustration of the bounded extension $G$ of $\mathbb{Z}^2$, drawn over $[0, 36]\times [0, 36]$. Edges of the base lattice $\mathbb{Z}^2$ are drawn in light grey. New edges (edges in $G$ but not $\mathbb{Z}^2$, \emph{i.e.}~quantum gates) with length two are drawn in dark grey, and new edges with length four are drawn in black. Endpoints of new edges are marked.}
    \label{fig:def3}
\end{figure}
\end{document}